\newcommand{\commentout}[1]{~\\***CUT[\\ #1 ~\\]******ENDCUT\\}
\newcommand{\ATL}{ATL}
\newcommand{\ATEL}{ATEL}
\newcommand{\LTL}{\mbox{LTL}}
\newcommand{\ECTL}{\mbox{ECTL}}
\newcommand{\CTLK}{\mbox{CTLK}}
\newcommand{\CTL}{\mbox{CTL}}
\newcommand{\CTLsK}{\mbox{CTL$^*$K}}
\newcommand{\CTLs}{\mbox{CTL$^*$}}
\newcommand{\ESL}{\mbox{ESL}}
\newcommand{\trans}{\rightarrow}
\newcommand{\R}{{\cal R}}
\newcommand{\C}{{\cal C}}
\newcommand{\K}{{\cal K}}
\newcommand{\until}{U}
\newcommand{\ptrans}[1]{\stackrel{#1}{\longrightarrow}}
\newcommand{\Prop}{Prop}
\newcommand{\SVar}{V\!ar}
\newcommand{\nat}{\mathbb{N}}
\newcommand{\powerset}[1]{{\cal P}(#1) }
\newcommand{\dimp}{\Leftrightarrow}
\newcommand{\ets}{{\cal E}}
 \newtheorem{propn}{Proposition}
  \newtheorem{cor}{Corollary}
      \newtheorem{theorem}{Theorem}
   \newtheorem{lemma}{Lemma}
\newtheoremstyle{examplesty}         % Name
            {}                   % Above skip 
            {}                   % Below skip
            {\upshape}           % Body font
            {}                   % Indent
            {\bfseries} % Head font
            {}                   % Head body punct
            {1em}                % Space after head
            {}                   % Heading
\theoremstyle{examplesty}
\newtheorem{example}{Example}
\newcommand{\strat}{\sigma}
\newcommand{\Ags}{\mathit{Ags}} 
\newcommand{\Agse}{\mathit{Ags}^+} 
\newcommand{\I}{\mathcal{I}} 
\newcommand{\atlop}[1]{\langle\hspace{-2pt}\langle #1 \rangle \hspace{-2pt}\rangle } 
\newcommand{\existsg}[1]{\exists #1.} 
\newcommand{\forallg}[1]{\forall #1.} 
\newcommand{\lid}[2]{\mathtt{e}_{#1}(#2)}
\newcommand{\atluop}[1]{[\hspace{-1pt}[ #1]\hspace{-1pt}]} 
\newcommand{\rimp}{\Rightarrow}
\newcommand{\skipp}{\mathit{skip}}  
\newcommand{\exploited}{\mathtt{exploited}}  
\newcommand{\cc}{\mathtt{cc}}  
\newcommand{\CCN}{\mathtt{CCN}}  
\newcommand{\done}{\mathtt{done}}
\newcommand{\Env}{E} 
\newcommand{\restrict}{\upharpoonright}
\newcommand{\Var}{V}
\newcommand{\Acts}{\mathit{Acts}}
\newcommand{\be}{\begin{enumerate}} 
\newcommand{\ee}{\end{enumerate}}
\newcommand{\nxt}{\Circle} 
\newcommand{\always}{\Box} 
\newcommand{\sometimes}{\Diamond}
\newcommand{\G}{{\cal G}} 
\newcommand{\detstrat}{\mathit{det}}
 \newcommand{\Iunif}{\I^\unif}
\newcommand{\sgy}{\alpha} 
\newcommand{\unif}{\mathit{unif}} 
\newcommand{\Strat}{\Sigma} 
  \newcommand{\Strats}{\Sigma}
    \newcommand{\Cont}{\Gamma}
\newcommand{\ETL}{{\rm ETLK}}
\newcommand{\ksubf}{{\rm maxk}}
\newcommand{\KSAT}{{\rm KSAT}}
\newcommand{\ASAT}{{\rm ASAT}}
\newcommand{\Crypt}{Cr}
\newcommand{\Low}{Lo}
\newcommand{\High}{Hi}
\title{An Epistemic Strategy Logic\footnote{This paper combines results 
from \cite{HM-SR14} {\em An epistemic strategy logic}, X. Huang and R. van der Meyden, 2nd International Workshop on Strategic Reasoning , April 2014, Grenoble, France, 
and \cite{HvdM2014} {\em A temporal logic of strategic knowledge}, X. Huang and R. van der Meyden, Int. Conf. on Principles of Knowledge Representation and Reasoning, Jul 2014, Vienna. 
It extends these works by including full proofs for all results. 
}}
\author[1]{Xiaowei Huang\thanks{Email: \href{mailto: xiaowei.huang@liverpool.ac.uk}{xiaowei.huang@liverpool.ac.uk}. Most of Huang's work was performed when he was at 
%ron9: the university keeps changing its preferred name! 
%the University of New South Wales, Australia.}}
UNSW Sydney, Australia}}
\author[2]{Ron van der Meyden\thanks{Email: \href{mailto: meyden@cse.unsw.edu.au}{meyden@cse.unsw.edu.au}}}
\affil[1]{The University of Liverpool, UK}
\affil[2]{UNSW Sydney, Australia}
\date{
%\institute{
%The University of New South Wales
%\email{meyden@cse.unsw.edu.au}
}
\begin{document}

\maketitle

\begin{abstract}
This paper presents an extension of temporal epistemic logic with operators that quantify over agent  strategies. 
Unlike previous work on alternating temporal epistemic logic, the semantics works with systems whose 
states explicitly encode the strategy being used by each of the agents.  
This provides a natural way to express  what agents would know were they to be
aware of 
%ron4: + 
some of 
the strategies  being used by other agents. A number of examples 
that rely upon the ability to express an agent's knowledge about the strategies 
being used by other agents are presented 
to motivate the framework, including reasoning about game theoretic equilibria, 
knowledge-based programs, and information theoretic computer security policies.  Relationships to 
several variants of alternating temporal epistemic logic are 
discussed. The computational complexity of model checking the logic  and several of its 
fragments are also characterized.
\end{abstract}

\section{Introduction}

In distributed and multi-agent systems, agents typically  have a choice of actions to perform, and 
have individual and possibly conflicting goals. This leads agents to act strategically, attempting to select their actions over time so as to guarantee achievement of their goals even 
in the face of other agents' adversarial behaviour. The choice of actions generally needs to be 
%ron6: done 
made 
on the basis of {\em imperfect} information concerning the state of the 
system. 

These concerns have motivated the development of a variety of modal logics that aim to capture aspects of such settings. 
One of the earliest, dating from the 1980's, was multi-agent \emph{epistemic logic} \cite{HM90,RamParikh84}, which 
%ron9: introduces 
introduced
modal operators that deal with imperfect information 
by providing a way to state what  agents {\em know}. Combining such constructs with temporal logic  constructs \cite{Pnueli1977}
gives \emph{temporal epistemic logics}, which support reasoning about how agents' knowledge changes over time. 
Temporal-epistemic logic is an area about which a significant amount is now understood \cite{FHMVbook}. 

Logics dealing with  reasoning about strategies, which started to be developed in the same period \cite{Parikh83}, 
had a slower initial start, but have in recent years become the focus of intense study \cite{Pauly2002,Horty2001,ATLJACM}. 
 {\em Alternating temporal logic} (ATL) \cite{ATLJACM}, which generalizes branching-time 
temporal logic to encompass reasoning about the temporal effects of strategic choices by one group of agents against
all possible responses by their adversaries,  has become a popular basis for work in this area. 

One of the ways in which recent work has extended ATL is to add epistemic operators, yielding an 
{\em alternating temporal 
%xiaowei: episitemic 
epistemic logic}, 
%ron9: 
e.g., ATEL \cite{ATEL}. 
Many subtle issues  arise concerning  what agents know in  settings 
where multiple agents act strategically. In the process of understanding these issues, there has been a proliferation 
of epistemic extensions of ATL
%ron4: clumsy, cut 
%of ATL extensions. In particular, for dealing with epistemic reasoning in strategic settings, there are multiple proposals, 
%with different semantic and syntactic bases
%, for how to capture reasoning about the availability to groups of agents of strategies for achieving particular goals 
\cite{Jonker2003,
%ron10: Schobbens2004,
vOJ2005,Jamroga2003,JA07}.  
Some of the modal operators introduced in this literature are complex, interweaving ideas about the 
knowledge of a group of agents, the strategies available to them,  the effect of playing these strategies 
against strategies available to agents not in the group, 
and the knowledge that other groups of agents may have about these effects. 

%ron4: + to clarify/strengthen claim about contributions 
Our contribution in this paper is to develop a logic that extends the expressive power
of previous work on logics for knowledge and strategies, while at the same time 
simplifying the syntactic basis by 
%ron4: One of our objectives in the present paper is to identify 
identifying a 
%ron10: minimal 
small 
set of primitives 
that can be composed to represent the more complex 
%notions involving 
constructs for reasoning about 
strategies and knowledge
%ron4: that are useful for applications. 
from prior literature.  
%ron6: changed order of next two sentences 
We present examples to show that the logic is useful for a range of
applications, including expressing notions of information
flow security (such as strategic notions of noninterference
and erasure policies), reasoning about implementations of knowledge-based programs, 
and reasoning about game theoretic equilibria. 
We also conduct a detailed analysis of the complexity of model checking a number of 
fragments of the logic. 
%ron9: 
Our semantic framework is able to model a range of semantics for knowledge and strategies 
including a ``perfect recall'' interpretation, but since we are interested in model checking complexity
results at the lower end of the complexity spectrum we concentrate on an ``imperfect recall'' or 
``observational'' semantics of knowledge.  (We note that model checking just ATL, even without 
knowledge operators, under an imperfect information and perfect recall semantics is already 
undecidable \cite{DimaATLpr}.)

%ron4: Our approach to this begins by treating 
%xiaowei: agent strategies
At the semantic level, the key way in which our logic extends prior work on alternating temporal epistemic logic is by treating 
agents' strategies as first class citizens in the semantics, 
represented as components of the global state of the system at any
moment of time
%ron4: 
in a run of the system.  
This is in contrast to 
%ron4: many works 
most prior work 
in the area, where 
strategies are used to generate runs of a system, but the runs
themselves contain no explicit information about the specific strategies used
by the players to produce them. 
%ron4: +
Our approach provides a referent for the notion ``the strategy being used by player $i$ '', 
which cannot be expressed in most prior works on alternating temporal epistemic logic. 
%ron6: moved to related work section 
%Semantics that explicitly encode strategies in runs  have been used 
%previously in the literature 
%%ron4: + 
%on knowledge in information flow security
%\cite{HalpernOneill}; what is novel in our approach is
%to develop a logic that enables explicit reference  to these strategies.

%ron9: In particular, by encoding strategies as components of 
%%ron4: the state, we may refer to the strategy in use by a specific agent. 
%the state in the semantics, the syntax of our logic may refer to the strategy in use by a specific agent. 
%We introduce 
We reflect this additional referent at the syntactic level by 
introducing a syntactic notation $\sigma(i)$, that refers to the strategy of agent $i$. 
%ron4: and 
%ron9: + 
Since the strategy of agent $i$ is modelled semantically as a component of the global state, 
just like the local state of agent $i$, 
%+ 
we allow this construct to be used in the same contexts where the agent name $i$ can be used
%ron4:
--- in particular, in operators for knowledge (including distributed and common knowledge). 
%ron9: 
An example of what can be expressed with this extension is $D_{\{i,\strat(i)\}} \phi$, which says
that the truth of $\phi$ in all possible futures can be deduced from knowledge of agent $i$'s local state plus the  
strategy being applied by agent $i$. Intuitively, the construct $D_{\{i,\strat(i)\}} $ captures what agent $i$ knows
when it takes into account the strategy it is running.  

We show that 
%ron6: even this simple 
this 
extension of temporal epistemic logic 
%ron6: already 
gives a logical approach with broad applicability. 
In particular, 
%ron9: 
as we show in Section~\ref{sec:atltrans}, temporal epistemic logic extended with the
indices $\strat(i)$ can express alternating temporal logic constructs 
(both revocable and irrevocable). 
%ron9: it can 
The extension can also 
express many of the subtly different notions that have been 
%ron6: the subject of proposals for 
proposed in the literature on 
alternating temporal epistemic logics. We demonstrate
this 
%ron9: adding section references to this para
(in Section~\ref{sec:atelvarn})
by results that show how 
%ron9: 
a number of 
such logics can be translated into our setting. 
We also present a number of other 
%ron9: examples 
applications 
including
game theoretic  solution concepts (Section~\ref{sec:games}), 
issues of concern in computer security (Section~\ref{sec:erasure}), 
and  reasoning about 
possible implementations of knowledge-based programs (Section~\ref{sec:KBP}). 

In some 
%ron4: cases, 
applications, 
however, some richer expressiveness is required. 
%ron4: to capture notions that arise in our applications 
%in settings where agents reason about their common knowledge. 
%ron4: 
One such application, discussed in Section~\ref{sec:atelvarn},
%ron9: is reasoning about knowledge-based programs 
concerns expressing an operator, combining common knowledge and strategic concerns, 
from an extended alternating temporal epistemic logic of Jamroga and van der Hoek \cite{JvdH2004}.
We address this by 
adding 
%xiaowei: the 
to the logic constructs that can be used to express  quantification over strategies.
This leads to a logic which, like~{\em strategy logic} \cite{CHP10,MogaveroMV10}, supports explicit naming and quantification over strategies.
%ron4: However 
Technically, 
we achieve this in a slightly more general way: we first generalize temporal epistemic logic to include operators 
%ron9: for quantification over global states and reference to their components. 
$\exists x$ for quantification over global states $x$, as well as statements $\lid{i}{x}$ which say that 
component $i$ in the current global state is the same as component $i$ in the global state denoted by $x$. 
%ron4: + 
Even before the introduction of strategic concerns, this gives a novel extension of temporal epistemic logic 
in the flavour of \emph{hybrid logic} \cite{BS98}. 
%ron9: 
(As we show in Section~\ref{sec:etl} this extension enables the expression of 
security notions such as \emph{nondeducibility} \cite{sutherland_86} that cannot be naturally expressed in 
standard temporal epistemic logics.) 
We then apply this generalization to 
a system that includes strategies encoded in the global states and references these using the ``strategic'' 
%ron4: agents of  \cite{HvdM2014} .
indices $\sigma(i)$. 
The resulting 
% ron9: framework is able to deal with  our applications  when these require agents  to reason about common knowledge in strategic settings. 
logic can express that agent $i$ knows what strategy agent $j$ is using, by means of the formula 
$$\exists x(\lid{\strat(j)}{x} \land K_i \lid{\strat(j)}{x})$$ 
in which the first occurrence of $\lid{\strat(j)}{x}$ binds $x$ to a global state in which the strategy of agent  $j$ is the same  
as at the current state, and the remainder of the formula states that every global state considered possible by agent $i$ has
the same strategy for agent $j$. (This cannot be expressed in most alternating temporal epistemic logics, e.g., ATEL \cite{ATEL}, 
since their  semantics fails to encode the strategy being run by an agent in the locus of evaluation of formulas.)
The framework 
is able to express the above-mentioned operator from \cite{JvdH2004}, as well as  notions
of information flow security that quantify over agent strategies, such as \emph{nondeducibility on strategies} \cite{WJ90}, 
which we discuss in Section~\ref{sec:nonded}.  

%ron4: + 
The main theoretical contribution of the paper is a set of results on the complexity of model checking the 
resulting logic, and its fragments. 
%ron4: While adding quantification adds desirable expressiveness to the logic, we show that it does come at a 
%cost of complexity. We study this issue from the perspective of model checking. 
We consider several dimensions: does the logic have 
quantifiers, and what is the temporal basis for the logic: branching-time (CTL) or 
linear time (LTL). The richest logics in our spectrum turn out to have 
EXSPACE-complete model checking problems. 
%but 
However, we identify a 
number of special cases where model checking is in PSPACE, 
%ron4: 
i.e., no more than the complexity of model checking the temporal logic LTL. 
%ron9: 
One is the fragment where we allow the constructs $\exists x$ and $\lid{i}{x}$, but 
restrict the temporal operators to be those of the branching-time logic $\CTL$. 
Another is the fragment in which we do not allow $\exists x$ and $\lid{i}{x}$, 
but allow strategic indices $\strat(i)$ in knowledge operators and 
take the temporal operators from the richer branching-time logic $\CTLs$, which extends 
the linear time logic $\LTL$.

The structure of the paper is as follows. 
In Section~\ref{sec:etl}, we first develop an extension of temporal epistemic logic that adds the
ability to quantify over global states and refer to global state components. 
We then present a semantic model 
%(also standard) 
for the environments in which 
agents choose their actions. Building on this model, we show how 
to construct a model  for temporal epistemic logic 
called {\em strategy space} 
%ron4: that builds 
in which runs build 
in information 
about the strategy being used by each of the agents. 
We then define a spectrum of logics defined over the resulting semantics. 
These logics are obtained as fragments of the extended temporal epistemic logic, interpreted in 
 strategy space. 
 %ron9:  just cut this part, more detail above now  
% One point on the spectrum adds new ``strategy" names $\strat(i)$ 
% that 
%refer to the strategy being played by an agent into an otherwise
%standard temporal epistemic logic. The  richest point on the spectrum
%has the capacity to quantify over 
%%xiaowei: agent strategies. 
%%ron9: 
%global states, which include 
%agents' strategies. 
%The following sections 
Section~\ref{sec:apps} deals with applications of the resulting logics. 
%We show successively that they
In particular, we show that the logics can express reasoning about 
implementations of knowledge-based programs, 
many notions that have been proposed in the area of 
alternating temporal epistemic logic, game theoretic solution 
concepts, and problems from computer security. 
Next, in Section~\ref{sec:mc}, we provide results on the complexity of the model checking problem for the 
%xiaowei: various fragment 
various fragments of the logic, identifying fragments with lower complexity than the general problem.  
%ron4: We take care in the applications to use the fragments with 
%lower model checking complexity whenever possible. 
In Section~\ref{sec:concl}, we conclude with a discussion of related literature.

\section{An extended temporal epistemic logic}\label{sec:etl}

%ron3: moved this here 
The usual {\em interpreted systems} semantics for temporal epistemic logic \cite{FHMVbook}
deals with runs,  in which each moment of time is associated with a global state that 
is comprised of a local state for each agent in the system. 
We begin by defining the syntax and semantics of an extension of temporal epistemic logic that adds the
ability to quantify over global states and refer to global state components. 
This syntax and semantics will be instantiated in what follows by taking some of the global state 
components to be the strategies being used by agents. 

%ron3: moved up 
%The usual {\em interpreted systems} semantics for temporal epistemic logic \cite{FHMVbook}
%deals with runs,  in which each moment of time is associated with a global state that 
%is comprised of a local state for each agent in the system. 
To quantify over global states, we extend temporal epistemic logic with a set of variables $\SVar$, 
%ron6: avoid dot  here 
%an operator $\existsg{x}$ 
a quantifier $\exists x$ 
and a construct $\lid{i}{x}$, 
where $x$ is a variable. The formula  $\existsg{x}\phi$ says, intuitively, that there exists in the system a global state $x$ such that $\phi$ (a formula that 
may contain uses of the variable $x$) holds
at the current point. The formula $\lid{i}{x}$ 
%ron3: says that  agent $i$ has the same local state at the current point and at the global state $x$. 
asserts the equality of the local states of agent $i$ at the current point and in the global state $x$. 

Let $\Prop$ be a set of atomic propositions and let $\Ags$ be a 
%ron3: 
finite 
set of agent 
%ron4: This removes a small error and saves some notation ... 
names, excluding the special name $e$, which we use to designate the environment in which the agents operate. 
We write $\Agse$ for the set $\{e\} \cup \Ags$. 
%Formally, the 
The 
language $\ETL(\Ags,  \Prop,\SVar)$
%ron6: 
(or just $\ETL$ when the parameters are obvious) 
has syntax given by the grammar: 
$$\phi \equiv p ~|~\neg \phi~|~\phi_1 \lor \phi_2~|
~A\phi ~|~\nxt\phi ~|~\phi_1 \until \phi_2 ~
|~ \existsg{x}\phi~ |~ \lid{i}{x}~ | ~D_G\phi~|~C_G\phi$$ 
where $p \in \Prop$, $x\in \SVar$, 
%ron4: later we had $\ETL(\Ags\cup\{e}...) giving two copies of the environment component! 
%$i\in \Ags$, and $G\subseteq \Ags$. 
$i\in \Agse$, and $G\subseteq \Agse$. 
The construct $D_G\phi$ expresses that agents in $G$ have  distributed knowledge of $\phi$, i.e., could deduce $\phi$ if they pooled their information, 
and $C_G\phi$ says that $\phi$ is common knowledge to group $G$. 
The temporal formulas $\nxt\phi$, $\phi_1 \until \phi_2$, $A\phi$ have the same intuitive meanings as in the temporal logic $CTL^*$ \cite{EH1986}, i.e., 
$\nxt \phi$ says that $\phi$ holds at the next moment of time, $\phi_1 \until \phi_2$ says that $\phi_1$ holds until $\phi_2$ does,  and
$A\phi$ says that $\phi$ holds in all possible evolutions from the present situation. 

Other operators can be defined in the usual way, e.g., $\phi_1\land \phi_2 =  \neg (\neg \phi_1\lor \neg \phi_2)$, $\sometimes\phi = (true\until \phi) $, 
%ron4:+ intuitions 
which says that $\phi$ holds eventually, 
 $\always\phi = \neg \sometimes \neg \phi$, 
 which says that $\phi$ always holds, 
 %ron8: since we use this later!
 %xiaowei8: add  
 $E\phi=\neg A\neg \phi$, which says that $\phi$ holds on some path from the current point, 
 etc. 
The universal form $\forallg{x}\phi = \neg \existsg{x} \neg \phi$ expresses that $\phi$ holds
for all global states  $x$
%ron4: clarify 
that occur in the system.  
For an agent 
%ron4: $i \in \Ags$, 
$i \in \Agse$, 
we  write $K_i \phi$ for $D_{\{i\}} \phi$ ---  
this expresses that  agent $i$ knows the fact $\phi$. 
The notion of everyone in group $G$ knowing $\phi$ can then 
be expressed as $E_G\phi = \bigwedge_{i\in G} K_i \phi$. 
We write  $\lid{G}{x}$ for  $\bigwedge_{i\in G} \lid{i}{x}$. 
This says that at the current point, 
the agents in $G$  have the same local state as they do at 
the global state named by variable $x$.

We will be interested in a fragment of the logic that restricts the occurrence of the temporal operators
to some simple patterns, in the style of the branching-time temporal logic CTL \cite{CES1986}. 
We write $\ECTL(\Ags,  \Prop,\SVar)$ 
%ron6: 
(or just $\ECTL$ when the parameters are obvious)
for 
the fragment of 
%ron3: + for alignment 
the language
$\ETL(\Ags,  \Prop,\SVar)$ in which the temporal operators occur only in the particular forms
%xiaowei8: replace \neg A\neg with E
$A\nxt \phi$, $E\nxt \phi$, $A\phi_1 \until \phi_2$, and $E \phi_1 \until \phi_2$. 
%ron3: + 
In the context of temporal logic, these restrictions reduce the complexity of model checking 
from PSPACE to PTIME \cite{CES1986}. It is therefore interesting to study the impact on complexity of a similar restriction in 
the context of our additional operators.

The semantics of $\ETL(\Ags,  \Prop,\SVar)$ builds straightforwardly on 
the following definitions used 
in the standard semantics for temporal epistemic logic \cite{FHMVbook}. 
Consider a system for a set of agents $\Ags$. 
A {\em global state} is an element of the set 
%ron4: $\G=  L_e \times  \Pi_{i\in \Ags} L_i$, 
$\G = \Pi_{i\in \Agse} L_i$, 
where  $L_e$ is a set of states of the environment and 
%Ron6: each $L_i$ is a  set of {\em local states} for agent $i$. 
$L_i$ is a  set of {\em local states} for each agent $i\in \Ags$. 
A {\em run} is a mapping $r: \nat\rightarrow \G$ giving a global state at each moment of time. 
%ron7: moved this here: 
For $n \leq m$,   write $r[n\ldots m]$ for the sequence $r(n) r(n+1)  \ldots r(m)$. 
%ron7: + 
We also write $r[n\ldots]$ for the infinite sequence $r(n) r(n+1)  \ldots $. 
A {\em point} is a pair $(r,m)$ consisting of a run $r$ and a time 
%ron4: $m$.
$m\in \nat$.
An {\em interpreted system} is a pair $\I = (\R, \pi)$, where 
$\R$ is a set of runs and $\pi$ is an  {\em interpretation},  mapping each point $(r,m)$ with $r\in \R$ 
to a subset of  $\Prop$. 
Elements of $\R\times \nat$ are called the {\em points} of $\I$. 
For each 
%ron3: $i \in \Ags\cup \{e\}$, 
%ron4: agent  $i \in \{e\} \cup \Ags$, 
$i\in \Agse$, 
we write $r_i(m)$ for the 
%ron3: +
corresponding
component of $r(m)$ in $L_i$, and then
define an equivalence relation on points by $(r,m) \sim_i (r',m')$ if 
$r_i(m) = r'_i(m')$. 
We also define $\sim^D_G\equiv \cap_{i\in G}\sim_i$, and $\sim^E_G \equiv \cup_{i\in G} \sim_i$, and $\sim^C_G \equiv (\cup_{i\in G} \sim_i)^*$  for $G\subseteq \Ags$, 
%ron6: 
where $*$ denotes the reflexive transitive closure of a relation. 
We take $\sim^D_\emptyset$ to be the universal relation on points, and 
%xiaowei: why do we need the latter two identity relation? 
%ron2: added explanation .. 
(for the sake of preserving monotonicity of these relations in these degenerate cases) take 
$\sim^E_\emptyset$ and $\sim^C_\emptyset$ to be the 
identity relation.

To extend this semantic basis for temporal epistemic logic to a semantics for 
%ron6: $\ETL(\Ags,  \Prop,\SVar)$, 
$\ETL$, 
we just need
%xiaowei: need 
to add a construct that interprets variables as global states. 
A {\em context} for an interpreted system $\I$  is a mapping $\Cont$ from $\SVar$ to global states occurring in $\I$, 
%ron4: 
i.e., such that for all $x\in \SVar$ there exists a point $(r,m)$ of $\I$ such that $\Cont(x) = r(m)$. 
%ron6: 
When $g$ is a global state and $x\in \Var$, 
we write $\Cont[g/x]$ for the context $\Cont'$ with $\Cont'(x) =g$ and $\Cont'(y) = \Cont(y)$ for all variables $y\neq x$. 
The semantics of the language \ETL\  
is given by a relation $\Cont, \I, (r,m) \models \phi$, representing that 
formula $\phi$ holds at point $(r,m)$ of the interpreted system $\I$, relative to context $\Cont$. 
This is defined inductively on the structure of the formula $\phi$, as follows: 
\begin{itemize}
\item $\Cont, \I, (r,m)\models p$ if   $p\in \pi(r,m)$; 
\item
$\Cont, \I, (r,m)\models \neg \phi$ if not $\Cont, \I, (r,m)\models \phi$; 

\item
$\Cont, \I, (r,m)\models \phi\wedge \psi$ if $\Cont, \I, (r,m)\models \phi$ and $\Cont, \I, (r,m)\models \psi$; 

\item
$\Cont, \I, (r,m)\models A\phi$ if  $\Cont, \I, (r',m)\models \phi$ for all $r' \in \R$ with 
$r[0\ldots m] = r'[0\ldots m]$; 

\item
$\Cont, \I, (r,m)\models \nxt\phi$ if  $\Cont, \I,  (r,m+1)\models \phi$; 

\item
$\Cont, \I, (r,m)\models \phi \until \psi$ if  there exists
$m'\!\geq\! m $ such that $\Cont, \I, (r,m')\models \psi$ and  $\Cont, \I, (r,k)\models \phi$ for all $k$ with $m\leq k < m'$;

\item
$\Cont, \I, (r,m)\models \existsg{x}\phi$ if $\Cont[r'(m')/x], \I, (r,m)\models\phi$ for some point $(r',m')$ of $\I$; 

\item
$\Cont,\I, (r,m)\models \lid{i}{x}$ if $r_i(m) = \Cont(x)_i$;

\item
$\Cont,\I, (r,m)\models D_G\phi$ if 
 $\Cont,\I, (r',m')\models \phi$ for all $(r',m')$ such that 
 $(r',m')\sim^D_G (r,m)$;

\item
$\Cont,\I, (r,m)\models C_G\phi$ if 
$\Cont,\I,(r',m')\models\phi$ for all $(r',m')$ such that $(r',m')\sim_G^C (r,m)$.
\end{itemize}

%ron3: 
%ron8: lets just avoid this, too many changes required, it affects only the common knowledge case of KBP
%[[FIXME: I am contemplating introducing a generalized common knowledge operator 
%$C_\mathcal{ H} \phi$ where $\mathcal{ H} \subseteq \mathcal{ P}(\Ags)$, 
%with semantics using the relation $(\bigcup_{G\in \mathcal{ H}} \sim^D_G)^*$. 
%This would help to handle the difficult cases of translation of common knowledge operators on p.28 
%and p.32 more directly. CHECK: Is what we do there now even correct? 
%The cost to this is that we would need to rework the complexity results a bit. 
%]] 

The definition is standard, except for the constructs  $\existsg{x}\phi$ and $\lid{i}{x}$. 
The clause for the former says that $\existsg{x}\phi$ holds at a point $(r,m)$ if 
there exists a global state $g=r'(m')$ such  
that  $\phi$ holds at the  point $(r,m)$, 
%xiaowei: provided, 
provided we interpret $x$ as 
referring to $g$. Note that it is required that $g$ is attained at some point $(r',m')$, 
so actually occurs in the system $\I$. 
The clause for $\lid{i}{x}$ says that this holds at a point 
$(r,m)$ if the local state of agent $i$, 
%xiaowei: i.e., $r_i(m)$, 
%ron3: i.e., $r_i(m)$ 
i.e., $r_i(m)$, 
is the same as the 
local state $\Cont(x)_i$ of agent $i$ at the global state $\Cont(x)$ 
that interprets the variable $x$ according to $\Cont$.

We remark that these novel constructs introduce some redundancy, in that
the set of epistemic operators $D_G$ could be reduced to the ``universal'' operator $D_\emptyset$, 
since  $ D_G\phi \equiv \existsg{x} (\lid{G}{x} \land D_\emptyset (\lid{G}{x} \rimp \phi))$. 
Evidently, given the syntactic complexity of this formulation,  $D_G$ remains a useful notation.

%ron3: 
\begin{example} \label{ex:nondeducibility} 
As an example of a property that can be naturally expressed in ESL, 
but not in most standard temporal epistemic logics (e.g., ESL minus the 
operators $\exists x$ and $\lid{i}{x}$), 
consider information flow security properties
in the spirit of \emph{nondeducibility} \cite{sutherland_86}. Suppose that there are two 
%xiaowei9: L is clashing with the L representing local states, so use High and Low instead
agents $\High$ and $\Low$, representing two information 
security levels High and Low respectively. The High level contains secrets that need to be protected 
from an attacker, represented by the Low level. Nondeducibility security properties, intuitively, assert 
that $\Low$ always has no information  about $\High$. When the information that needs to be protected
is represented in the local state of $\High$, this means that $\Low$ should always consider all 
local states of $\High$ possible. This can be expressed using the formula 
$$\always( \neg \exists x( K_{\Low} (\neg  
%ron9: e_{\High}{x}
\lid{\High}{x}
))~.$$ 
Here,
%ron9:  $  K_{\Low} (\neg  e_{\High}(x))$ 
$  K_{\Low} (\neg  \lid{\High}{x})$ 
expresses that 
$\Low$ has some information about $\High$, because there exists 
some local state of $\High$ that $\Low$ is able to exclude, namely,  the local state $g_{\High}$ where $g$ is the global state
denoted by $x$.  By asserting that it is 
always 
%xiaowei9: false? replaced 
the case
 that there does not exist such a state $x$ whose $\High$-local component $\Low$ is able to exclude, 
we say that $\Low$ never has information about $\High$. 
Equivalently, pushing the outer negation inwards gives the form 
$\always( \forall x( \neg K_{\Low} (\neg  
%ron9: e_{\High}(x)
\lid{\High}{x}
))$
which says that $\Low$ always considers all local states of $\High$ to be possible. 

%ron4: this was overkill, cut 
% Some notions of nondeducibility, 
%e.g., in the programming language security literature \cite{sm-jsac}, focus on what the attacker
%can learn about $H$'s initial state. Definitions working on automaton and process algebra models \cite{FocardiG94} focus instead 
%on what $L$ is able to learn about the sequence of actions that $H$ has performed. 
%We will not attempt prove a direct correspondence between the 
%formula above and any existing security notion, since a perfect recall semantics for local states  would be required, 
%rather than the imperfect recall semantics we consider in this paper.
%To fully capture notions of nondeducibility concerned with knowledge of action sequences, we  would need to interpret 
%this formula with respect to an (asynchronous) perfect recall semantics, 
%so that the local state of $H$ includes the history of all actions taken by $H$. 
%Perfect recall is also required so that the attacker $L$ is the strongest possible, 
%which reasons using the maximal amount of information it can have at a given time, 
%i.e., the sequence of all its observations up to that time. Nevertheless, these formulas 
%with our imperfect recall semantics are of interest with respect to weaker attackers than a perfect recall attacker, 
%which have been advocated in security literature \cite{AskarovC12}. (We given an example of how 
%weaker attackers are relevant in security in Section~\ref{sec:erasure}.) 

We remark that the operators $\exists x$ and $\lid{i}{x}$ may be eliminated
from the above formula if the system $\I$ is known, and has a sufficiently 
rich set of atomic propositions that each local state $h$ of 
%xiaowei9: $S$ 
$\High$
is associated with a 
conjunction $\phi_h$ of literals that is true exactly at global states $g$ with 
%xiaowei: $g_{S} = h$.
$g_{\High}=h$.  
%xiaowei9: add
Let $L_{\High}$ be the set of local states of $\High$. 
This gives the equivalence 
%ron10: 
%$$ \exists x( K_{\Low}( \neg \lid{\Low}{x}))  ~~\equiv~~ \bigvee_{h \in L_{\High}}  K_{\Low} ( \neg  \phi_h) $$ 
   $$ \exists x( K_{\Low}( \neg \lid{\High}{x}))  ~~\equiv~~ \bigvee_{h \in L_{\High}}  K_{\High} ( \neg  \phi_h) $$ 
which is valid in $\I$. However, if the system $\I$  over all systems, then no single 
formula of the logic without the operators $\exists x$ and $\lid{i}{x}$ can 
be equivalent to 
%ron10: $\exists x( \neg K_{\Low}( \lid{\Low}{x}))$, 
$\exists x( K_{\Low}( \neg \lid{\High}{x}))$, 
because a fixed set of propositions cannot distinguish an arbitrarily large set of states. 
\qed
\end{example}

\subsection{Strategic Environments}\label{sec:stratenv}

In order to semantically represent settings in which  agents operate  by strategically choosing their
actions, 
%ron3: in the context of an environment, 
we introduce 
\emph{environments}, 
a type of transition system that models the available actions
and their effects on the state.  
%ron3: 
This modelling is long established in the literature on reasoning about knowledge \cite{MeydenTARK96}, 
and is similar to models used in the tradition of alternating temporal logic \cite{ATLJACM}. 
%ron3: We generate from this transition system 
From an environment and a class of strategies, we construct
an instance of the 
interpreted systems semantics defined in the previous section. 
%xiaowei: On of the 
One of the innovations in this
construction is to introduce new names, that refer to global state components that
represent the strategies being used by the agents. 

An {\em environment} for agents $\Ags$  is a tuple 
$\Env =  \langle S, I, \{\Acts_i\}_{i\in \Ags}, \trans, \{O_i\}_{i\in \Ags}, \pi\rangle$, where 
\be \item 
$S$ is a set of states, 
\item 
$I$ is a subset of $S$, representing the initial states, 
\item 
for each $i \in \Ags$, component $\Acts_i$ is a  nonempty set of actions
that may be performed by agent $i$; 
we define $\Acts = \Pi_{i\in Ags} \Acts_i$ to be the set of 
joint actions, 
\item 
%component 
$\trans \subseteq S \times \Acts \times S$ is a transition relation, 
labelled by joint actions, 
\item 
for each $i\in \Ags$, 
component $O_i: S\rightarrow L_i$ is 
an observation function, 
and 
\item 
$\pi: S\rightarrow \mathcal{ P}(\Prop)$ is a propositional assignment. 
% FIXME: include fairness?? 
%\item $\Fairness_e$ is a generalized B\"{u}chi fairness condition over the states $S_e$. 
\ee
%ron3: +
Here the range $L_i$ of the observation function $O_i$ is any set, what will matter in the semantics
is an equivalence relation derived from this function. 

An environment is said to be finite if all its components, i.e., $S, \Ags, \Acts_i, L_i$ and $\Prop$, are finite. 
Intuitively, a joint action $a\in \Acts$ represents a choice of action $a_i\in \Acts_i$ for each agent $i\in \Ags$, 
performed simultaneously, and the transition relation resolves this into an effect on the 
state. We assume that $\trans$ is serial in the sense that for all $s\in S$ and 
$a \in \Acts$ there exists $t\in S$ such that $(s,a, t)\in\trans$. 
%ron3: + missing def: 
We also write $s\ptrans{a}t$ for $(s,a, t)\in\trans$.

%ron3: + 
\begin{example} \label{ex:environment} 
We describe an environment for a secure message transmission 
problem, which models a sender agent $HS$ at a High security level that has a bit of  information to be  transmitted 
to a receiver agent $HR$, also at a High security level, via a channel represented by an agent $\Low$ 
at the Low security level (e.g., the internet). The transmission is 
handled by an agent $\Crypt$ that models cryptography that may be applied to the message before transmission. 
Thus, we take $\Ags = \{HS,\Crypt, HR,\Low\}$. The environment has the following components: 
\begin{itemize}
\item The set of states $S$ is the set of assignments to the following variables: 
\begin{itemize} 
\item $\mathtt{s}$, representing the sender's secret bit, with value in $\{0,1\}$
\item $\mathtt{k}$, representing a secret encryption key, with value in $\{0,1\}$
\item $\mathtt{c}$, representing the unsecured communication channel, with value in $\{0,1,\bot\}$
\end{itemize} 
We represent a state in $S$ in the format $\langle s,k,c\rangle$, corresponding to the values of the three variables. 

\item The set $I$ of initial states is the set $\langle s, k, \bot\rangle $ where $s,k\in \{0,1\}$. 
That is, the value of the channel $\mathtt{c}$ is initially $\bot$, representing that no 
message has yet been sent.  

\item We associate the following sets of actions with the agents: 
$\Acts_{HS} = \Acts_{HR} = \Acts_{\Low} = \{ \mathtt{skip} \}$ and 
$\Acts_{\Crypt} = \{ ~\mathtt{c}:=\mathtt{s} \oplus \mathtt{k}, ~ \mathtt{c}:=\mathtt{s} \oplus \overline{\mathtt{k}}~\}$. 
Thus, agents $HS,HR, \Low$ are inert, they can perform only the action $ \mathtt{skip}$, which has
no effect on their local states. The only active agent is $\Crypt$, which 
has two actions, each of which encrypts the message bit $\mathtt{s}$ using the key $k$ and 
places the result in the channel $\mathtt{c}$. Encryption is done 
by computing the exclusive-or $\oplus$ of the message with information from the key. 
The two actions correspond to taking the information from the key to be either 
the key bit $\mathtt{k}$ itself, or its complement $\overline{\mathtt{k}}$. 
Since agents $HS, HR, \Low$ always perform skip, we may, for brevity,  name 
joint actions using the action names for agent $\Crypt$, 
i.e., if $a$ is one of $\Crypt$'s actions, then we denote a joint action $\langle \mathtt{skip}, a, \mathtt{skip}, \mathtt{skip}\rangle$ 
in $\Acts = \Acts_{HS}\times \Acts_{\Crypt}\times \Acts_{HR}\times \Acts_{\Low}$ as just $a$. 

\item The transition relation resolves joint actions denoted as $a\in \Acts_{\Crypt}$ as follows: 
$$\langle s,k,c\rangle \ptrans{a} \langle s',k',c'\rangle$$ 
if either $a$ is $\mathtt{c}:=\mathtt{s} \oplus \mathtt{k}$ and $s' = s$, $k'=k$ and $c' = s\oplus k$, 
or  $a$ is $\mathtt{c}:=\mathtt{s} \oplus \overline{\mathtt{k}}$ and $s' = s$, $k'=k$ and $c' = s\oplus \overline{k}$. 

\item We define the observation functions for each of the agents on states $\langle s,k,c\rangle\in S$ as follows: 
\begin{itemize} 
\item Agent $HS$ observes just the bit to be transmitted, i.e., $O_{HS}(\langle s,k,c\rangle) = s$. 
\item Agent $\Crypt$ observes both the bit to be transmitted and the value of the encryption key, i.e., $O_{HS}(\langle s,k,c\rangle) = \langle s,k\rangle$. 
\item Agent $HR$ observes the communication channel and the value of the encryption key  i.e., $O_{HR}(\langle s,k,c\rangle) = \langle k,c\rangle$. 
\item Agent $\Low$ observes just the communication channel,  i.e., $O_{\Low}(\langle s,k,c\rangle) = c$. 
\end{itemize} 

\item  We do not need any propositions in our later uses of this environment, 
so we take $\Prop = \emptyset$ and $\pi : S\rightarrow \Prop$ to be the trivial assignment. \qed
\end{itemize} 

\end{example}

%ron6: moved a para from here on succinct representation of environments to the complexity section, where it is more relevant 

A {\em strategy} for agent $i\in \Ags$ in an environment $\Env$
is a function $\sgy_i: S \rightarrow \mathcal{ P}(\Acts_i) \setminus \{ \emptyset\}$, 
selecting a 
%ron6: 
nonempty 
set of actions of the agent at each state.%
\footnote{More generally, 
a strategy could be a function of the history, but we focus here
on strategies that depend only on the final state.} 
We call these 
%xiaowei9: the 
actions
{\em enabled} at the state
%xiaowei: add
for agent $i$. 
A {\em group strategy}, or {\em strategy profile}, for a group $G$ is a 
tuple $\sgy_G = \langle \sgy_i\rangle_{i \in G}$ where each $\sgy_i$ is a strategy for agent $i$. 
%ron4: + 
A \emph{joint} strategy is a group strategy for the group $\Ags$ of all agents. 
If $\sgy= \langle \sgy_i\rangle_{i \in G}$ is a group strategy for group $G$, and $H\subseteq G$, we write 
$\sgy\restrict H$ for the restriction $ \langle \sgy_i\rangle_{i \in H}$ of $\sgy$ to $H$. 

A strategy $\sgy_i$ for agent $i$ is {\em deterministic} if $\sgy_i(s)$ is a singleton
for all $s$. 
 A strategy $\sgy_i$ for agent $i$ is {\em uniform} if for all states $s,t$, if $O_i(s) = O_i(t)$, then 
 $\sgy_i(s)  = \sgy_i(t)$. 
 %ron6: + 
 Intuitively, uniformity captures the constraint that agents' actions are 
 chosen using no more information than they obtain from their observations.%
 \footnote{Recall that we work in this paper with agents with imperfect recall. For agents with 
 perfect recall, we would use a notion of uniformity that allows agents choice of action to 
 depend on all their past observations.} 
 A strategy $\sgy_G = \langle \sgy_i\rangle_{i \in G}$ for a group $G$ 
 is {\em locally uniform (deterministic)} if $\sgy_i$ is uniform (respectively, deterministic) for each agent $i\in G$.%
 \footnote{We prefer the term ``locally uniform'' to just ``uniform'' in the case of groups, 
 since we could say a strategy $\sgy$ for group $G$ is {\em globally uniform} if for all states $s,t$, if $O_i(s) = O_i(t)$ for all $i\in G$, then 
$\sgy_i(s)  = \sgy_i(t)$ for all $i\in G$.  
 %FIXME: has this distinction been made in the ATL literature/ do we need it?
While we do not pursue this in the present paper, this notion would be interesting in 
settings where agents share information to collude on their choice of move.} 
 Given an environment $\Env$,   we write
 $\Strat^\detstrat(\Env)$ for the set of deterministic 
 %xiaowei: add
 joint 
 strategies, 
 $\Strat^\unif(\Env)$ for the set of all 
 locally uniform joint strategies, and  $\Strat^{\unif, \detstrat}(\Env)$ for the set of all deterministic 
 locally uniform joint strategies.   %[[FIXME check if all these are used]]
 
\begin{example} \label{ex:strategy} 
We present some joint strategies in the environment of Example~\ref{ex:environment}. 
For agents $i \in \{HS,HR, \Low\}$, the only available action is $\mathtt{skip}$, so
all joint strategies $\alpha$ have $\alpha_i(s) = \{\mathtt{skip}\}$ for all $s\in S$. 
Thus, each joint strategy $\alpha$ is determined by its component $\alpha_{\Crypt}$, the strategy of the 
encryption agent. 

The encryption agent  could always choose the action $\mathtt{c} := \mathtt{s}\oplus \mathtt{k}$, giving 
the strategy $\alpha_{\Crypt}^0$ defined by $\alpha_{\Crypt}^0(s) = \{\mathtt{c} := \mathtt{s}\oplus \mathtt{k}\}$ for all states $s$. 
This strategy is both locally uniform and deterministic. 

If the encryption agent chooses its action non-deterministically, 
we have the strategy $\alpha_{\Crypt}^1$ defined by $\alpha_{\Crypt}^1(s) = \Acts_{\Crypt}$ for all states $s$. 
This strategy is locally uniform, but not deterministic. 

An alternate strategy for the encryption agent is to choose its action based on the 
values it observes. Consider the strategy $\alpha_{\Crypt}^2$ defined by letting $\alpha_{\Crypt}^2(\langle s,k,c\rangle)$ be the singleton set 
$\{\mathtt{c} := \mathtt{s}\oplus \mathtt{k}\}$ if $k=0$ and 
the action $\{\mathtt{c} := \mathtt{s}\oplus \overline{\mathtt{k}}\}$ 
otherwise. This strategy is deterministic. Also, since the value $k$ is always part of
the agent's observation, this strategy is locally uniform. \qed 
\end{example}

 %ron6: \subsection{Generated Interpreted System} 
\subsection{Strategy Space}
 \label{sec:generatedIS}
 
 We now define an interpreted system, 
%ron6: + 
called  the \emph{strategy space}  of an environment, 
 that contains all the possible runs generated 
when agents $\Ags$ behave by choosing a strategy 
 from some set $\Sigma$ of joint strategies in the context of an environment $\Env$. 
 To enable reference to the strategy being used by agent $i\in \Ags$, we 
 %ron6: say this below.. 
 %treat the strategy as a component of the global state, and 
  introduce  the notation 
 ``$\strat(i)$'' as a name referring to agent $i$'s strategy. 
 For $G\subseteq \Ags$, we write $\strat(G)$ for the set  $\{\strat(i)~|~ i\in  G\}$. 
 %ron6: cut this here .. 
 %Additionally, we include an agent $e$ for representing the state of the environment. 

 Technically, $\strat(i)$ will be treated as 
 %ron6: 
 if it were 
 an agent in the context of 
 temporal epistemic logic, 
 %ron6: 
 in the sense that it will be the index of a local state component of the global state. 
 %ron6:   This enables us to 
 In particular, we 
  take the value of the local state
 %ron6: of agent 
 at index 
 $\sigma(i)$ to be the strategy in use by agent $i$. 
 We will permit use of the 
 %ron6: try to use" index" rather than "agent" for \strat(i) 
 %agents 
 indices 
 $\sigma(i)$ in epistemic operators. 
This provides a way to refer, using distributed knowledge operators $D_G$ where 
$G$ contains the 
%ron6:  new strategic agents 
strategic indices
$\strat(i)$,  to 
what agents would know, should they take into account not
just their own observations, but also information about other agents' strategies.
For example, the distributed knowledge 
operator $D_{\{i,\strat(i)\}}$ captures the knowledge that  agent $i$ has, 
taking into account the strategy that it is running.  
%ron6: 
Operator
$D_{\{i,\strat(i), \strat(j)\}}$ captures what agent $i$ would know, 
taking into account its own strategy and the strategy being used by agent $j$. 
Various applications of the usefulness of this expressiveness are given in 
%ron6: the sections on applications. 
Section~\ref{sec:apps}. 

 We note, however, that 
 unlike the base agent $i \in \Ags$, the 
 %ron6: agent 
 index 
 $\strat(i)$ is not
 one of the agents in the environment $\Env$, and it is not associated with any actions. 
 The 
 %ron6: agent 
 index 
 $\strat(i)$ exists only in the interpreted system that we generate from $\Env$. 
(A similar remark applies to the special  agent $e$, which is also 
not associated with any actions.) 
%ron6: 
Since the indices $\strat(i)$ are not agents in the same sense as agents $i \in \Ags$, the  
reader may prefer to read $D_G\phi$ with $\strat(i)\in G$ as ``$\phi$ is \emph{deducible} from the 
information contained in state components $G$'' rather than the more standard 
``it is distributed knowledge to agents $G$ that $\phi$''.

 Formally,
 %ron6: 
 suppose we are 
  given an environment 
 $\Env =  \langle S, I, \{\Acts_i\}_{i\in \Ags}, \trans, \{O_i\}_{i\in \Ags}, \pi\rangle$ for agents $\Ags$, 
 where $O_i : S\rightarrow L_i$ for each $i\in \Ags$, 
 and a set $\Strat \subseteq \Pi_{i\in \Ags} \Strat_i$ of joint strategies for the group $\Ags$.  
 %we 
 We define the {\em strategy space} interpreted system  $\I(\Env,\Strats) = (\R, \pi')$
 as follows% 
 %ron9: 
 \footnote{The construction given here is for an ``observational'' or ``imperfect recall'' modelling of knowledge that assumes that an agent
 reasons, and chooses its next action, on the basis of its current observation only. It is straightforward to give other constructions such
 as a synchronous perfect recall semantics, where we work with the sequence of observations and  actions of the agent instead. Model checking for such a 
variant would be undecidable, so we do not pursue this here.}. 
The system $\I(\Env,\Strats)$ has global states $\G = S\times \Pi_{i\in \Ags} L_i \times \Pi_{i\in \Ags} \Sigma_i$. 
Intuitively, each global state consists of a state of the environment $E$, a local state for each agent $i$ in $E$, and a strategy for each agent $i$.   
We index the components of this cartesian product by $e$, the elements of $\Ags$ and the elements of $\strat(\Ags)$, respectively. 
We take the set of runs $\R$ of $\I(\Env,\Strats)$ to be the set of all runs $r: \nat \rightarrow \G$
satisfying the following constraints, for all $m\in \nat$ and $i\in \Ags$
\be 
\item $r_e(0) \in I$ and $\langle r_{\strat(i)}(0) \rangle_{i\in \Ags} \in \Strats$, 
\item  $r_i(m) = O_i(r_e(m))$,
\item  $(r_e(m), a,  r_e(m+1)) \in \trans $ for some joint action $a\in \Acts$ such that for all $j\in \Ags$ we have 
%xiaowei: $a_j \in \alpha_j(r_j(m))$, 
$a_j \in \alpha_j(r_e(m))$,
where $\alpha_j = r_{\sigma(j)}(m)$, and 
\item $r_{\sigma(i)}(m+1) = r_{\sigma(i)}(m)$. 
\ee
%ron6: moved this here 
The interpretation $\pi'$ of $\I(\Env,\Strats)$ is determined from the interpretation  
$\pi$ of $\Env$ by taking $\pi'(r,m) = \pi(r_e(m))$ for all points $(r,m)$. 

%ron6: split this off 
The first constraint
%ron6: , intuitively, says
on runs says, intuitively, 
 that runs start at an initial state of $E$, and 
the initial strategy profile at time $0$ is one of the profiles in $\Strats$. 
The second constraint states that the agent $i$'s local state  at time $m$ is the observation 
that agent $i$ makes of the state of the environment at time $m$. The third constraint 
says that evolution of the state of the environment is determined at each moment of time by agents choosing 
an action by applying their strategy at that time to 
%xiaowei: their local state 
the state at that time. The joint action resulting from these individual choices is then  
resolved into a transition on the state of the environment using the transition relation from $\Env$.  
The final constraint says that agents' strategies are fixed during the course of a run. 
Intuitively, each agent picks a strategy, and then sticks to it.

Our epistemic strategy logic is now just  an instantiation of the extended temporal epistemic 
logic in the strategy space generated by an environment. That is, we start with an environment $\Env$
and an associated set of strategies $\Strats$, 
and then work with the language 
%ron4: $\ETL(\Ags\cup \strat(\Ags) \cup \{e\}, \Prop, \SVar)$
            $\ETL(\Ags\cup \strat(\Ags), \Prop, \SVar)$
in the interpreted system  $\I(\Env,\Strats)$. 
%ron6: 
(Recall that this notation implicitly includes a local state component $e$ to represent the state of the environment.) 
We call this instance of the language $\ESL(\Ags, \Prop, \SVar)$, or just $\ESL$ when the 
parameters are implicit. 

%ron4: moved this definition here 
Since interpreted systems are always infinite objects, we use environments to give a finite input for the model checking problem. 
For an environment $\Env$, a set of strategies $\Strat$ for $\Env$, and a context $\Cont$ for  $ \I(\Env, \Strat)$, 
we write $\Cont, E,\Strat\models \phi$ if  $\Cont, \I(\Env, \Strat),(r,0)\models \phi$ 
for all runs $r$ of  $\I(\Env, \Strat)$. 
Often, the formula $\phi$ will be a sentence, i.e., will have all variables $x$ in the scope of an operator $\exists x$. 
In this case the statement $\Cont, \Env,\Strat\models \phi$ is independent of $\Cont$ and we write simply $\Env,\Strat\models \phi$.

We will be interested in a number of fragments of $\ESL$ that turn out to have lower complexity. 
We define $\ESL^-(\Ags, \Prop, \SVar)$, or just $\ESL^-$, 
to be the language 
%ron3: display for alignment 
%ron4: $$\ECTL(\Ags\cup \strat(\Ags) \cup \{e\}, \Prop, \SVar)$$
$$\ECTL(\Ags\cup \strat(\Ags), \Prop, \SVar)~.$$
%ron6: in the interpreted system  $\I(\Env,\Strats)$. 
Another fragment of the language that will be of interest is 
the  language, 
%ron4: $\CTLsK(\Ags\cup \strat(\Ags) \cup \{e\}, \Prop, \SVar)$ 
denoted
$$\CTLsK(
%ron8: \Ags
%xiaowei9: why \Agse? ETLK(\Ags) definition has i\in \Agse
\Ags
\cup \strat(\Ags), \Prop, \SVar)$$ 
%ron4: (again interpreted in a system $\I(\Env,\Strats)$) 
in which we omit the constructs $\exists$ and $\lid{i}{x}$; this is a standard branching-time temporal epistemic
language except that it contains the 
%ron6: special agents 
strategy indices 
$\strat(\Ags)$.

%ron3: I have replaced all uses of these by E\nxt and E\Diamond which are already introduced
%In the following, we use the conventional expressions $EX\phi$ and $EF\phi$ of CTL, to represent the formulas $E\nxt \phi$ and $E(True \until \phi)$ of $\ESL$, respectively. 

\section{Applications}\label{sec:apps}

We now consider a range of applications of the logic $\ESL$,  
%ron4: cut... model checking results come later.. 
%Since the full logic has a highly complex model checking problem, 
%we endeavor to use restricted fragments of the logic with lower
%complexity whenever possible. 
and show how it can represent notions from earlier work on alternating temporal epistemic logic. 
(In a few cases, we prove precise translation results, but due to the large number of operators and distinct semantics underlying 
these logics in the literature, we just sketch intuitive correspondences in most cases.)  

\subsection{Variants of Nondeducibility} \label{sec:nonded} 

We already mentioned the notion of nondeducibility in Example~\ref{ex:nondeducibility}, 
which shows one way that our 
%ron10: logics extend 
logic extends
the expressiveness of 
previous work on  temporal epistemic logic, by allowing 
quantification over agents' local states to be expressed. 
We continue discussion of this example here, in the context of the environment $\Env$ of 
Example~\ref{ex:environment}. We also show that our logic can represent a related notion from
 the security literature called \emph{nondeducibility on strategies} \cite{WJ90} that involves
an agent reasoning based not just on its local state, but
also using knowledge of the strategy being employed by another agent. 
This demonstrates a further dimension in which we can express more than prior work on 
alternating temporal epistemic logic, and shows the value of allowing the strategic indices 
$\strat(i)$ to occur in epistemic operators. 
(Our discussion in this section loosely follows examples used in \cite{WJ90}
to motivate nondeducibility on strategies.)  

Consider first the instance 
$$\mathit{NonDed}= \always( \neg \existsg{x}( K_{\Low} (\neg  \lid{HS}{x}))~$$ 
of the formula from Example~\ref{ex:nondeducibility},
which expresses that the low level attacker $\Low$ never learns any information about the high level secret 
held in the local state of the high sender $HS$. On the other hand, the formula
$$\mathit{Ded}(G) = \sometimes( \existsg{x}( D_{G} ( \lid{HS}{x}))~$$
 states that group  $G$ does eventually 
 learn the value of the secret held by $HS$.  (Note that the formula $D_{G} ( \lid{HS}{x})$ says that 
 the group $G$ has distributed knowledge that the local state of component $HS$ is the same as 
 the local state of $HS$ in the global state denoted by variable  $x$.
 %ron7: 
 The formulas $\mathit{NonDed}$ and $\mathit{Ded}(\{\Low\})$ are not 
 %ron10: duals, 
 opposites,
 as one might expect from the names.
 %ron13: note that  both contain an existential quantifier, whereas dual formulas would contain dual quantifiers. \xiaowei{
 Actually, the negation of $\mathit{NonDed}$ and $\mathit{Ded}(\{\Low\})$ lead to similar formulas, except that the former has a negation before $\lid{HS}{x}$.) 
  Clearly, for cryptography to be effective, 
 we require that the specification $ \mathit{NonDed} \land \mathit{Ded}(\{HR\})$
 be satisfied, which expresses that the High receiver $HR$ eventually learns the secret, 
 but that the adversary $\Low$ never has any information about the secret. 

%ron10: 
In what follows, given a joint strategy $\alpha$, we write
$\Sigma(\alpha)$ for the singleton set of strategies $\{\alpha\}$. 

Suppose first that  encryption is always done 
using the action $\mathtt{c}:=\mathtt{s} \oplus \mathtt{k}$, 
so that the joint strategy is the strategy $\alpha^0$ from Example~\ref{ex:strategy}, with 
$\alpha^0_{\Crypt}(s) = \{\mathtt{c}:=\mathtt{s} \oplus \mathtt{k}\}$ for all states $s$. 
Then we work in the interpreted system generated by the set of strategies $\Sigma(\alpha^0) = \{\alpha^0\}$. 
Note that in $\I(E,\Sigma(\alpha^0))$, it is common knowledge that the strategy being used by $\Crypt$ is $\alpha^0_{\Crypt}$. 
The following result shows that in this case, the system satisfies the specification $\mathit{NonDed} \land \mathit{Ded}(\{HR\})$

\begin{propn} \label{prop:nonded} 
$\Env, \Sigma(\alpha^0) \models \mathit{NonDed} \land \mathit{Ded}(\{HR\})~.$ 
\end{propn} 
\begin{proof} 
We first show that $E, \Sigma(\alpha^0) \models \mathit{NonDed}$. 
Note that, since there is only one joint strategy, and agents' observations are derived from the 
state of the environment, a run $r$ of $\I(\Env, \Sigma(\alpha^0))$ is determined by the
sequence of states of the  environment $r_e[0\ldots] =  r_e(0),r_e(1) \ldots$. These sequences 
all have the form 
$$\langle s,k,\bot\rangle \langle s,k,s\oplus k\rangle^\infty$$ 
for some $s,k \in \{0,1\}$, where $t^\infty$ indicates infinitely many copies of the state $t$. 
For each run $r$ of this form, there exists another run $r'$ with 
$r'_e[0\ldots] = \langle \overline{s},\overline{k},\bot\rangle \langle \overline{s},\overline{k},\overline{s}\oplus \overline{k}\rangle^\infty$.
Now, we have that $(r,n) \sim_{\Low} (r',n)$ for all $n \in \nat$, since 
$$r_{\Low}(0)  = O_{\Low}(\langle s,k,\bot\rangle )  =  \bot = O_{\Low}(\langle \overline{s},\overline{k},\bot\rangle )  = r'_{\Low}(0)$$
and  
\begin{align*} 
r_{\Low}(n) &  = O_{\Low}(\langle s,k,s\oplus k \rangle )\\
&  =  s\oplus k \\
& =  \overline{s}\oplus \overline{k} \\
& = O_{\Low}(\langle \overline{s},\overline{k},\overline{s} \oplus \overline{k}\rangle ) \\
& =  r'_{\Low}(n)
\end{align*}  
for $n \geq 1$. 
Since also $(r,n) \sim_{\Low}(r,n)$, we have that $\Low$ considers both possible values of the local state of agent $HS$ possible, 
so $\I(\Env, \Sigma(\alpha^0)),(r,0) \models  \mathit{NonDed}$.

On the other hand, we have  $\I(\Env, \Sigma(\alpha^0)),(r,0) \models  \mathit{Ded}(\{HR\})$. For,  at time 1, we
have $r_ {HR}(1) = O_{HR}(\langle s, k, s\oplus k\rangle) = \langle k, s\oplus k\rangle $. 
Let $(r',m)$ be any point with $(r,1) \sim_{HR} (r',m)$, and let
$r'_e[0\ldots] = \langle s',k',\bot\rangle \langle s',k',s'\oplus k'\rangle^\infty$. Then $m \geq 1$ and 
$$r'_{HR}(m) =  O_{HR}(\langle s', k', s'\oplus k'\rangle) = \langle k', s'\oplus k' \rangle~.$$ 
Thus, from $r_{HR}(1) = r'_{HR}(m)$, we obtain 
$k = k'$ and $s\oplus k = s' \oplus k'$. Hence also 
 $ r'_{HS}(m) = s' =
 (s'\oplus k')\oplus k' = (s\oplus k)\oplus k = s =  r_{HS}(1) $. 
This shows that  $\I(\Env, \Sigma(\alpha^0)),(r,1)\models  \exists x( K_{HR} ( \lid{HS}{x}))$, 
so $\I(\Env, \Sigma(\alpha^0)),(r,0)\models  \mathit{Ded}(\{HR\}))$. 
\end{proof} 

On the other hand, not every strategy for the encryption agent similarly satisfies the specification. 
Consider the joint strategy $\alpha^2$ from Example~\ref{ex:strategy}.  Here we have 
that $\Low$ and $HR$ both always learn the value of the secret. 

\begin{propn} \label{prop:nonded2}
$\Env, \Sigma(\alpha^2)  \models \mathit{Ded}(\{HR\}) \land \mathit{Ded}(\{\Low\}) $. 
\end{propn} 

\begin{proof} 
Strategy $\alpha^2$ is deterministic. Note that if $k=0$ then $s\oplus k =s$, 
and if $k =1$ then $s\oplus \overline{k} =s$. Thus, the runs of $\alpha^2$ have sequence of environment states 
$r_e[0\ldots ] = \langle s,k,\bot \rangle \langle s,k,s \rangle^\infty$. 
As above, since $\Sigma(\alpha^2) $ is a singleton, this sequence determines the run as a whole. 
Since $O_{\Low}( \langle s,k,s \rangle) = s$ and $O_{HR}( \langle s,k,s \rangle) = \langle k,s\rangle$, 
both $\Low$ and $HR$ directly observe the value of the secret $s$ in the local state of $HS$ from time 1, so know this value. 
\end{proof} 

A corollary of this result is that if we work in a system where all (uniform) strategies for $\Crypt$ are 
possible (represented by the set of strategies $\Sigma^\unif$), then while $\Low$ cannot deduce the secret in general, there are encryption strategies
for $\Crypt$ such that, if $\Low$ knew that this strategy is being applied by $\Crypt$, then $\Low$ would be 
able to deduce the secret. 

\begin{propn} \label{prop:nonded3} 
$\Env, \Sigma^\unif \models  \mathit{NonDed}$, 
but not  $\Env, \Sigma^\unif  \models \neg \mathit{Ded}(\{\Low, \strat(\Crypt)\}) $. 
\end{propn} 

\begin{proof} 
For $\Env, \Sigma^\unif \models  \mathit{NonDed}$, we note that $\Low$ always considers it possible
that $\Crypt$ is running strategy $\alpha^0$ from above, and argue exactly as in Proposition~\ref{prop:nonded}. 
To show that not  $\Env, \Sigma^\unif  \models \neg \mathit{Ded}(\{\Low, \strat(\Crypt)\}) $, 
let  $r$ be a run in which $\Crypt$ runs strategy 
$\alpha^2_{\Crypt}$.  Note that if $(r,0) \sim_{\{\Low, \strat(\Crypt)\}} (r',m)$ then 
$r_{\strat(\Crypt)}(0)= r'_{\strat(\Crypt)}(m)$, i.e.,  $\Crypt$ uses the same strategy in the runs $r$ and $r'$. 
Essentially the same argument as applied in Proposition~\ref{prop:nonded2} 
to show that $\mathit{Ded}(\{\Low\})$ holds then shows that  $\I(\Env, \Sigma^\unif), (r,0) \models  \mathit{Ded}(\{\Low,  \strat(\Crypt)\})$.  
\end{proof} 

By means of a similar example, Wittbold and Johnson \cite{WJ90} argued that 
nondeducibility is too weak a notion of security to capture information flow security 
attacks in which the attacker exploits a covert channel in a system. Intuitively, 
it does not take into account that the attacker may have information about the 
strategies being used by other agents. One example of how such knowledge of another agent's strategy may arise in practice is
when the attacker $\Low$ has succeeded in infiltrating a virus (here represented by the strategy of $\Crypt$) into the system being
attacked (here comprised of components  $HS,HR$ and $\Crypt$,  i.e., the High sender, the High receiver, and the encryption agent, respectively).
When this is the case, a more appropriate modality for the attacker's knowledge is the modality $D_{\{\Low,  \strat(\Crypt)\}}$, 
which captures what $\Low$ can deduce when it also knows the strategy $\strat(\Crypt)$ being employed by $\Crypt$, 
rather than  the modality $D_{\{\Low\}}$ used in $\mathit{Ded}(\{\Low\})$ . 
(The modality $D_{\{\Low,   \strat(\Low), \strat(\Crypt)\}}$ which says that $\Low$ also reasons knowing its 
own strategy would also make sense in general, though in the model under discussion 
it is identical to $D_{\{\Low,  \strat(\Crypt)\}}$ since $\Low$ has only one action to choose from, so all its uniform strategies are the same.)  
Wittbold and Johnson's notion of \emph{nondeducibility on strategies} (NDS) is a definition of security that 
takes into account such reasoning by the attacker. 
For a two-agent system, comprised of Low level agent $\Low$ and High level agent $\High$, 
%ron8: 
Wittbold and Johnson define a system to satisfy non-deducibility on strategies if 
every Low view is compatible with every High strategy. 
NDS may be expressed directly in our logic  by the formula%
%ron8: 
\footnote{The perfect recall semantics in combination with 
perfect recall strategies would give the interpretation of this formula that is most adequate for 
security applications.}  
%%ron8: $$ \always \neg \existsg{x}(D_{\{L,\strat(H)\}}(\neg \lid{H}{x}))$$ 
%xiaowei9: $$ D_\emptyset \forallg{x}(\neg K_{L}(\neg \lid{H}{x}))~.$$ 
$$ D_\emptyset \forallg{x}(\neg K_{\Low}(\neg \lid{\strat(\High)}{x}))~.$$ 
%which says that $L$ cannot deduce any information about the state of $H$, even if $L$ knows the 
%strategy being applied by $H$. 
which says that at all points of the system (identifying a $\Low$ view/local state, in particular) 
for all global states $x$  (identifying a High strategy, in particular), $\Low$ considers the High strategy in $x$ to be 
possible. 
This notion cannot be expressed in alternating temporal epistemic logics such as ATEL, 
discussed below, which do not allow reference to what can be deduced 
%ron8: from knowledge other agent's strategies.
about other agents' strategies.

%ron4: new section to talk about ATL before getting into ATEL. 
\subsection{Revocable and Irrevocable strategies in ATL}  \label{sec:atltrans} 

%ron4: moved this para here 
\emph{Alternating temporal logic} (ATL) \cite{ATLJACM} is a generalization
of the branching-time temporal logic CTL that can express the
capability of 
%xiaowei: agent strategies
agents' strategies
 to bring about temporal effects.  
 %ron4: 
 We show in this section that ESL is able to express several variants of ATL. 
 %ron6: 
 The following section relates various epistemic extensions of ATL to ESL. 
 
The syntax of \ATL\ formulas $\phi$ is given as follows: 
$$\phi \equiv p ~|~\neg \phi~|~\phi_1 \lor \phi_2~|~\atlop{G}\nxt\phi ~|~~\atlop{G}\always\phi ~|~\atlop{G}(\phi_1 \until \phi_2) $$
where $p \in \Prop$, $i \in \Ags$ and $G\subseteq \Ags$. 
Essentially, each branching construct $A\phi$ 
%ron4: is generalized 
of CTL is generalized  in ATL 
to an \emph{alternating} construct $\atlop{G} \phi$ for
a group $G$ of agents, where $\phi$ is a ``prefix temporal'' formula such as 
%ron3: $X\phi'$, $F\phi'$, $G\phi'$ or 
$\nxt \phi'$, $\sometimes \phi'$, $\always\phi'$ or 
$\phi_1 \until \phi_2$, as would be used to construct a CTL 
%ron10: operator.   
formula. 
Intuitively, $\atlop{G} \phi$ says that the group $G$ has a strategy for ensuring that $\phi$ holds, 
irrespective of what the other agents do. 

%ron4: improved this para 
The semantics of ATL is given using \emph{concurrent game structures}, which are very similar to
environments as defined above, with the main differences being the following. For each point of difference, 
we sketch how to view concurrent game structures as equivalent to environments. 
\begin{itemize} 
\item Concurrent game structures lack a set of initial states. 
It is convenient for technical reasons to treat a concurrent
game structure as an environment with all of its states initial.

 \item Concurrent game structures allow that not all actions are available at every state, 
whereas in environments all actions are always available. In environments, we can treat a choice of a non-enabled
action as equivalent to a choice of a default enabled action in the transition relation. 

\item 
The transition relation in concurrent game structures is deterministic, in the sense that for each state $s$ and joint action $a$, there exists 
a unique state $t$ such that $s\ptrans{a} t$. Nondeterminism in environments can be modelled in concurrent game structures, 
by adding an agent that makes the nondeterministic choice through its actions. 

\item ATL's concurrent game structures do not have a notion of observation. Intuitively, all agents
always have perfect information concerning the current state. We may capture this in environments
by taking $O_i(s) =s$ for all agents $i$ and states $s$. 
\end{itemize} 

%ron4: to distinguish sets of joint strategies (all agents) from group strategies (some subset) 
\newcommand{\GStrats}{\Delta}  

\newcommand{\trl}[1]{#1^*} 
 \newcommand{\compl}[1]{\mathit{comp}(#1)}

Using such correspondences, we  can express the ATL semantics in 
environments $\Env$ as follows. For reasons discussed below, we 
generalize the ATL semantics by parameterizing the definition on a set $\GStrats$ of strategies for groups of agents in the environment $\Env$.
%ron4: + 
That is, $\GStrats$ is a collection of tuples of agent strategies of the form $\langle \sgy_i\rangle_{i\in G}$, with
%ron6: +
both the strategies $\alpha_i$ and 
 the set $G$ of agents varying. 
The semantics uses a relation 
%$\Env, s\models^\Strats \phi$, 
$\Env, s\models^\GStrats \phi$, 
where $\Env =  \langle S, I, \Acts, \trans, \{O_i\}_{i\in \Ags}, \pi\rangle$ is an environment
%ron4: 
%ron8: cut, since we state this explicitly when needed 
%with $I=S$, 
and 
$s\in S$ is a state of $\Env$, 
and $\phi$ is a formula. 

For the definition, we need the notion of a path in $\Env$: this is a
function $\rho: \nat \rightarrow S$ such that for all $k \in \nat$ there exists a joint action $a$ with $(\rho(k), a, \rho(k+1)) \in \trans$. 
A path $\rho$ is \emph{from} a state $s$ if $\rho(0) = s$. 
A path $\rho$ is {\em consistent}  with a strategy $\sgy= \langle \alpha_i\rangle_{i\in G}$ for a group $G$ if for all   $k \in \nat$ 
there exists a joint action $a$ such that $(\rho(k), a, \rho(k+1)) \in \trans$ and  $a_i \in \sgy_i(\rho(k))$ 
for all $i \in G$. 
%ron8: + 
It is also convenient to identify the  \emph{path formulas} of ATL as formulas of the form $\nxt \phi$, $\always \phi$ or
$\phi \until \psi$ where $\phi$ and $\psi$ are ATL formulas.

%ron8: The relation $\Env,s\models^\GStrats \phi$ is defined inductively on the structure of the formula $\phi$:
The relation $\Env,s\models^\GStrats \phi$, where $s$ is a state of $\Env$ and $\phi$ is an ATL formula, is defined
by a mutual recursion with the relation $\Env,\rho\models^\GStrats \phi$ where $\rho$ is a path of $\Env$ and $\phi$ is a path
formula, as follows. Note that if $\atlop{G} \phi$ is an ATL formula then $\phi$ is a path formula. 
For evaluation of ATL formulas at a state we have the clauses

\begin{itemize}
\item $\Env,s\models^\GStrats p$ if   $p\in \pi(s)$; 
\item
$\Env,s\models^\GStrats \neg \phi$ if not $\Env,s\models^\GStrats \phi$; 

\item
$\Env,s\models^\GStrats \phi\wedge \psi$ if $\Env,s\models^\GStrats \phi$ and $\Env,s\models^\GStrats \psi$; 
%ron8:
%\item
%$\Env,s\models^\GStrats \atlop{G}\nxt\phi$ if  there exists a strategy $\sgy_G \in \GStrats$ for group $G$ 
%such that for all paths $\rho$ from $s$ that are consistent with $\sgy_G$, 
%we have $\Env,\rho(1)\models^\GStrats \phi$;  
%
%\item
%$\Env,s\models^\GStrats \atlop{G}\always\phi$ if 
%there exists a strategy $\sgy_G \in \GStrats$ for group $G$ 
%such that for all paths $\rho$ from $s$ that are consistent with $\sgy_G$, 
%we have  $\Env,\rho(k)\models^\GStrats \phi$ for all $k \in \nat$;  
%
%\item
%$\Env,s\models^\GStrats \atlop{G}(\phi \until \psi)$ if  
%there exists a strategy $\sgy_G \in \GStrats$ for group $G$ 
%such that for all paths $\rho$ from $s$ that are consistent with $\sgy_G$, 
%there exists $m\geq 0$ such that $\Env,\rho(m)\models^\GStrats \psi$, and for all $k<m$, 
%we have  $\Env,\rho(k)\models^\GStrats \phi$. 

\item
$\Env,s\models^\GStrats \atlop{G}\phi$ if  there exists a strategy $\sgy_G \in \GStrats$ for group $G$ 
such that for all paths $\rho$ from $s$ that are consistent with $\sgy_G$, 
we have $\Env,\rho \models^\GStrats \phi$;  
\end{itemize}
and for evaluation of a path formula at a path we have the clauses
\begin{itemize}

\item
$\Env,\rho \models^\GStrats \nxt \phi$ if   $\Env,\rho(1)\models^\GStrats \phi$;

\item
$\Env,\rho \models^\GStrats \always\phi$ if  have  $\Env,\rho(k)\models^\GStrats \phi$ for all $k \in \nat$;  

\item
$\Env,\rho\models^\GStrats \phi \until \psi$ if  there exists $m\geq 0$ such that $\Env,\rho(m)\models^\GStrats \psi$, and for all $k<m$, 
we have  $\Env,\rho(k)\models^\GStrats \phi$. 

\end{itemize}
The semantics for ATL given in \cite{ATLJACM} corresponds to the 
instance of this definition with $\GStrats$ equal to the set of perfect recall, perfect information 
%ron4: 
group 
strategies, 
but we focus here on the variant where $\GStrats$ contains just imperfect information strategies.

We argue that the ATL construct $\atlop{G}\phi$ can be expressed in 
%ron5: $\ESL$ as 
$\CTLsK(\Prop,\Ags \cup \strat(\Ags))$ as 
%ron6: for uniformity with what follows ... 
% $$ \neg D_{e} \neg D_{\{e\} \cup \strat(G)}  \phi~.$$  
$$ \neg K_{e} \neg D_{\{e\} \cup \strat(G)}  \phi~.$$ 
Intuitively, here the outer operator 
%ron6: $ \neg D_{e} \neg$ 
$ \neg K_{e} \neg$ 
existentially switches to a 
point that has the same state of the environment as the current state (and hence the same local state
for all agents in $\Ags$), but may have different strategies for any of the agents. 
The inner operator $D_{\{e\} \cup \strat(G)} $ then fixes both the state of the environment and
the strategies selected by the group $G$ but allows all other agents to vary their strategy. 
It quantifies universally over these possibilities. Thus, the formula says that 
the group $G$ has a strategy that achieves $\phi$ from the current state, 
whatever strategy the other agents play. 
%ron5: 
(An alternate way to express the formula using the richer expressive 
power of $\ESL$ is as 
%ron6: $\exists x(D_e(\lid{\strat(G)}{x} \rimp \phi)$.)
$\exists x(K_e(\lid{\strat(G)}{x} \rimp \phi))$.)

More formally, consider the following translation from 
ATL to 
%ron4: $\CTLsK(\Prop,\Ags \cup \strat(\Ags)\cup\{e\})$. 
% (incorporated e into the language) 
$\CTLsK(\Prop,\Ags \cup \strat(\Ags))$. 
For an ATL formula $\phi$, 
we write $\trl{\phi}$ for the translation of $\phi$, defined inductively on the construction of $\phi$ by 
the following rules 
\begin{align*} 
%ron10: ACM stylle seems to eat spacing here, added paren  
%Then we work in the interpreted system generated by the set of strategies $\Sigma(\alpha^0) = \{\alpha^0\}$. & \trl{p} = p ~~~~~~ \trl{(\neg \phi)} = \neg \trl{\phi} ~~~~~~~\trl{(\phi_1\land \phi_2)} = \trl{\phi_1} \land \trl{\phi_2} \\ 
 & \trl{p} = p \\
 &  \trl{(\neg \phi)} = \neg \trl{\phi} \\
 & \trl{(\phi_1\land \phi_2)} = \trl{\phi_1} \land \trl{\phi_2} \\ 
%&  \trl{(K_i\phi)} = K_i \trl{\phi} ~~~~~~\trl{(C_G\phi)} = C_G\trl{\phi} \\ 
%ron8, splitting this into path and state cases
%& \trl{(\atlop{G}\nxt \phi)} = \neg K_e \neg D_{\{e\} \cup \strat(G)}  \nxt \trl{\phi} \\
%& \trl{(\atlop{G}\always \phi)} = \neg K_e \neg D_{\{e\} \cup \strat(G)} \always \trl{\phi} \\
%& \trl{(\atlop{G}\phi_1 \until \phi_2)} = \neg K_e \neg D_{\{e\} \cup \strat(G)}  (\trl{\phi_1} \until \trl{\phi_2})
& \trl{(\atlop{G}\phi)} = \neg K_e \neg D_{\{e\} \cup \strat(G)}  \trl{\phi} \\
& \trl{(\nxt \phi)} = \nxt \trl{\phi} \\ 
& \trl{(\always \phi)} = \always \trl{\phi}\\ 
& \trl{(\phi_1 \until \phi_2)} = \trl{\phi_1} \until \trl{\phi_2} 
\end{align*} 

%ron4: added discussion of this point here 
Note that the semantics of the operators using $\atlop{G}$ quantifies over runs in which the 
agents $G$ run a particular strategy $\sgy_G$, but there is no constraint on the behaviour of the other agents: these are 
not assumed to choose their actions according to any particular strategy. 
A natural alternative to the definition above, would be to use 
%ron8: make this work for all temporal operators using path formulas 
%a clause such as:
the clause 
\begin{itemize} 
\item[]
$\Env,s\models^\GStrats \atlop{G}
%ron8:  
%\always
\phi$ if 
there exists a strategy $\sgy \in \GStrats$ for group $G$ 
such that for all joint strategies 
%ron7: $\beta$ 
$\beta\in \GStrats$ for group $\Ags$ 
with $\beta\restrict G = \sgy$, 
and all paths $\rho$ from $s$ that are consistent with $\beta$, 
%onr8: we have  $\Env,\rho(k)\models^\GStrats \phi$ for all $k \in \nat$. 
we have  $\Env,\rho \models^\GStrats \phi$. 
\end{itemize} 
This variant corresponds more directly to the formula $ \neg D_{e} \neg D_{\{e\} \cup \strat(G)} \always \phi$
than does the ATL semantics. It is reasonable to take the position that it more naturally 
captures a concept of interest in competitive situations where agents are constrained in the 
strategies they are able to use. 

In the original semantics of ATL, where perfect information, perfect recall strategies were considered, 
the two definitions are equivalent, since for any behaviour of the other agents, there is a strategy that matches it. 
However, for the imperfect information, epistemic extension we consider, this does not hold. 
For example, if all strategies in $\GStrats$ are deterministic, then the above variant would not allow paths in 
which some agent in the complement of $G$ chooses an action $a$ at the first occurrence of a state $s$, but some other action $b$ 
at a later occurrence of $s$. On the other hand, such runs are allowed in the ATL semantics given above. 
Since the semantics of $\ESL$ assumes that all runs are generated by all agents running some strategy, 
we need to make some technical assumptions on $\GStrats$ to set up a correspondence with ATL. 

\newcommand{\rand}{\mathit{rand}} 
%ron6: 
Define the ``random'' strategy  for agent $i$ to be the strategy $\rand_i$ defined by $\rand_i(s) = \Acts_i$ for all states $s\in S$. 
Given a strategy $\sgy =\langle \sgy_i\rangle_{i\in G}$ for a group of agents $G$ in an environment $\Env$, 
define the \emph{completion} of the strategy to be the joint strategy $\compl{\sgy}= 
%xiaowei: \langle \sgy'_i\rangle_{i\in G}
\langle \sgy'_i\rangle_{i\in \Ags}
$ with $\sgy'_i = \sgy_i$ for $i\in G$ 
%RON6: and with $\sgy'_i(s) = \Acts_i$ for all $i\in \Ags\setminus G$ and $s\in S$.
and with $\sgy'_i = \rand_i$ for all $i\in \Ags\setminus G$. 
Intuitively, this operation completes the 
group strategy to a joint strategy for all agents, by adding the ``random'' strategy for all agents not in $G$, 
%ron6: 
so that these agents are completely unconstrained in their behaviour. 
Given a set of  strategies $\GStrats$ for groups of agents, 
we define the set of joint strategies $\compl{\GStrats} = \{\compl{\sgy}~|~\sgy\in \Strats\}$. 

%ron9: 
A second technicality is needed that results from the way we have used $\GStrats$ as a parameter in a generalization
of the ATL semantics. A constraint on this set is needed to prove our correspondence result. 
%ron4: here and elsewhere, replaced \sgy_G by \sgy\restrict G when the meaning is restriction 
Say that a set $\GStrats$ of group strategies is {\em restrictable} if
for every $\sgy\in \GStrats$ 
%xiaowei7: remove below
%ron8: restored, else H\subseteq G makes no sense 
for group of agents $G$ 
and every group $H\subseteq G$, the restriction $\sgy\restrict H$ of $\sgy$ to agents in $H$ is also in $\GStrats$. 
%ron6: extendability can be eliminated, I think, see below .. 
%ron10: restored extendability 
Say that $\GStrats$ is \emph{extendable} if for every strategy $\sgy$ for a group $H$ and group $G\supseteq H$, 
there exists a strategy $\sgy'\in \GStrats$  for group $G$ whose restriction $\sgy'\restrict H$ to
$H$ is equal to $\sgy$. 
%ron10: 
Intuitively, restrictability says that group strategies are closed under formation of subgroups, 
and extensibility says that  a group is not able to prevent any other agent from having
\emph{some} strategy that they are able to follow at the same time as the group 
follows its choice of strategy. 

%ron4: + 
The requirement that a set $\GStrats$ of group strategies 
%xiaowei7: fix
%ron8, restored, this is correct grammar 
be 
%are
%xiaowei7: remove below
%both 
restrictable 
 and extendable is  quite mild. 
%is very mild. 
For example, if $\GStrats_i$ is a set of strategies for agent $i$, for each agent $i \in \Ags$, then 
the natural set of ``cartesian product strategies'' 
$$\GStrats = \{ \langle \alpha_i\rangle_{i\in G}~|~G\subseteq \Ags,~\forall i\in G~(\alpha_i \in \GStrats_i)\}$$ 
%ron10: is restrictable. 
is both restrictable and extendable. 
In particular, the set of 
all group strategies, and the set of all locally uniform group strategies, are both restrictable 
and extendable. 
%ron10: added 
Another example of a collection of strategies satisfying this condition is the set of 
group strategies $\alpha$ in which at most $k$ agents follow a strategy that differs 
from a designated ``correct'' strategy $\strat$. Note that this collection is extendable because
an agent always has the option to choose the correct strategy, even if $k$ others have already deviated. 
This collection models a common assumption in the analysis of fault-tolerant distributed algorithms. 

%ron6: + 
A final technicality relates to the fact that 
whereas runs of an environment start at an initial state of the environment, and hence an environment may have unreachable states,  
models in the ATL semantics lack a notion of initial state, and formulas may be evaluated at any state. 
As already noted above, we  resolve this difference by viewing ATL models as environments in which 
all states are initial (hence reachable). 

The following result now captures in a precise way that the ATL semantics can be expressed in our logic
as claimed above, provided we allow joint strategies in which some agents run the random strategy.

\begin{theorem} \label{thm:atl}
For every environment $\Env$ in which all states are initial, for every
nonempty set of group strategies $\GStrats$ that is restrictable
%ron7: and extendable, 
%ron10: 
and extendable, 
for every
state $s$ of $\Env$ and ATL formula $\phi$, 
we have  $\Env, s \models^\GStrats \phi$ iff for all (equivalently, some)  points $(r,m)$ of  
$\I(\Env,\compl{\GStrats}) $ with $r_e(m) = s$ we have $\I(\Env,\compl{\GStrats}), (r,m)\models \trl{\phi}$.  
\end{theorem} 

\begin{proof} 
For brevity, we write just $\I$ for $\I(\Env,\compl{\GStrats}) $.
For the claim that the quantifiers ``for all'' and ``some'' are interchangeable in the 
right hand side, note that formulas of the form $\trl{\phi}$ are boolean combinations 
of atomic 
%ron10: formulas 
propositions 
and formulas of the form 
%ron4: $K_i \psi$ and 
$K_e\psi$, whose semantics  at 
a point $(r,m)$ depends only on $r_e(m)$. 
%ron4: [[added a comment on this to the proof below for ATEL ]] 
%(Recall that, by construction in Section~\ref{sec:generatedIS}, $r_e(m) = r'_e(m')$ implies $r_i(m) = r'_i(m')$.) 
This gives the implication from the ``some'' case to the 
``for all'' case. For the implication from the ``for all'' case to the ``some'' case, note that the 
``for all'' case is never trivial because for all states $s$ of $\Env$, there exists 
a point $(r,m)$ of $\I$ with $r_e(m) = s$. This follows from the fact that all states are initial in 
$\Env$ and that the transition relation is serial, so that any group strategy $\sgy$ in $\GStrats$ is consistent with  
an infinite path from initial state $s$.  This corresponds to a run $r$ with $r(0) = (s,\compl{\sgy})$. 

It therefore suffices to show that $\Env, s \models^\GStrats \phi$ iff for all  points $(r,m)$ of  
$\I$ with $r_e(m) = s$ we have $\I, (r,m)\models \trl{\phi}$.  
%ron8: 
Additionally, for path subformulas $\phi$ of the  form $\nxt \psi$, $\always \psi$ and $\psi_1 \until \psi_2$ of ATL formulas, we show that 
for all paths $\rho$, we have $\Env, \rho \models^\GStrats \phi$ iff for all  points $(r,m)$ of  
$\I$ with $r_e[m \ldots] = \rho$ we have $\I, (r,m)\models \trl{\phi}$

We proceed by induction on the construction of $\phi$. 
The base case of atomic propositions, as well as the cases for the boolean constructs,  
are trivial. 
%ron8: 
The claim concerning path formulas is also straightforward from the semantics of the 
temporal operators and, inductively, the claim concerning state formulas. 

%ron8: lots of reworking follows, not all marked up to make the result generic to all path formulas
%We consider next the case of  $\phi = \atlop{G}\nxt \psi$. (The arguments for the  formulas $\atlop{G}\always \psi$ and $\atlop{G} \psi_1 \until \psi_2$ are analogous and left to the reader.)
We consider next the case of  $\phi = \atlop{G} \psi$. 
We show that  $\Env, s \models^\GStrats \atlop{G}\psi$ iff 
for all  points $(r,m)$ of  $\I$ with $r_e(m) = s$ we have $\I, (r,m)\models \neg K_e \neg D_{\{e\} \cup \strat(G)}  \trl{\phi}$. 

%ron4: lots of reworking in this para to clean up the presentation 
Suppose first $\Env, s \models^\GStrats \atlop{G} \psi$. 
Let $(r,m)$ be a point of $\I$ with $r_e(m) =s$. 
We show that 
%ron4: many following \psi were \phi, which is not right -- corrected 
%$\I, (r,m)\models \neg K_e \neg D_{\{e\} \cup \strat(G)}  \nxt \trl{\phi}$. 
$\I, (r,m)\models \neg K_e \neg D_{\{e\} \cup \strat(G)}   \trl{\psi}$. 
By the ATL semantics, there exists a strategy $\sgy_G\in \GStrats$ for group $G$ such that for all paths $\rho$ of $\Env$ from $s$ that are consistent with $\sgy_G$ we have 
$\Env, \rho \models^\GStrats \psi$. Let $\sgy = \compl{\sgy_G}$ (note that this is in $ \compl{\GStrats}$) 
and 
%ron4: 
(using the fact that all states are initial)
let $r'$ be a run of $\I$ with 
%ron4: $r'(0) = (s,\sigma)$.  
$r'(0) = (s,\sgy)$.  
Because 
%ron4: $(r,m) \sim_e (r'm')$,  .. (fixed here and elsewhere) 
%ron9: $(r,m) \sim_e (r',0)$,  it suffices 
$r_e(m) = s = r'_e(0)$, we have $(r,m) \sim_e (r',0)$,  and it suffices 
to show that $\I, (r',0)\models  D_{\{e\} \cup \strat(G)}  \trl{\psi}$. 
For this, suppose that $(r'',m'')$ is 
%xiaowei7: a -> any
any 
point of $\I$ 
with $(r',0) \sim_{\{e\} \cup \strat(G)} (r'',m'')$. 
We show that $\I, (r'',m'')\models   \trl{\psi}$. 
Now $r''(m'') = (t, \sgy')$ implies that $\sgy'_i = \sgy_i$ for all $i \in G$. 
Thus, the path $\rho = r''_e(m'') r''_e(m''+1) \ldots$ in $\Env$ is consistent with $\sgy_G$, 
and 
%xiaowei7: $\rho(0) = r''_e(m'') = r'_e(m') = r_e(m) = s$
%ron8: $\rho(0) = r''_e(m'') = r'_e(0) = r_e(m) = s$. 
$\rho(0) = r''_e(m'') = r'_e(0)  = s$. 
It follows that 
$\Env, \rho  \models^\GStrats \psi$.
Using the induction hypothesis, it follows that 
$\I, (r'',m'')\models   \trl{\psi}$. 
This completes the argument that $\I, (r',0)\models  D_{\{e\} \cup \strat(G)}  \trl{\psi}$.

Conversely, suppose that  for all  points $(r,m)$ of  $\I$ with $r_e(m) = s$ we have $\I, (r,m)\models \neg K_e \neg D_{\{e\} \cup \strat(G)}  \trl{\psi}$. 
We show that $\Env, s \models^\GStrats \atlop{G} \psi$. 
Using the fact that all states are initial, let $r$ be a run of $\I$ with  $r_e(0) = s$, and 
hence $\I, (r,0)\models \neg K_e \neg D_{\{e\} \cup \strat(G)}  \trl{\psi}$.  
Then there exists a point $(r',m')$ of $\I$ such that $r'_e(m')  = s$ and 
$\I, (r',m')\models  D_{\{e\} \cup \strat(G)}  \trl{\psi}$. 
Let $r'(m') = (s,\sgy)$. Then 
%xiaowei7: replace with the following statement 
there exists a strategy $\beta\in \GStrats$ for some set of agents $G'$ such that 
$\sgy = \compl{\beta}\in  \compl{\GStrats}$. 
%we have $\sgy \in \compl{\GStrats}$. 
%By definition, there exists a strategy $\beta\in \GStrats$ for some set of agents $G'$ such that 
%$\sgy = \compl{\beta}$.  
Let $H = G\cap G'$. By restrictability, 
we have $\beta\restrict H \in \GStrats$. 
%ron10:
By extendability, there exists a strategy $\gamma \in \GStrats$ for group $G$ 
such that $\gamma \restrict H = \beta \restrict H$.  
Taking $\sgy' = \compl{\beta\restrict H}$, it follows that $\sgy' \in \compl{\GStrats}$. 
Note that $\sgy' \restrict G = \sgy\restrict G$ and $\sgy'_i = \rand_i$ for $i \in \Ags \setminus G$.  
%ron10: 
In particular, $\sgy'_i = \rand_i$ for $i \in G \setminus  H$. Thus, 
any path consistent with $\gamma$ is consistent with $\sgy'$.  

To prove that $\Env, s \models^\GStrats \atlop{G} \psi$, we show that 
for every path $\rho$ of $\Env$ from $s$ consistent with 
%ron10: $\alpha\restrict G$, 
$\gamma$, 
we have $\Env, \rho   \models^\GStrats   \psi$. 
For this, let $\rho$ be a path from $s$ consistent with the strategy 
%ron10: $\sgy\restrict G$ 
$\gamma$ 
for group $G$.
By the conclusion of the previous paragraph, $\rho$ is consistent with the joint strategy $\sgy'$ for all agents. 
Since $s$ is an initial state of $\Env$, there exists a run $r''$ of 
$\I$ with $r''(0) = (s, \sgy')$ and $r''_e[0\ldots \infty] = \rho$. 
Moreover, $(r',m') \sim_{\{e\} \cup \strat(G)} (r'',0)$.
Thus, we obtain from $\I, (r',m')\models  D_{\{e\} \cup \strat(G)}  \trl{\psi}$
that $\I, (r'',0)\models  \trl{\psi}$. 
By the induction hypothesis, 
we obtain that \mbox{$\Env, \rho \models^\GStrats \psi$}. 
\end{proof}

The $\ESL$ interpretation unpacks the alternating double quantification in the semantics of $\atlop{G}\phi$. 
$\ESL$ offers the advantage of being able to express notions that are not expressible in 
ATL. For example,
%ron6: + 
under assumptions similar to those of Theorem~\ref{thm:atel}, 
$$ \neg D_{e} \neg (( \neg D_{\{e\} \cup \strat(\Ags)} \neg \sometimes p) \land (D_{\{e\} \cup \strat(G)} \always q)) $$ 
says that, from the current state, there is a joint strategy for all agents, such that,
some runs of this joint strategy  satisfy $\sometimes p$, and group $G$'s strategy alone suffices to ensure that 
$\always q$. 

%ron8: The reviewers may still ask for this, but I don't want to spend time on this now. 
%[[ 
%FIXME: 
%%ron6: seems OK 
%%is the claim correct? 
%what is the argument that this formula is not expressible?  Is it expressible in any of the ATL extensions  \cite{ATLJACM}, e.g in Game Logic, 
%or in strategy logic? Another example that can be shown to be inexpressible would be OK also.  It would be good to have an example somewhat like this that naturally fits some application.
%]] 

There has been discussion in the literature on ATL about whether strategies should be
\emph{revocable} or \emph{irrevocable}. Consider a formula such as 
$$ \atlop{A} \always ( p \land \atlop{A,B} \sometimes q)~.$$ 
This says that $A$ has a strategy that ensures that it is always the case both that $p$ holds, 
and that $A$ and $B$ together have a strategy that ensures that eventually $q$. 
Under the ATL semantics, the strategy of $A$ used to satisfy the inner 
formula  $\atlop{A,B} \sometimes q$ is allowed to be different from the strategy 
%ron6: + 
of $A$ 
referred
to by the outer operator. That is, to satisfy the inner formula, $A$ is allowed
to \emph{revoke} the strategy selected by the outer operator. 

This aspect of the ATL semantics has been questioned 
\cite{AgotnesGJ07}, and it has been proposed that the semantics of the formula
$\atlop{G} \phi$ should be defined so that it fixes the strategies of agents in 
the group $G$ and does not allow these to be varied in interpreting operators in the 
formula $\phi$. In such a semantics, the strategy choices are \emph{irrevocable}. 
Using our framework, both revocable and irrevocable interpretations of the formula can be represented. 
%ron6: moved this here ..
%The two formulas 
We show this with two formulas that 
are almost identical, with the point of difference indicated by use of boldface.
The interpretation allowing strategy revocation would be captured
by translating both operators as described above, yielding the formula 
%ron6: added missing final \sometimes in the next two formulas 
$$ \neg D_{e} \neg D_{\{e,\strat(A)\} }\,( \always p \land  \neg \mathbf{D_{e}} \neg D_{\{e,\strat(A),\strat(B)\} } \,\sometimes q)~.$$ 
%ron6: added some explanation 
Note that here the outer operator prefix $\neg D_{e} \neg D_{\{e,\strat(A)\} }$ selects a strategy for $A$ and plays it against 
all strategies of the other agents, and because the operator $\mathbf{D_{e}}$ allows all agent's strategies to vary, 
the inner operator prefix $\neg \mathbf{D_{e}} \neg D_{\{e,\strat(A),\strat(B)\} }$ 
drops the selected strategy of $A$, and selects a fresh strategy for $A$ and $B$ together to play against all strategies of other agents. 
On the other hand, we can force the strategy of agent $A$ to remain fixed
in the inner choice of strategies by  means of the formula
$$ \neg D_{e} \neg D_{\{e,\strat(A)\} }\,( \always p \land  \neg \mathbf{D_{\{e,\strat(A)\}}}\, \neg D_{\{e,\strat(A),\strat(B)\} } \, \sometimes q)~.$$  
%ron6: 
Note that the inner operator $\mathbf{D_{\{e,\strat(A)\}}}$ varies all agent's strategies, except that of $A$. 
Evidently, at any point in a nested formula, our approach gives us the freedom
to choose which players' strategies we wish to vary and which  to fix. 

%ron11: fresh para
 
%
%ron11: An example of 
A logic with 
revocable strategies is presented in
Brihaye et al. \cite{BCLM2009},  which 
%it 
considers the extension of ATL with strategy context, or $ATL_{sc}$. 
%ron11: What you had is not quite the same as in the paper, so better to leave this a bit imprecise and more intuitively explained. 
% Given a  formula $\phi$, a path $\rho$, a time $m$, and a current strategy context $\strat_G \in \GStrats_{G}$ for some $G\subseteq \Ags$, we have 
%\begin{itemize}
%\item $E, \rho,m\models_{\strat_G} \cdot\rangle H \langle\cdot\phi \text{ iff } E, \rho,m\models_{\strat_{G\setminus H}} \phi $
%\item $E, \rho,m\models_{\strat_G} \langle\cdot H \cdot\rangle\phi \text{ iff } \exists \strat_{H} \in \GStrats_{H} \forall \rho'\in Out(\rho[0..m], \strat_H\circ \strat_G): E, \rho',m\models_{\strat_H\circ \strat_G}\phi $
%\end{itemize}
%where $\strat_H\circ \strat_G$ is a strategy for $G\cup H$ such that agents in $H\cap G$ takes the updated strategy in $\strat_H$, and $Out(\rho[0..m], \strat_H\circ \strat_G)$ is a set of infinite paths such that they have prefix $\rho[0..m]$ and are consistent with the strategy $\strat_H\circ \strat_G$. 
Formulas are evaluated with respect to a context which is a group strategy $\gamma_G$ for some group $G$. 
The logic has modalities $\cdot\rangle H \langle\cdot\phi$, and $ \langle\cdot H \cdot\rangle\phi$.  
Intuitively,  $\cdot\rangle H \langle\cdot\phi$ reduces the context group $G$ to $G\setminus H$ by restricting $\gamma_G$ to $G \setminus H$. 
The modality $ \langle\cdot H \cdot\rangle\phi$ selects a new group strategy $\gamma_H$ for group $H$, 
and constructs the new context $\gamma_H\circ \gamma_G$ for group $G \cup H$ in which agents $i$ in $H$ play $\gamma_H(i)$, 
and agents $i$ in $G\setminus H$ play $\strat_G(i)$. The formula $\phi$ is then evaluated with respect to context 
 $\gamma_H\circ \gamma_G$ in all 
runs in which $G \cup H$ plays $\gamma_H\circ \gamma_G$ against an arbitrary behaviour of all other agents. 

%ron11: This was not quite right since it is mysterious where the G in the translation comes from! Explain that ..
Evaluation of formulas commences with respect to the empty context, so each subformula 
is evaluated with respect to a context  for a group $G$ that can be determined from the 
operators on the path from the root to that subformula. 
%It is not hard to see that 
This means that to represent a formula $\phi$ of $ATL_{sc}$, we need to translate it with respect to 
a group $G$; we write the translation as $\phi^G$. 
Roughly, with respect to a context for group $G$, 
the formula $\langle\cdot H \cdot\rangle\phi$ can then be expressed with our logic as 
 $$ 
 %ron11: track the context in the translation: 
 (~\langle\cdot H \cdot\rangle\phi ~)^{G} = 
  \neg D_{\{e,\strat(G\setminus H)\}}\, \neg 
 %ron11: D_{\{e,\strat(G),\strat(H)\} } \, \phi $$ 
 D_{\{e,\strat(G\cup H)\} } \, 
 % \phi 
 \phi^{G\cup H}
 $$ 
and  the formula $\cdot\rangle H \langle\cdot\phi$ can be expressed as 
 $$  (~ \cdot\rangle H \langle\cdot\phi ~ )^G = 
 %ron11: I think this is overly complex! 
% \neg D_{\{e,\strat(G\setminus H)\}}\, \neg D_{\{e,\strat(G\setminus H)\} } \, \phi $$
 \phi^{G\setminus H}~.
 $$

However, we note that the semantics in \cite{BCLM2009} is based on perfect recall. This explains that the complexity of model checking $ATL_{sc}$ is non-elementary, while the complexity of model checking our logic \ESL\ is EXPSPACE-complete (Theorem~\ref{mcESLupper} and Theorem~\ref{mcESLlower}).

Another work by van der Hoek, Jamroga and Wooldridge 
\cite{CATL} introduces constants that refer to strategies, and adds to ATL a new 
 (counterfactual) modality $C_i(c,\phi)$, with the intended reading ``if it were the case that agent $i$ committed to the strategy denoted by 
 $c$, then $\phi$''. 
%ron6: FIXME: CHECK THIS 
 (The meaning of $c$ is bound in the semantic context, and the logic does not allow quantification over $c$.)
 The formula $\phi$ here is not permitted to contain further references to agent $i$ strategies. 
 To interpret the formula $C_i(c,\phi)$ in an environment  $E$, the environment is first updated to a new environment $E'$ by removing all transitions that
 are inconsistent with agent $i$ running the strategy referred to by $c$, and then the formula $\phi$ is evaluated in $E'$. 
%ron4: After introducing propositional constants $p_{i,c}$ that say that agent $i$ is running the strategy referred to by $c$, 
In $\ESL$, the assertion that $i$ is running a particular strategy can be made by the formula 
%ron6: $\lid{i}{x}$, 
$\lid{\strat(i)}{x}$, 
where $x$ is taken to denote a 
global state in which the local component $\strat(i)$ denotes the strategy denoted by $c$. 
The formula $C_i(c,\phi)$ can then be expressed in our framework as 
%ron3: display 
$$D_{\{e\} \cup \sigma(\Ags \setminus \{i\})}( 
%ron4: p_{i,c} 
%ron6: \lid{i}{x}
\lid{\strat(i)}{x}
\rimp  \phi^{+\strat(i)})$$ 
where in the translation $\phi^{+\strat(i)}$ of  $\phi$  we ensure that there is no further deviation from the strategy of agent $i$ 
by adding $\sigma(i)$ to the group of every  knowledge operator occurring later in the translation. 
%ron6: +
We remark that because it deletes information from the transition relation, strategy choices made by the construct  $C_i(c,\phi)$ 
are irrevocable, whereas our logic is richer in that it allows revocation of the corresponding choices. 

%[[FIXME: Xiaowei to look into proof for the last para. ]]

\subsection{Connections to variants of ATEL} \label{sec:atelvarn}

Alternating temporal epistemic logic (ATEL)
%xiaowei: ,  
adds 
epistemic operators to ATL \cite{ATEL}. 
As a number of subtleties arise in the formulation of such logics, several variants of ATEL have since
been developed. In this section, we consider a number of such variants 
and argue that our framework is able to express the main strategic concepts 
from these variants. We begin by recalling ATEL as defined in \cite{ATEL}.
%ron4
%Various modellings of the environments in which agents operate
%have been used in the literature; we base our modelling on the notion 
%of environment introduced above. 

The syntax of \ATEL\ is given as follows: 
$$\phi \equiv p ~|~\neg \phi~|~\phi_1 \lor \phi_2~|~\atlop{G}\nxt\phi ~|~~\atlop{G}\always\phi ~|~\atlop{G}(\phi_1 \until \phi_2) ~|~K_i\phi~|~D_G\phi~| ~C_G\phi
$$
where $p \in \Prop$, $i \in \Ags$ and $G\subseteq \Ags$. 
%ron6: +
This just adds the operators $K_i, D_G$ and $C_G$ to the syntax for ATL given above. 
As usual, we may define $E_G\phi$ as $\bigwedge_{i\in G} K_i \phi$. 
 The intuitive meaning of the constructs is as in \CTLsK\ above, with additionally, 
$\atlop{G}\phi$ having the intuitive reading that group $G$ has a strategy for assuring that $\phi$ holds.

%The semantics is given by a relation $\Env, s\models^\GStrats \phi$, where $\Env =  \langle S, I, \Acts, \trans, \{O_i\}_{i\in \Ags}, \pi\rangle$ is an environment, $s\in S$ is a state of $\Env$, 
%and $\phi$ is a formula. For reasons discussed below, we parameterize the definition on a set $\Strats$ of strategies for groups of agents in the environment $E$.
%For the definition, we need the notion of a path in $\Env$: this is a
%function $\rho: \nat \rightarrow S$ such that for all $k \in \nat$ there exists a joint action $a$ with $(\rho(k), a, \rho(k+1)) \in \trans$. 
%A path $\rho$ is \emph{from} a state $s$ if $\rho(0) = s$. 
%A path $\rho$ is {\em consistent}  with a strategy $\sgy$ for a group $G$ if for all   $k \in \nat$ 
%there exists a joint action $a$ such that $(\rho(k), a, \rho(k+1)) \in \trans$ and  $a_i \in \sgy_i(\rho(k))$ 
%for all $i \in G$. 

%ron4: here, just the knowledge operators 
The relation $\Env,s\models^\GStrats \phi$ is extended from ATL to ATEL by adding the following 
clauses to the inductive definition: 
\begin{itemize}

\item
$\Env,s\models^\GStrats K_i\phi$ if  $\Env,t \models^\GStrats \phi$
 for all $t\in S$ with $t\sim_i s$; 
%ron3: moved up 
%where $t\sim_i s$ if and only if $O_i(t)=O_i(s)$; 

\item
$\Env,s\models^\GStrats D_G\phi$ if  $\Env,t \models^\GStrats \phi$, for all $t\in S$ with $(s,t) \in  \bigcap_{i\in G}\sim_i$.

\item
$\Env,s\models^\GStrats C_G\phi$ if  $\Env,t \models^\GStrats \phi$
for  all $t\in S$ with $(s,t) \in (\cup_{i\in G} \sim_i)^*$; 
\end{itemize}
where we define, for each $i \in \Ags$, the equivalence relation $\sim_i$ on states $S$, 
by $s\sim_i t$ if and only if $O_i(s)=O_i(t)$.

\newcommand{\detc}{\mathit{det}}  

The specific version of ATEL defined in \cite{ATEL} is obtained from the above definitions 
by taking 
%ron8: $\Strats = \Strats^\detc = 
$\GStrats = 
%ron10: \Strats^\detc(E) = 
\{\sigma_G~|~G\subseteq \Ags, ~\sigma_G~\text{a deterministic $G$-strategy in}~ \Env\}$. 
That is, following the definitions for ATL, this version works with arbitrary deterministic group strategies, in which 
an agent selects its action 
%xiaowei: as if it has 
as if it had full information of the state. 
This aspect of the definition has been 
%xiaowei: critiqued
criticized 
by Jonker \cite{Jonker2003}  and (in the case of ATL without epistemic operators) by Schobbens \cite{Schobbens2004}, 
who argue that this choice is not in the spirit of the epistemic extension, in which observations
are intended precisely to represent that 
agents do not have full information of the state. They propose that the definition instead 
be based on the set 
$
%ron8: \Strats^{\det,\unif} 
\GStrats 
%ron10: = \Strats^{\det,\unif}(E) 
= \{\sigma_G~|~G\subseteq \Ags, ~\sigma_G~\text{a locally uniform deterministic $G$-strategy in}~ \Env\}$. 
This ensures that in choosing an action, agents are able to use only the information available in their observations. 

We concur that the use of locally uniform strategies is the more appropriate choice, but in either event, 
we now argue that our approach using strategy space is able to express everything
that can be expressed in ATEL. 
%ron4: Consider the following translation from 
We may extend the translation into our logic given above from ATL to ATEL, by adding 
%ATEL to $\CTLsK(\Prop,\Ags \cup \strat(\Ags)\cup\{e\})$. For a formula $\phi$, 
%we write $\trl{\phi}$ for the translation of $\phi$, defined inductively on the construction of $\phi$ by 
the following rules:
%ron6: D_G was lost ... restoring 
%\begin{align*} 
%%& \trl{p} = p ~~~~~~ \trl{(\neg \phi)} = \neg \trl{\phi} ~~~~~~~\trl{(\phi_1\land \phi_2)} = \trl{\phi_1} \land \trl{\phi_2} \\ 
%&  \trl{(K_i\phi)} = K_i \trl{\phi} ~~~~~~\trl{(C_G\phi)} = C_G\trl{\phi} 
%%& \trl{(\atlop{G}\nxt \phi)} = \neg K_e \neg D_{\{e\} \cup \strat(G)}  \nxt \trl{\phi} \\
%%& \trl{(\atlop{G}\always \phi)} = \neg K_e \neg D_{\{e\} \cup \strat(G)} \always \trl{\phi} \\
%%& \trl{(\atlop{G}\phi_1 \until \phi_2)} = \neg K_e \neg D_{\{e\} \cup \strat(G)}  (\trl{\phi_1} \until \trl{\phi_2})
%\end{align*} 
\begin{align*} 
&  \trl{(K_i\phi)} = K_i \trl{\phi} ~~~~~~\trl{(D_G\phi)} = D_G \trl{\phi}~~~~~~~\trl{(C_G\phi)} = C_G\trl{\phi} 
\end{align*}

%ron4: this part reworked
In order to obtain a correspondence with ATEL, which does not have a notion of initial states, 
we again work with environments in which all states are initial. 
%ron6: just cut this ...
%[[FIXME: the following remark may be worth extending to an example.]] 
%(It is worth noting 
%a subtlety that would arise were we to allow some states to be non-initial: 
%there could be some states that are unreachable from initial states via temporal transitions, but still 
%taken as possible situations in the semantics of the knowledge operators.) 
The following result shows that Theorem~\ref{thm:atl} extends from ATL to the logic ATEL. 

%ron: I have removed uses of \top, by treating r(m) as (s,\alpha) and allowing this to be interpreted as a tuple with (s,\alpha)_i = O_i(s)
%We use one notational convenience below: we write $\top$ as an abbreviation for the group of agents $\{e\}\cup \strat{\Ags}$,
%and assume that $\top$ is a new symbol not in $\{e\}\cup \strat{\Ags}$. 
%For a run $r$, time $m$, state $s$ of $\Env$ and joint strategy $\sgy$ for $\Ags$, we 
%write $r_\top(m) = (s,\sgy)$ when $r_e(m) =s $ and $r_{\strat(i)}(m) = \alpha_i$ for all $i \in \Ags$. 

\begin{theorem} \label{thm:atel}
For every environment $\Env$ 
%ron4: 
in which all states are initial,  for every 
nonempty set of group strategies 
%ron6: $\Strats$ 
$\GStrats$ 
that is restrictable,
%ron6: and extendable, 
for every state $s$ of $\Env$ and ATEL formula $\phi$, 
we have  $\Env, s \models^\GStrats \phi$ iff for all (equivalently, some)  points $(r,m)$ of  
$\I(\Env,\compl{\GStrats}) $ with $r_e(m) = s$ we have $\I(\Env,\compl{\GStrats}), (r,m)\models \trl{\phi}$.  
\end{theorem}

\begin{proof} 
%ron4: 
The proof extends the proof of Theorem~\ref{thm:atl}. 
The argument for the equivalence of the universal and existential quantifications in the 
right hand side 
%ron6: + 
of the ``iff'' 
continues to apply, even though the translation now contains formulas of the form $K_i\phi$, 
because, by construction in Section~\ref{sec:generatedIS}, $r_e(m) = r'_e(m')$ implies $r_i(m) = r'_i(m')$. 
The remainder of the proof extends the inductive argument. 

Consider $\phi = K_i \psi$. 
Then 
%xiaowei: $ \trl{(K_i\phi)} = K_i \trl{\phi}$. 
$ \trl{(K_i\psi)} = K_i \trl{\psi}$. 
We suppose first that $\Env, s \models^\GStrats \phi$  
and show that for all  points $(r,m)$ of  
$\I$ with $r_e(m) = s$ we have $\I, (r,m)\models \trl{\phi}$, 
i.e., $\I, (r,m)\models K_i \trl{\psi}$.  
Let $(r,m)$ be a point of  $\I$ with $r_e(m) = s$. 
We need to show that for all points $(r',m')$ of $\I$ 
with $(r,m) \sim_i (r',m')$ we have $\I, (r',m')\models \trl{\psi}$. 
But if $(r,m) \sim_i (r',m')$ then $r'_e(m') \sim_i r_e(m)=s$ in $\Env$. 
Thus, from $\Env, s \models^\GStrats K_i \psi$ it follows that 
$\Env, r'_e(m')  \models^\GStrats \psi$. By the induction hypothesis, 
we obtain that $\I, (r',m')\models \trl{\psi}$, as required. 

Conversely, suppose that  for all  points $(r,m)$ of  
$\I$ with $r_e(m) = s$ we have  $\I, (r,m)\models K_i \trl{\psi}$.  
We show that $\Env, s \models^\GStrats K_i \psi$. 
Let $t$ be any state of $\Env$ with $s\sim_i t$. We have to show $\Env, t \models^\GStrats \psi$. 
First, since $s$ is an initial state of $\Env$, there exists a run $r$ of $\I$ with $r_e(0) = s$, 
and joint strategy equal to any strategy in $\compl{\GStrats}$, so we take $m=0$, 
and we have $\I, (r,m)\models K_i \trl{\psi}$.  
Then for all points $(r',m')$ of $\I$ with 
$r'_e(m') =t$, we have $(r,0) \sim_i (r',m')$, from which it follows that $\I, (r',m')\models \trl{\psi}$.  
By the induction hypothesis, we have $\Env,t \models \psi$, as required. 
This completes the proof for the case of $\phi = K_i \psi$. The argument for the distributed and common knowledge operators 
is similar, and left to the reader. 

\end{proof}

We remark that our translation maps ATEL into 
%ron5: $\CTLK(\Ags\cup \strat(\Ags) \cup \{e\}, \Prop)$ that 
%xiaowei9: $\CTLK(\Ags^+\cup \strat(\Ags), \Prop)$, 
$\CTLK(\Ags\cup \strat(\Ags), \Prop)$, 
%ron7: swap to fix overfull line 
the fragment 
 that 
%ron5: we have shown
that we show in Theorem~\ref{mcCTLsK} below 
to have PSPACE-complete model checking complexity.  
This strongly suggests that this fragment  has a strictly
stronger expressive power than ATEL, since the complexity of model checking 
%ron3: ATEL logic, 
ATEL, 
assuming uniform strategies,  is known to be $P^{NP}$-complete. 
(The class $P^{NP}$ consists of problems solvable by PTIME computations with access to an NP oracle.) 
For ATEL, model checking can be done with  
a polynomial time (with respect to the size of formula) computation with  access to an oracle that is in NP with respect to both the number of states and the number of joint actions. 
In particular,  \cite{Schobbens2004} proves this upper bound and \cite{JD2006} proves  a matching lower bound.

Similar  translation results can be given for other alternating temporal epistemic logics from the literature. 
We sketch a few of these translations here. 

%ron4: improved this para witha  bit a restructuring 
Jamroga and van der Hoek \cite{JvdH2004} 
%ron12: 
discuss issue of \emph{de dicto} and \emph{de re} interpretations of ATEL formulas. They 
consider the formula $K_i\atlop{i} \phi$. 
%ron8: +
(Note that here $\phi$ is a path formula). 
The ATEL semantics states that for an environment $E$ and a state 
$s$, we have $E, s\models K_i\atlop{i} \phi$ when in every state $t$ consistent with agent $i$'s knowledge, some strategy for agent $i$, depending on $t$, is guaranteed 
to satisfy $\phi$. This is consistent with there being no  \emph{single} strategy for agent $i$ that agent $i$ knows will work to achieve $\phi$
in all such states $t$. To express that a single strategy is known to guarantee $\phi$, 
they formulate a general construct $\atlop{G}^\bullet_{\mathcal{ K}(H)}\phi$ that says, effectively, that there is a strategy  for a group $G$ that another group $H$
knows (for notion of group knowledge $\mathcal{ K}$) to achieve goal $\phi$. 
%ron8: 
(Here again, $\phi$ is a path formula.)
The notion of group knowledge $\mathcal{ K}$ could be $E$ for everyone knows, $D$ for  distributed knowledge, or $C$ for common knowledge. 
More precisely%
%ron7: + 
\footnote{As above, we have generalized the definition to be relative to a set of group strategies $\GStrats$.
The strategies used in \cite{JvdH2004} are imperfect information, perfect recall strategies; we formulate the definition here with imperfect information, imperfect recall strategies.}, 
\begin{align*}
%ron7: E, s\models 
E, s\models^\GStrats 
\atlop{G}^\bullet_{\mathcal{ K}(H)}\phi 
&~~~ \parbox[t]{3.5in}{if there exists a 
%xiaowei: add 
locally 
uniform 
%xiaowei: add 
%ron7: joint strategy $\alpha$ .
%%   ..... (I am using "joint'' only for \Ags ) 
group strategy $\alpha\in \GStrats$ 
for group $G$
such that for all states $t$ with $s \sim_H^\mathcal{ K} t$, 
%ron7: we have that  all paths $\rho$ from $t$ that are consistent with $\alpha$ satisfy  $\phi$.
and  for all paths $\rho$ from $t$ that are consistent with $\alpha$, 
we have that $E,\rho \models^\GStrats \phi$.
} 
\end{align*} 
Here $\sim^\mathcal{ K}_H$ is the appropriate epistemic indistinguishability relation on states of $E$. 
%ron8: have now emphasized earlier that $\phi$ is a path formula, so cut .. 
%The semantics also assumes that the formula $\phi$ has a semantics on paths. 
The particular case $\atlop{G}^\bullet_{E(G)}\phi$ is also proposed as the semantics for the ATL construct $\atlop{G} \phi$ in 
\cite{Schobbens2004,Jonker2003,JA07}.  

The construct $\atlop{G}^\bullet_{D(H)}\phi$ can be represented in the 
%ron7: $\CTLK(\Ags\cup \strat(\Ags) \cup \{e\}, \Prop)$
%xiaowei9: $\CTLK(\Agse \cup \strat(\Ags) , \Prop)$
$\CTLK(\Ags \cup \strat(\Ags) , \Prop)$
fragment of $\ESL$  as 
%xiaowei: use K_e to replace D_e in  $$ \neg D_{e} \neg D_{H\cup\strat(G)} \phi~.$$
$$ \neg K_{e} \neg D_{H\cup\strat(G)} \phi~.$$
 Intuitively, here the first modal operator 
 %xiaowei: $\neg D_e\neg $ 
 $\neg K_e\neg $ 
 switches the strategy of all the agents while maintaining the state $s$,
 thereby selecting a  strategy  $\alpha$ for group $G$ in particular, 
 and the next operator 
 %ron3: $ D_{\{H, \strat(G)\}}$ 
 $ D_{H \cup \strat(G)}$ 
 verifies that the group $H$ knows  that the strategy 
 %ron3: $G$ 
 being 
 used by group $G$ guarantees $\phi$.  Similarly, 
 $\atlop{G}^\bullet_{E(H)}\phi$ can be represented as 
%xiaowei: $$ \neg D_{e} \neg \bigwedge_{i\in H} D_{\{i\}\cup\strat(G)} \phi~.$$
$$ \neg K_{e} \neg \bigwedge_{i\in H} D_{\{i\}\cup\strat(G)} \phi~.$$
%ron7: + 
The precise statement and  proof of these correspondences is  similar to that in Theorem~\ref{thm:atel}.

%ron7: 
%[[ FIXME: I think a proof is needed for the above claims. How are they affected 
%when not all states are initial, if there are 
%states $s$ reachable by $D_H$ but not by $D_{H\cup \strat(G)}$, because 
%no $\strat(G)$ run contains $s$? ]] 

%ron7: Completely reworked this para 
In the case of the construct $\atlop{G}^\bullet_{C(H)}\phi$, the definition involves the 
common knowledge that a group $H$ of agents would have if they knew a particular strategy being used by another group $G$. 
By analogy with the above cases, one might expect this to be expressible using the formula 
 $\neg K_e \neg C_{H \cup \strat(G)} \phi$. However, this does not give the intended meaning.
Note that the semantics of the formula $C_{H \cup \strat(G)} \phi$  quantifies over points $(r',m')$ 
reachable through chains $(r,m)=(r_0,m_0) \sim_{i_1} (r_1,m_1)  \sim_{i_2} \ldots \sim_{i_n} (r_n,m_n)=(r',m')$, 
where each $i_j$ is in the set $H\cup \strat(G)$. But this loses the connection to common knowledge of group $H$
and fails to fix the strategy of group $G$. Instead, what we would need to capture is chains of the form 
$(r_0,m_0) \sim_{\{i_1\}\cup \strat(H)} (r_1,m_1)  \sim_{\{i_2\}\cup \strat(H)}  \ldots \sim_{\{i_n\}\cup \strat(H)}  (r_n,m_n)=(r',m')$, 
where each $i_j$ is in the set $G$.  
For this, it appears we need to be able to express the greatest fixpoint $X$ of the equation $ X \equiv \bigwedge_{i\in G} D_{\{i\}\cup \strat(H)} (X \land \phi)$. 
The language 
%ron7: $\CTLsK(\Agse\cup\strat(\Ags),\Prop)$ 
%xiaowei9: $\CTLK(\Agse\cup\strat(\Ags),\Prop)$ 
$\CTLK(\Ags\cup\strat(\Ags),\Prop)$ 
does not include fixpoint operators and it does not seem that  this fixpoint is expressible. 
Indeed, the construct $\atlop{G}^\bullet_{C(H)}\phi$  does not appear to be 
expressible using the fragment 
%ron7: $\CTLsK(\Agse \cup\strat(\Ags),\Prop)$.  
%xiaowei9: $\CTLK(\Agse \cup\strat(\Ags),\Prop)$.  
$\CTLK(\Ags \cup\strat(\Ags),\Prop)$. 

On the other hand, common knowledge of group $H$ about the effects of a fixed strategy of group $G$ 
can be expressed with $\ESL$ in a natural way by the formula  
$$C_{H}(\lid{\strat(G)}{x} \rimp \phi)$$
which says that it is common knowledge to the group 
%ron8: $G$ that $\phi$ holds if the group $H$ 
$H$ that $\phi$ holds if the group $G$ 
is running the strategy profile 
%ron2: capture 
captured 
by the variable $x$. 
Using this idea, the 
 construct 
 %ron8: $\atlop{H}^\bullet_{C(G)}\phi$ 
 $\atlop{G}^\bullet_{C(H)}\phi$ 
 can be represented with $\ESL$  as 
$$ \existsg{x}  C_{H}(\lid{\strat(G)}{x} \rimp \phi)~.$$
The following  result states this claim precisely. 
%ron10: moved this back up, since it is needed above after all 
%%ron7: Here we do need extendability, it seems! 
%For the proof, we need to assume that $\GStrats$ is not just 
%restrictable, but also \emph{extendable}, in the sense that 
%for all group strategies $\alpha\in \GStrats$ for group of agents $G$ and group $H \supseteq G$, 
%there exists a group strategy $\beta\in \GStrats$ for group $H$ such that $\beta\restrict G = \alpha$. 
%This remains a mild condition: for example, it is still satisfied by 
%the set of ``cartesian product strategies'' 
%$$\GStrats = \{ \langle \alpha_i\rangle_{i\in G}~|~G\subseteq \Ags,~\forall i\in G(\alpha_i \in \GStrats_i)\}$$ 
%where each $\GStrats_i$ is a set of strategies for agent $i$. 

%ron7: new theorem and proof of above above 
\begin{theorem} \label{thm:atel}
Let $\Env$ be an environment  in which all states are initial,
and let $\GStrats$ be a restrictable and extendable set of group strategies in $E$. 
Let $\I = \I(\Env,\compl{\GStrats})$. 
Assume that $\phi$ 
%ron8: is a formula whose truth value depends only on a path of $E$ 
is a path formula
and that $\trl{\phi}$ is an $\ESL$ formula
%ron8: + 
without free variables,  
such that  for every path $\rho$ of $\Env$, 
we have  $\Env, \rho \models^\GStrats \phi$ iff for all (equivalently, some)  points $(r,m)$ of  
$\I$ with $r_e[m\ldots] = \rho$ we have $\I, (r,m)\models \trl{\phi}$.  

Then for all states $s$ of $\Env$, 
we have  $\Env, s \models^\GStrats \atlop{G}^\bullet_{C(H)}\phi$ iff for all (equivalently, some)  points $(r,m)$ of  
$\I$ with $r_e(m) = s$ we have $\I, (r,m)\models \existsg{x}  C_{H}(\lid{\strat(G)}{x} \rimp \trl{\phi})$.  
\end{theorem}

\begin{proof} 
The argument for the equivalence between the universal and existential versions of the right hand side of
the iff is similar to that 
%ron8: noted above. 
in Theorem~\ref{thm:atl}. 

Suppose first that  $\Env, s \models^\GStrats \atlop{G}^\bullet_{C(H)}\phi$. 
 Let $(r,m)$ be a  point of $\I$ with $r_e(m) =s$. We need to prove that  $ \I,(r,m) \models \existsg{x} C_{H}(\lid{\strat(G)}{x} \rimp \trl{\phi})$. 
From  $\Env, s \models^\GStrats \atlop{G}^\bullet_{C(H)}\phi$ it follows that there exists a 
strategy $\alpha\in \GStrats$ for group $G$, such that  for all states $t$ with $s \sim^C_H t$ and paths $\rho$ from $t$
%ron10: + 
consistent with $\alpha$, 
we have 
$\Env, \rho \models^\GStrats \phi$. 
Let $r'$ be any run with $r'_{\strat(G)}(0) = \alpha$, and define $\Gamma$ to be a context with $\Gamma(x) = r(0)$. 
To prove $ \I,(r,m) \models \existsg{x} C_{H}(\lid{\strat(G)}{x} \rimp \trl{\phi})$, we 
show that $\Gamma, \I,(r,m) \models C_{H}(\lid{\strat(G)}{x} \rimp \trl{\phi})$. 
For this, suppose that 
$(r,m) = (r^0,m^0) \sim_{i_1} (r^1,m^1)  \sim_{i_2} \ldots \sim_{i_k} (r^k,m^k)$, where $i_j \in H$ for $j = 1\ldots k$, 
and assume that $\Gamma, \I,(r^k,m^k) \models \lid{\strat(G)}{x}$. 
We need to show that $\Gamma, \I,(r^k,m^k) \models \trl{\phi}$. 

Note that we have $s = r(m) = r^0_e(m^0) \sim_{i_1} \ldots \sim_{i_k} r^k_e(m^k)$.  
Since $\Gamma, \I,(r^k,m^k) \models \lid{\strat(G)}{x}$, we have that $r^k_{\strat(G)}(m^k) = 
%ron8:  \Gamma(x) = \alpha$. 
\Gamma(x)_{\strat(G)} = \alpha$. 
Thus, the sequence $\rho = r^k_e[m^k\ldots]$ is a path of $\Env$ consistent with the 
group strategy $\alpha$. It follows that 
$\Env, \rho \models^\GStrats \phi$. By assumption, this means that $\Gamma, \I,(r^k,m^k) \models  \trl{\phi}$. 

Conversely, let $(r,m)$ be a point of  $\I$ with $r_e(m) = s$ and  $ \I, (r,m)\models \existsg{x}  C_{H}(\lid{\strat(G)}{x} \rimp \trl{\phi})$, 
witnessed by  $\Gamma, \I, (r,m)\models   C_{H}(\lid{\strat(G)}{x} \rimp \trl{\phi})$. 
Note that $\Gamma(x)_{\strat(\Ags)} = \compl{\beta}$ where $\beta\in \GStrats$ is a group strategy for some group 
$G'$.  For agents $i \in G\setminus G'$, we have that $\Gamma(x)_{\strat(i)}$ is the random strategy  $\rand_i$.  
It follows that any path consistent with  $\Gamma(x)_{\strat(G\cap G')}$ is also consistent with $\Gamma(x)_{\strat(G)}$. 
Let $\alpha \in \GStrats$ be any group strategy for group $G$  with 
%ron10: $\alpha\restrict (G\cap G') = \Gamma(x)_{G\cap G'}$. 
$\alpha\restrict (G\cap G') = \Gamma(x)_{\strat(G\cap G')}$. 
Such a strategy exists by the fact that $\GStrats$ is restrictable and extendable: 
we may take $\alpha$ to be an extension of $\beta\restrict (G\cap G')$. 
 Then we have that any path consistent with $\alpha$ is 
consistent with $\Gamma(x)_{\strat(G)}$.

We show $\Env, s \models^\GStrats \atlop{G}^\bullet_{C(H)}\phi$, with $\alpha$ as the witnessing strategy for group $G$. 
For this, let 
%ron8: $s = s_0 \sim_{i_1} \sim_{i_2} \ldots \sim_{i_k} s_k$, 
$s = s_0 \sim_{i_1} s_1  \sim_{i_2} \ldots \sim_{i_k} s_k$, 
where $i_j \in H$ for $j = 1\ldots k$, 
and let $\rho$ be a path from $s_k$ consistent with $\alpha$. 
%ron8: + 
We show $E, \rho \models^\GStrats \phi$. 
By the observation above, $\rho$ is 
also consistent with $ \Gamma(x)_{\strat(G)}$. 
Let $r^k$ be a run  with $r^k_e[0\ldots] = \rho$, and  $r^k_{\strat(G)}(0) =  \Gamma(x)_{\strat(G)}$. 
(We can take $r^k_{\strat(\Ags)} = \compl{\beta \restrict (G\cap G')}$, which is in $\compl{\GStrats}$.)  
Then $\Gamma,\I,(r,0)\models \lid{\strat(G)}{x}$. 
Moreover, for each $j = 1 \ldots k-1$, let $r^j$ be any run with $r^j_e(0) = s_j$. 
Then $(r,m) = (r^0,m^0) \sim_{i_1} (r^1,0) \sim_{i_2} \ldots  \sim_{i_k} (r^k,0)$. 
It follows from $\Gamma, \I, (r,m)\models   C_{H}(\lid{\strat(G)}{x} \rimp \trl{\phi})$ that 
$\Gamma, \I, (r^k,0)\models  \trl{\phi}$, 
and in fact $\I, (r^k,0)\models  \trl{\phi}$, since $\trl{\phi}$ 
%ron8: depends only on the environment state sequence. 
has no free variables. 
Since $r^k[0\ldots ] = \rho$, by assumption, we have $E,\rho \models \phi$. 
This proves  $\Env, s \models^\GStrats \atlop{G}^\bullet_{C(H)}\phi$. 
\end{proof}

 %This gives 
 The above equivalences give 
 a reduction of the complex operators of  \cite{JvdH2004}
%ron3: 
% to a set of 
% %xiaowei: standard 
% epistemic operators.  
 that makes their epistemic content more explicit by expressing this using standard epistemic operators. 
 %ron7: non longer need to say this, since there is now a statement and proof
% (We remark that a carefully stated equivalence result requires an appropriate treatment of 
%initial states in the environment $\Env$, similar to that used in the treatment of ATEL above.) 

An alternate 
 approach to decomposing the operators $\atlop{G}^\bullet_{\mathcal{ K}(H)}$ is proposed in 
 %ron9: incorrect citation! 
 %\cite{JA2009}. 
 \cite{JA07}. 
 By comparison with our
 standard approach to the semantics of the epistemic operators, this  proposal  uses  ``constructive knowledge'' 
 operators which require a nonstandard semantics in which formulas are evaluated at sets of states rather than 
 at individual states. 
 %ron9: add semantic detail as requested by reviewer 1. 
 Evaluation at single world $q$ is treated as equivalent to evaluation at the set $\{q\}$. 
 For each standard (group) epistemic operator $\K = E,D,C$, there is a constructive version $\hat{\K} = \mathbb{E}, \mathbb{D}, \mathbb{C}$. 
 Atomic propositions $p$ are evaluated at sets of states $Q$ by 
 $$ \Env, Q \models p ~~\text{ if for all states $q\in Q$ we have $\Env,q \models p$. } $$ 
 (As above we define the semantics on environments $\Env$ rather than ATEL models.) 
%ron9: remove this to avoid confusion about the normal form
% The  standard epistemic operators are evaluated at $Q$ by requiring, effectively, 
% the usual semantics at all worlds indistinguishable from a world in $Q$ by the relevant
% equivalence relation: 
% $$ M, Q \models \K_G\phi ~~\text{ if for all states $q\in img(Q, \sim^\K_G)$ we have $M,q \models \phi$ } $$ 
%where $img(Q, \sim)= \{q'~|~ q\in Q,~q\sim q'\}$ is the image of $Q$ under the relation $\sim$. 
For the constructive epistemic operators, 
 $$ \Env, Q \models \hat{\K}_G\phi ~~\text{ if  
 %ron10: $\Env,img(Q, \sim^\K_G) \models \phi$ } $$  
 $\Env,\{q'\in Q~|~\exists q\in Q~(q\sim^\K_G q') \}  \models \phi$ } $$  
 and for the ATL operator $\atlop{G}\phi$ we have 
 \begin{quote}
  $ \Env, Q \models \atlop{G}\phi $  if  there exists a strategy $\alpha$ for group $G$ such that $\phi$ holds in all runs 
  starting in a state in $Q$ in which group $G$ plays the strategy $\alpha$. 
  \end{quote}
Note that the ATEL formula 
  $\K_G\atlop{G}\phi $ says that at each world considered possible (in the appropriate sense for $\K$) by 
group $G$, there exists a (possibly different) strategy for  $G$ that achieves $\phi$. 
By contrast, $\K_G\atlop{G}\phi $ says that there exists a \emph{single} strategy for $G$ 
that achieves $\phi$ from each world considered possible (in the appropriate sense for $\K$) by $G$. 

  %xiaowei: add
% ron9:The constructive strategic logic CSL \cite{JA2009} has a set of constructive knowledge operators $\{\mathcal{D},\mathcal{E},\mathcal{C}\}$, and is shown in \cite{JAvdH2008} to have a normal form: every subformula starting with a constructive knowledge operator $\mathcal{K}_{G}$ is of the form $\mathcal{K}_{G_1} ... \mathcal{K}_{G_n}\phi$ where $\phi$ starts with a strategy modality and $\mathcal{K}\in \{\mathcal{D},\mathcal{E},\mathcal{C}\}$.  
%A normal form CSL formula $\mathcal{K}_{G_1} ... \mathcal{K}_{G_n}\atlop{H}\phi$ can be represented 
This logic is shown in \cite{JAvH2008} to have a normal form, in which every subformula starting with a 
constructive knowledge operator $\hat{\K}^1_{G}$ is of the form $\hat{\K}^1_{G_1} ... \hat{\K}^n_{G_n}\phi$ where $\phi$ starts with a strategy modality and each 
$\K^i \in \{E,D,C\}$. Such a normal form subformula, evaluated at a single state, can be represented
%ron2: with 
in $\ESL$ as 
%ron9: $$ \existsg{x}  K_{G_1} ... K_{G_n}(\lid{\strat(H)}{x} \rimp \phi)~ $$
%where $K_{G_i}$ is the usual knowledge operator corresponding to $\mathcal{K}_{G_i}$, for all $i\in \{1..n\}$. 
 $$ \existsg{x}  \K^1_{G_1} ... \K^n_{G_n}(\lid{\strat(H)}{x} \rimp \phi)~. $$
 Precise formulation and proof of the claim are similar to the proofs above and left to the reader.

\newcommand{\str}{{\tt Str}}
\newcommand{\trlsl}[2]{#2^{#1}} 
%ron12: \newcommand{\assignedAgents}{{\tt aa}}
\newcommand{\assignedAgents}{*}
\newcommand{\slvar}{{\tt Var_{SL}}}

\subsection{Strategy Logic}\label{sec:sl}

%ron12: Chatterjee's strategy logic~\cite{CHP10}, called CHP-SL by
Chatterjee et al's strategy logic~\cite{CHP10}, which we call CHP-SL, 
 following the convention in \cite{MogaveroMV10}, is an extension of ATL* for two-player games. Let $x,y$ be two variables 
 %ron12: ranged 
 ranging 
 over Player 1 and Player 2's strategies. 
 %ron12: It wirtes formulas $Qx.\phi$, $Qx.\phi$, and $\phi(x,y)$, for $Q\in \{\exists,\forall\}$, to explicitly treat strategies as first-order objects.
 The logic allows these variables to be quantified: if $\phi $ is a formula then $\exists x.\phi$ and $\forall x.\phi$ are formulas. 
Additionally the effects of a particular combination of player strategies can be expressed using the formula $\phi(x,y)$, which 
says that $\phi$ holds if player 1 plays strategy $x$ and player 2 plays strategy $y$. 
 % Then, an 
 Thus, the ATL* formula $\atlop{1}{\phi}$ can be expressed in CHP-SL with $(\exists x)(\forall y)\phi(x,y)$. 
 
Strategy logic (SL) \cite{MogaveroMV10} generalises CHP-SL, with the syntax as follows: 
$$
\phi \equiv p ~|~\neg \phi~|~ \phi \land \phi ~|~ \nxt \phi~|~\phi \until \phi~|~\atlop{v}\phi~|~\atluop{v}\phi~|~(i,v)\phi
$$
where $v\in \slvar$ such that $\slvar$ is a set of strategy variables, and $i\in\Ags$ is an agent. 
%ron12: add intuition 
Intuitively, $\atlop{v}\phi$ says that there exists a strategy $v$ such that $\phi$, 
formula $\atluop{v}\phi$ says that $\phi$ holds for all strategies $v$, 
and $(i,v)\phi$ says that $\phi$ holds if agent $i$ plays  strategy $v$. 
A formula is a \emph{sentence} if every occurrence of $(a,x)$ is within the scope of an occurrence of $\atlop{x}$ or $\atluop{x}$, 
and every temporal subformula $ \nxt \phi$ or  $\phi \until \phi$ occurs within the context of some binding $(i,x)$, for every agent $i$. 
%ron12: 
The ATL* formula $\atlop{1}{\phi}$ can be expressed in SL as  $\atlop{x}\atluop{y}(1,x) (2,y) \phi$. 

Let $\str$ be a set of agent strategies, 
and $\chi:\Ags\cup \slvar\rightarrow \str$ be a 
%ron12: 
partial 
mapping from agents and variables to the set of strategies. 
Then, the semantics 
%ron12: 
can be formulated with respect to our environments $E$
as follows.% 
%ron12: \footnote{The provided semantics is for explanation purpose. The reader should refer to \cite{MogaveroMV10} for full details of the strategy logic.}.
\footnote{We make some simplifications; \cite{MogaveroMV10}  distinguish between path and state formulas.} 
\begin{itemize}
%ron12:
%\item $E,\chi,s\models \atlop{v}\phi $ iff there exists a strategy $\strat \in \str$ such that $E,\chi[v\mapsto \strat], s\models \phi$
%\item $E,\chi,s\models \atlop{v}\phi $ iff there exists a strategy $\strat \in \str$ such that $E,\chi[v\mapsto \strat], s\models \phi$
%\item $E,\chi,s\models \atluop{v}\phi $ iff for all strategies $\strat \in \str$ it holds that $E,\chi[v\mapsto \strat], s\models \phi$
%\item $E,\chi,s\models (i,v)\phi $ iff $E,\chi[i\mapsto \chi(v)],s\models \phi$
%ron12: + 
%\item $E,\chi,(r.,m) \models \atlop{v}\phi $ iff there exists a strategy $\strat \in \str$ such that $E,\chi[v\mapsto \strat], s\models \phi$; 
\item $E,\chi,(r,m)\models \atlop{v}\phi $ iff there exists a strategy $\strat \in \str$ such that $E,\chi[v\mapsto \strat], s\models \phi$; 
\item $E,\chi,(r,m)\models \atluop{v}\phi $ iff for all strategies $\strat \in \str$ it holds that $E,\chi[v\mapsto \strat], s\models \phi$; 
\item $E,\chi,(r,m) \models (i,v)\phi $ iff $E,\chi[i\mapsto \chi(v)],(r',m)\models \phi$ for all runs $r'$ where $r(m)=r'(m)$ and 
$r'$ is a run consistent with $\chi[i\mapsto \chi(v)]$ from time $m$. 
\end{itemize}
Atomic, boolean and temporal formulas are handled as usual. We remark that because (1) the transition relation is assumed in SL to be deterministic, 
i.e., $\trans$ can be written as a function of type $S \times \Acts \rightarrow S$,
and (2)  temporal operators in a sentence appear only in contexts where every agent is bound to a strategy,  the final binding $(i,v)$ before temporal 
operators are evaluated in fact quantifies over just a single run.  	

Given an SL formula $\phi$, we let $V(\phi)$ be the set of variables in the operators $\atlop{}$ or $\atluop{}$. 
 SL allows the assignment of a strategy to multiple agents, e.g., in formula $\atlop{v}((i,v)\phi_1\land (j,v)\phi_2)$ the agents $i$ and $j$ have the same strategy represented in the variable $v$. For this 
to make sense 
%ron13: with imperfect recall \xiaowei{
in an imperfect information system, 
%ron13: the reviewer is right that there are papers that allow this, this is what I think of them! ...
without allowing implausible bindings or artificially complex interpretations of quantification, 
all agents need to have the same actions and
the same observations. This does not match the setting of our framework particularly well.  We remark that CHP-SL does not allow this expressivity, as players 1 and 2 are associated with their dedicated strategy variables $x$ and $y$, respectively. 
%ron12: The other place worthy of mentioning is that the SL semantics is based on deterministic environments, i.e., the transition relation is $\trans: S \times \Acts \rimp S$.

In the following, we consider the fragment of SL in which 
%ron12: every agent $i\in\Ags$ is associated with a dedicated strategy variable $v_i$. 
every variable $v_i$ is uniquely associated with an agent $i\in\Ags$, so that $v_i$ occurs only in bindings $(j,v_i)\psi $ with $ j=i$. 
 Then we can 
%ron12: have the following translation of SL formulas into \ESL\ formulas. 
translate a SL formula $\phi$ into an $\ESL$ formula $\trlsl{\assignedAgents}{\phi}$ as follows. 
\begin{align*} 
 & \trlsl{\assignedAgents}{p} = p \\
 &  \trlsl{\assignedAgents}{(\neg \phi)} = \neg \trlsl{\assignedAgents}{\phi} \\
 & \trlsl{\assignedAgents}{(\phi_1\land \phi_2)} = \trlsl{\assignedAgents}{\phi_1} \land \trlsl{\assignedAgents}{\phi_2} \\ 
 & \trlsl{\assignedAgents}{(\nxt \phi)} = \nxt \trlsl{\assignedAgents}{\phi} \\ 
 & \trlsl{\assignedAgents}{(\phi_1 \until \phi_2)} = \trlsl{\assignedAgents}{\phi_1} \until \trlsl{\assignedAgents}{\phi_2} \\
& \trlsl{\assignedAgents}{(\atlop{v_i}\phi)} = \exists v_i \trlsl{\assignedAgents}{\phi} \\
& \trlsl{\assignedAgents}{(\atluop{v_i}\phi)} = \forall v_i \trlsl{\assignedAgents}{\phi} \\
& \trlsl{\assignedAgents}{((i,v_i)\phi)} = D_{e\cup\strat(\Ags \setminus \{i\}) } ( \lid{\strat(i)}{v_i} \rimp  \trlsl{\assignedAgents}{\phi} ) \\
\end{align*} 
Intuitively, to decide if $\atlop{v_i}\phi$, we need to determine the existence of a
%ron12:  global state $v_i$ on 
strategy $v_i$ with respect to 
which the formula $\phi$ is satisfied. 
In the $\ESL$ translation, $v_i$ refers to a global state rather than a strategy, but the only component of this global state that is used in 
the remainder of the evaluation is the component $\strat(i)$, which picks out a strategy for agent $i$. 
Similarly for $\atluop{v_i}\phi$. 
%Note that, after translation, the variable $v_i$ refers to a global state. 
To decide if $(i,v_i)\phi$, we need to 
%ensure the satisfiability of 
satisfy 
$\phi$ on 
%ron12: those states that are the same with the current state except for the strategy for agent $i$ which is replaced with the one in $v_i$.   
(all) runs where agent $i$'s strategy is switched to that represented in $v_i$. 
The translation handles this using the operator $D_{e\cup\strat(\Ags \setminus \{i\}) } $, which 
refers to points in which the state of the environment and the strategies of all agents are fixed, 
while the strategy of agent $i$ is allowed to vary. The assertion $ \lid{\strat(i)}{v_i} $ checks that the
strategy of agent $i$ is in fact switched to that represented in the global state $v_i$. 

Similarly, CHP-SL formulas can be translated into \ESL\ formulas as follows. 
\begin{align*} 
& \trlsl{\assignedAgents}{(\exists x.\phi)} = \exists 
%ron12: v_1 
x 
\trlsl{\assignedAgents}{\phi} \\
& \trlsl{\assignedAgents}{(\forall x.\phi)} = \forall 
%ron12: v_2 
x
\trlsl{\assignedAgents}{\phi} \\
& \trlsl{\assignedAgents}{(\phi(x,y))} = D_{e} ( 
%ron12: \lid{\strat(1)}{v_1}\land \lid{\strat(2)}{v_2} 
\lid{\strat(1)}{x}\land \lid{\strat(2)}{y} 
\rimp  \trlsl{\assignedAgents}{\phi} ) \\
\end{align*} 

Finally, we remark that both the CHP-SL semantics in \cite{CHP10}  and the SL semantics in \cite{MogaveroMV10}  are  for perfect recall. 
%ron12: 
Since we have formulated $\ESL$ for imperfect recall, we leave the above translations as indicative rather than attempting a formal proof.

\subsection{Game Theoretic Solution Concepts}\label{sec:games} 
 
It has been shown for a number of logics for strategic reasoning that they are expressive enough
to state a variety of game theoretic solution concepts, e.g.,  \cite{CATL,CHP10} show that Nash Equilibrium is expressible. 
We  now sketch the main ideas required to show that the fragment $\CTLK(\Ags\cup \strat(\Ags)\cup \{e\}, 
%xiaowei: \{Prop\}
Prop)$ of 
our framework also has this expressive power. We assume two players $\Ags=\{0,1\}$ in a normal form 
%ron9: 
perfect information 
game, 
and assume that these agents play a deterministic strategy. 
The results in this section can be easily generalized to multiple players and extensive form games.

Given a game $\mathcal{ G}$ we construct an environment $\Env_\mathcal{ G}$ that represents the game. 
Each player has a set of actions that correspond to the moves that the player can make. 
We assume that  $\Env_\mathcal{ G}$ is constructed to model the game so that play happens in the first step from a 
unique initial state,  and that subsequent transitions do not change the state. 
%ron8: +
We let agents have perfect information in $E_\mathcal{ G}$, i.e., we define the observation of agent $i$ in state $s$ 
by $O_i(s) = s$. (Consequently, although we use uniform strategies $\Strats^{\unif,\det}$ below, the uniformity constraint 
is vacuous in these environments.) 

We write  $-i$ to denote the adversary of player $i$. 
Let $u_i$ for $i\in\{0,1\}$ be a variable denoting the utility gained by player $i$ when play is finished. 
Let $V_i$ be the set of possible values for $u_i$, 
%ron4: 
and let $V = V_0\cup V_1$. 
We work with the following atomic propositions. 
Atomic proposition $u\leq v$, where $u,v\in V$, expresses 
the ordering on utilities. 
Atomic proposition $u_i =v$, where $i\in \{0,1\}$ and $v\in V_i$, expresses that 
player $i$'s utility has value $v$. 
We use formula  $$U_i(v)= \nxt(u_i=v)$$
to express that value $v$ is player $i$'s utility once play finishes.  
%xiaowei: add
%ron4: This seems more complicated than needed, Why not just v' \leq v? 
%Moreover, we use a formula $\phi_{v' \leq v}^i$ to denote that agent $i$'s utility $v'$ is not less than another utility $v$, i.e., 
%$$\phi_{v' \leq v}^i = \bigvee_{w'\in V_i}\bigvee_{w\in V_i} (v'=w' \land v=w\land w' \leq w)~.$$
%ron3: 
%(Here we assume that the set of utilities is finite, and we have an atomic proposition $w' \leq w$ 
%for each pair of utility values $w,w'$.) 

\noindent 
{\bf Nash equilibrium (NE)}   is a solution concept  that states 
%xiaowei: add 
that 
no player can gain by unilaterally changing their strategy.
%FIXME \cite{}. 
We may write $$BR_i(v) =  U_i(v)\land K_{\strat(-i)}\bigwedge_{v'\in V_{i}}(U_{i}(v')\rimp 
%xiaowei: v'\leq v
%ron4: ??? \phi_{v' \leq v}^i)$$
v'\leq v)$$
to express that, given the current 
%ron3: adversary strategy, 
strategy $\strat(-i)$ of the adversary of $i$, 
the value $v$ attained by player $i$'s current strategy is the best possible utility attainable by player $i$, 
i.e., the present strategy of player $i$ is a best response to the adversary.  
Thus $$BR_i = \bigvee_{v\in V_i} BR_i(v)$$ says that player $i$ 
 is playing a best-response to the adversary's strategy. 
The following statement then expresses the existence of a 
%ron4: 
(pure)  
Nash equilibrium 
for the game $\mathcal{ G}$:
$$\Env_\mathcal{ G}, \Strat^{\unif, \detstrat}(\Env_\mathcal{ G})\models  \neg D_\emptyset \neg (BR_0\land BR_1)~.$$
%weh
That is, in a Nash equilibrium, each player is playing a best response to the other's strategy. \\ 

\noindent 
{\bf Perfect cooperative equilibrium (PCE)} 
is a solution concept intended to overcome deficiencies of Nash equilibrium for explaining cooperative behaviour \cite{HR2010}. 
It says that each player does at least as well as she would if the other player were best-responding.  The following formula
$$BU_i(v) =  D_\emptyset \left( \bigwedge_{v'\in V_i}((BR_{-i} \land U_i(v')) \rimp 
%ron4: \phi_{v' \leq v}^i))$$
v'\leq v)\right)$$
states that $v$ is as good as any utility that $i$ can obtain if the adversary always best-responds to whatever $i$ plays. 
Thus,  
$$BU_i =  \bigvee_{v\in V_i} (U_i(v) \land BU_i(v))$$ 
says that $i$ is currently getting a utility as good 
%xiaowei: add 
as 
the best utility that 
%xiaowei: $i$ 
%ron3: she  [[ I prefer to keep this nongendered ]] 
$i$ 
can obtain if the adversary is a best-responder. 
Now, the following formula expresses the existence of perfect cooperative  equilibrium  for the game $\mathcal{ G}$: 
$$\Env_\mathcal{ G}, \Strat^{\unif, \detstrat}(\Env_\mathcal{ G})
\models \neg D_\emptyset \neg ( BU_0 \land BU_1 )$$ 
That is, in a PCE, no player has an incentive to change their strategy, on the assumption that the 
adversary will best-respond to any change.

\subsection{Computer Security Example: Erasure policies} \label{sec:erasure}

Formal definitions of computer security frequently involve reference to 
the strategies available to the players, and to 
%xiaowei: agent's 
agents' reasoning based on these
strategies. In this section we sketch an example that illustrates 
how our framework might be applied in this context. 

Consider the scenario  depicted in the  following diagram: 

\begin{figure}[h]  
\centerline{\includegraphics[height=4cm]{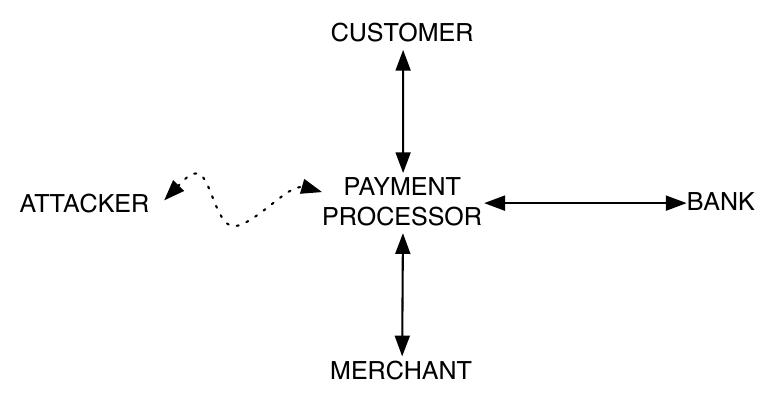} } 
\end{figure} 

\noindent
A customer $C$ can purchase 
items at a web merchant $M$. Payment is handled by a trusted payment processor $P$ (this could be a service or device), 
which interacts with the customer, merchant, and a bank $B$ 
to securely process the payment. (To keep the example simple, we suppose that the customer and merchant use the same bank). 
One of the guarantees provided by the 
payment processor is to protect the customer from attacks on the customer's 
credit card by the merchant: the specification for the protocol that runs the transaction 
requires that the merchant should not obtain the customer's credit card number. 
In fact, the specification for the payment processor is that after the transaction has 
been successfully completed, the payment processor should {\em erase} the 
credit card data, to ensure that even the payment 
processor's state does not contain information about the customer's credit card number. 
The purpose of this constraint is to protect the customer against subsequent attacks 
by an attacker $A$, who may be able to use vulnerabilities in the payment processor's
software to obtain access to the payment processor's state. 

We sketch how one might use our framework to express the specification. 
To capture reasoning about all possible behaviours of the agents, and
what they can deduce from knowledge of those behaviours, we work in $\Iunif(\Env)$ for a 
suitably defined environment $\Env$. To simplify matters, we take $\Ags = \{C, M, P, A\}$. 
We exclude the strategy of the bank from consideration: this amounts to assuming that the bank has no actions and is 
trusted to run a fixed protocol. We similarly assume that the payment processor $P$ has no actions, 
but in order to talk about what information is encoded in the payment processor's local state, 
we do allow that this agent has observations. The customer $C$ 
 may have actions such as entering the
credit card number in a web form, pressing a button to submit the form
to the payment processor, and pressing a button to approve or cancel the transaction. 
The customer observes variable $\cc$, which records the credit card number drawn from a set $\CCN$, and
boolean variable $\done$ which records whether the transaction is complete (which could mean either committed or aborted).  

We assume that the attacker $A$ has some set of exploit actions, as well as some 
innocuous actions (e.g., setting a local variable or performing $\skipp$). 
The effect of the exploit actions is to exploit a vulnerability in the payment processor's software and
copy parts of the local state of the payment processor to variables that are observable by the 
attacker. We include in the environment state a boolean variable $\exploited$, 
which records whether the attacker has executed an exploit action at some time in the past. 
The merchant $M$ may have actions such as sending cost information to the 
payment processor and acknowledging a receipt certifying that payment has been 
approved by the  bank (we suppose this receipt  is transmitted from the 
bank to the merchant via the payment processor). 

We may then capture the statement that the system is {\em potentially} vulnerable to an 
attack that exploits an erasure flaw in the implementation of the 
payment processor, by the following formula: 
\[ 
\neg D_\emptyset \neg  (  \done \land  \bigvee_{x\in \CCN}  K_{P} (  \cc \neq  x)) 
%ron3: )
\] 
%ron3: + 
where $ \cc \neq  x$ is an atomic proposition for each $x\in\CCN$, with the obvious meaning
%ron8: 
that the customer's credit card number is not $x$. 
This says that there exist behaviours of the agents, which can 
(at least on some points in some runs) leave the payment processor in a state 
where the customer has received confirmation that the transaction is done, but 
in which the payment processor's local state somehow still encodes {\em some} information  about the 
customer's credit card number. This encoding could be direct (e.g., 
by having a variable $\mathit{customer\_cc}$ that still stores the credit 
card number) or indirect (e.g. by the local state including both a 
symmetric encryption key $K$ and an encrypted version of the credit card
number, $\mathit{enc\_customer\_cc}$, with value $\mathtt{Encrypt}_K(\cc)$
that was used for secure transmission to the bank). 
Note that for a breach of security, it 
is only required that 
the information suffices to {\em rule out } some credit card number 
(so that,  e.g.,  knowing the 
first digit of the number would constitute a vulnerability) 

The vulnerability captured by this formula is only potential, because
it does not necessarily follow that the attacker is able to obtain the credit 
card information. Whether this is possible can be checked 
using the formula 
\[ 
\neg D_\emptyset \neg  (  \done \land  \neg \exploited \land 
%ron3: EF 
E\Diamond
\bigvee_{x\in \CCN}  D_{\{ A, \strat(A)\}} (  \cc \neq x)) 
\] 
which says that it is possible for the attacker to obtain information about the credit card number 
even after the transaction is done. (To focus on erasure flaws, we deliberately wish to exclude here
the possibility that the attack occurs during the processing of the transaction.) 
Note that here we assume that the attacker knows 
%xiaowei: their 
%ron3: his/her .. [[ don't wont to repeat his/her in the next fix.  ]]
his
own strategy when making deductions from the 
information obtained in the attack. This is necessary, because the attacker can typically write 
%ron3: its
his
own local variables, so it needs to be able to distinguish between a value it wrote itself and
a value it copied from the payment processor. 

However, even this formula may not be sufficiently strong. Suppose that the payment processor
implements erasure by 
%ron3: writing a random value to its variable \\$\mathit{customer\_cc}$. 
writing,   to its variable $\mathit{customer\_cc}$, a random value. 
Then, even if the attacker obtains a copy of this value, and it happens to be equal to the customer's
actual credit card number, the attacker would not have any knowledge about the credit card number, 
since, as far as the attacker knows, it could be looking at a randomly assigned number. 
However, there may still be vulnerabilities in the system. Suppose that the implementation of 
the payment processor operates so that the customer's credit card data is not erased by randomization until 
the merchant has acknowledged the receipt of payment from the bank, but to 
avoid annoying the customer with a hanging transaction, the customer is advised that the
transaction is approved (setting $\done$ true) if the merchant does not respond within a certain 
time limit. It is still the case that on  observing the copied value of $\mathit{customer\_cc} $, 
the attacker would not be able to deduce that this is the customer's credit card number, since it might be 
the result of erasure in the case that the merchant responded promptly. However, if the attacker knows that the merchant has not
acknowledged the receipt, the attacker can then deduce that the value  is not due to erasure. 
One way in which the attacker might know that the merchant has not acknowledged receipt
is that the attacker is in collusion with the merchant, who has agreed to omit sending 
the required acknowledgement messages. 

This type of attack can be captured by
replacing the term $D_{\{ A, \strat(A)\}} (  \cc \neq x)$
by $D_{\{ A, \strat(A), \strat(M)\}} (  \cc \neq x)$, 
capturing  that the attacker reasons using knowledge of both its own 
strategy as well as the strategy of the merchant, 
or even $D_{\{ A, \strat(A), \strat(M), M\}} (  \cc \neq x)$ for a collusion in which the
merchant shares information observed. Similarly, to focus on erasure flaws in the 
implementation of the payment gateway, independently of the attackers capability, 
we would replace the term $K_P( \cc \neq x)$ above by 
$D_{\{P, \sigma(M)\}}(\cc \neq x)$. 

We remark that in the case of the attacker's knowledge, 
it would be appropriate to work with a perfect recall semantics of 
knowledge, but when using knowledge operators to express information 
in the payment gateway's state for purposes of reasoning about erasure 
policy, the more appropriate semantics of knowledge is imperfect recall.

This example illustrates some of the subtleties that arise in the 
setting of reasoning about security and the way that our framework helps to 
represent them. Erasure policies have previously been studied in the 
computer security literature, beginning with \cite{ChongM05}, though generally without 
consideration of strategic behaviour by the adversary. 

%ron4: cut.. There is now a section that talks about these 
%However, many other notions in the security literature 
%do involve reasoning about strategies and agent's knowledge based on 
%strategies, including nondeducibility on strategies \cite{WJ90} and robust declassification \cite{ZdancewicM01}.  
%We leave the further exploration of such notions using our approach for future work. 

\subsection{Reasoning about Knowledge-Based Programs} \label{sec:KBP} 

Knowledge-based programs \cite{FHMV1997} are a form of specification 
of a multi-agent system in the form of a program structure that 
%ron4: specifies
describes
how an agent's  actions are related to its knowledge. They have been
shown to be a useful abstraction for several areas of application, including
the development of optimal protocols for distributed systems \cite{FHMV1997}, 
robot motion planning \cite{BrafmanLMS1997}, and game theoretic reasoning \cite{HalpernMosesIJCAI}. 

Knowledge-based programs cannot be directly executed, since there is a circularity in their semantics: 
which actions are performed depends on what the agents know, which in turn 
depends on which actions the agents perform. The circularity is not
vicious, and can be resolved by means of a fixed point semantics, 
but it means that a  knowledge-based program may have multiple 
distinct implementations (or none), and the problem of reasoning about these implementations 
is quite subtle. In this section, we show that our framework can capture
reasoning about the set of possible implementations of a knowledge-based program.

\newcommand{\did}{\mathit{did}} 

We consider joint knowledge-based programs $P$ (as defined by \cite{FHMV1997}) where for each 
agent $i$ we have a knowledge-based program 
$$P_i = {\bf do} ~\phi^i_1 \rightarrow a^i_1~ [] ~ \ldots [] ~\phi^i_{n_i } \rightarrow a^i_{n_i}~{\bf od}$$
where each $\phi^i_j$ is a formula of $\CTLsK(\Ags, \Prop)$ of the form $K_i \psi$, and each $a_i$ appears just once.
%ron5: 
The formulas $\phi^i_j$ are called the \emph{guards} of the knowledge-based program.%
\footnote{The guards in 
\cite{FHMV1997} are allowed to be boolean combinations of formulas $K_i \psi$ and propositions $p$ local to the agent: 
since for such propositions $p \dimp K_ip$, 
and the operator $K_i$ satisfies positive and negative introspection, our form for the guards is equally general. 
They do not require that $a_i$ appears just once, but the program can always be put into this 
form by aggregating clauses for $a_i$ into one and taking the disjunction of the guards.} 
Intuitively, this program says to repeat forever the following operation: nondeterministically execute one of the 
actions $a^i_j$ such that the corresponding guard $\phi^i_j$ is true. To ensure that 
it is always the case that at least one action enabled, we assume that $\phi^i_1 \lor \ldots \lor \phi^i_{n_i }$ 
is a valid formula; this can always be ensured by taking the last condition $\phi^i_{n_i }$ to be 
the ``otherwise'' condition $K_i \neg (\phi^i_1 \lor \ldots \lor \phi^i_{n_i -1})$, which is equivalent to 
$\neg( \phi^i_1 \lor \ldots \lor \phi^i_{n_i -1})$ by introspection. 
%ron5: +
In general,  the guards in a knowledge based program may contain common knowledge 
operators $C_G$, but we assume for technical reasons (explained below) that no $\phi^i_j$ contains 
such an operator.  

We present a formulation of semantics for knowledge-based programs that 
refactors the definitions of \cite{FHMV1997}, following the approach 
of \cite{MeydenTARK96} which uses the 
notion of environment defined above rather than the original notion of {\em context}. 
A {\em potential implementation} of a knowledge-based program $P$ in an environment $\Env$
is a joint strategy $\sgy$  in $\Env$. 
(Recall that we use ``joint strategy'' to refer to a group strategy for the group of all agents.)
Given  a potential implementation $\sgy$  in $\Env$, 
we can construct the interpreted system 
%ron5: intoduce an abbreviation for this ..., and substitute for uses below 
%$\I(\Env,\{\sgy\})$, 
$\I_\sgy= \I(\Env,\{\sgy\})$, 
which captures the possible runs of $\Env$ when 
the agents choose their actions according to the single possible joint strategy $\sgy$. 
Given this interpreted system, we can now interpret the epistemic guards in $P$. 
Say that a state $s$ of $\Env$ is $\sgy$-reachable if there is a point $(r,m)$ of 
%$\I(\Env,\{\sgy\})$ ... etc 
$\I_\sgy$ 
with $r_e(m) = s$. 
We note that for a formula $K_i \phi$, and a point $(r,m)$ of  $\I_\sgy$, 
the statement $\I_\sgy,(r,m) \models K_i \phi$ depends only on the state $r_e(m)$
of the environment at $(r,m)$. 
%ron8: 
Recall that $r_e(m)$ determines $r_i(m)$ for $i \in \Ags$. 
For an $\sgy$-reachable state $s$ of  $\Env$, 
it therefore makes sense to define satisfaction of $K_i\phi$ at $s$ rather than at a point, by 
$\I_\sgy,s \models K_i \phi$ if  $\I_\sgy,(r,m) \models K_i \phi$ for all $(r,m)$ with $r_e(m) = s$. 
We define 
%ron8: + 
a joint strategy 
$\sgy$ to be an {\em implementation} of $P$ in $\Env$ if for all 
$\sgy$-reachable states $s$ of $\Env$ and agents $i$, we have 
$$ \sgy_i(s) = \{ a^i_j ~|~  1\leq j\leq n_i,~~\I_\sgy,s \models \phi^i_j\}~.$$
Intuitively, the right hand side of this equation is the set of actions that are 
enabled at $s$ by $P_i$ when the tests for knowledge are interpreted using the system 
obtained by running the strategy $\sgy$ itself. 
The condition states that the strategy is an implementation 
if it enables precisely this set of actions at every reachable state. 
%ron8:+ 
It is easily checked that a strategy $\alpha_i$ satisfying the above equation is uniform. 

We now show that our framework for strategic reasoning can express the 
same content as a knowledge-based program by means of a formula, and that this 
enables the framework to be used for reasoning about knowledge-based program 
implementations. 
%ron5: moved this up here, to explain what follows better, and reworked 
%Note that here we check ${\bf imp}(P)$ in the system 
%%xiaowei: $\Iunif(\Env)$ 
%$\I(E,\Strat^\unif(E))$
%that contains {\em all} 
%%ron3: 
%locally uniform 
%joint strategies, 
%not just a single strategy $\alpha$, as in the definition of implementation for a knowledge-based program. The 
%point of this is that {\it a priori}, we do not have an implementation at hand. 
In general, implementations $\alpha$ of a knowledge based program $P$ can be hard to find, and
there may be one, many or no implementations of a given knowledge based program. 
We therefore work in strategy space $\I(E, \Strat^\unif)$, which contains all candidate 
implementations, and develop a formula $\mathbf{imp}(P)$ such that 
for a given run $r$, the formula $\mathbf{imp}(P)$  holds at a point of $r$ iff 
the joint strategy encoded in $r$ is an implementation of $P$ in $E$.

We need one constraint on the environment. 
Say that an environment $\Env$ is {\em action-recording} if 
%ron8: + 
for all agents $i$,  
for each $a\in \Acts_i$
there exists an  atomic  proposition $\did_i(a)$ such that  for $s\in I$ we have $\did_i(a) \not \in \pi(s)$ 
and for all states $s,t$ and joint actions $a$ such that $(s,a,t) \in \trans$, 
we have $\did_i(b) \in \pi(t)$ iff $b= a_i$.
%ron8: , for all agents $i$. 
%ron5: +f
Intuitively, this means that we can determine from a non-initial state the joint action 
that was executed in reaching that state. 
It is easily seen that any environment can be made action-recording, just by adding 
a component to the states that records the latest joint action. 

We can now express  knowledge-based program 
implementations as follows.  The main issue that we need to 
deal with is that the semantics of knowledge formulas in knowledge-based programs 
is given with respect to a system $\I_\alpha$, in which it 
is {\em common knowledge} that the joint strategy in use is $\alpha$. 
In general, strategies are not common knowledge in the strategy space 
%ron5: add unif
$\I(\Env, \Strat^\unif)$
within which we wish to reason about knowledge-based program implementations.  
We handle this by means of a transformation of formulas. 

\newcommand{\trlk}[1]{#1^\$}

For a formula 
%xiaowei: $\psi$ 
$\phi$ of $\CTLsK(\Ags,\Prop)$, 
%ron5: 
not containing common knowledge operators,
write $\trlk{\phi}$ for the formula of $\ESL$ 
(in fact, of 
%xiaowei9: $\CTLsK(\Agse \cup \strat(\Ags), \Prop)$)
$\CTLsK(\Ags \cup \strat(\Ags), \Prop)$)
obtained from the
following recursively defined transformation: 
\begin{align*} 
& \trlk{p} = p \\ 
&  \trlk{(\neg \phi)} = \neg \trlk{\phi} \\
& \trlk{(\phi_1\land \phi_2)} = \trlk{\phi_1} \land \trlk{\phi_2} \\ 
%ron5: is special case of the following  
%&  \trlk{(K_i\phi)} = D_{\{i\}\cup \strat(\Ags)} \, \trlk{\phi} \\ 
&  \trlk{(D_G\phi)}  = D_{G\cup \strat(\Ags)}\, \trlk{\phi}  \\ 
%ron5: & \trlk{(C_G\phi)} = \exists x( \lid{\strat(\Ags)}{x} \land C_G(\lid{\strat(\Ags)}{x} \rimp \trlk{\phi})) \\ 
& 
%xiaowei:  \trlk{A \phi}
\trlk{(A \phi)} = A \trlk{\phi} \\
& 
%xiaowei: \trlk{ \nxt \, \phi }
\trlk{ (\nxt \, \phi) } = \nxt ~ \trlk{\phi} \\
& \trlk{(\phi_1 \until \phi_2)} =  (\trlk{\phi_1}\,  \until \, \trlk{\phi_2})
\end{align*}

Intuitively, this substitution says that knowledge operators  in $\phi$ are to be interpreted as if 
it is  known that the current joint strategy is being played. In the case of an operator 
%ron5: $K_i$ or, more generally, 
$D_G$, 
%ron5: + 
which includes the special case 
%ron8: $K_i$, 
$K_i = D_{\{i\}}$, 
the 
translation handles this by adding $\strat(\Ags)$ to the set of agents that are kept fixed when moving through the
indistinguishability relation.
%ron5: cut
%In the case of the common knowledge operator $C_G$, this does not 
%work, for the same reasons as already explained in the context of the translation for the 
%operator  $\atlop{H}^\bullet_{C(G)}\phi$ in the section on ATEL above. Here we resolve this problem in 
%a similar way using the  operators $\exists x$ and $\lid{\Ags}{x}$. 
%
%now resolved, I am adding the proof 
%[[ FIXME: I am not so sure that the translation is correct in the common knowledge case any more. 
%It does not appear to fix the selected strategy through knowledge chains. 
%So it may pass through states that are not reachable on the selected strategy. 
%The extended operator $C_{\{ \{i\}\cup \strat(G)~|~i\in H\}} \phi$ may be needed here. 
%Proof needed ! ]]  
%

Let  $${\bf imp}(P) = D_{\strat(\Ags)} ( \bigwedge_{i \in \Ags, j =1 \ldots n_i} ((\phi^i_j)^\$ \dimp 
%ron3: EX
E\nxt
\did_i(a^i_j))).$$
Intuitively, this  formula says that the current joint strategy gives an implementation of the knowledge-based program $P$. 
More precisely, we have the following: 

\begin{propn} \label{prop:kbp}
Suppose that $P$ is a knowledge-based program
%ron5: + 
in which the guards do not contain common knowledge operators. 
Let $\sgy$ be a locally uniform joint strategy in $\Env$ and let $r$ 
be a run of 
%xiaowei: $\Iunif(\Env)$
$\I(E,\Strat^\unif(E))$, 
in which the agents are running 
joint 
%xiaowei: add 
%ron8: no noeed to repeat 
%locally uniform 
strategy $\sgy$, i.e., 
$r(0) = (s, \alpha)$ for some state $s$. 
Let $m\in \nat$.
Then 
%ron5: swapped order of iff, displayed  
%ron3: $\Iunif(\Env),
$$\I(E,\Strat^\unif(E)), (r,m) \models {\bf imp}(P)$$
iff the strategy $\sgy$ is an implementation of knowledge-based program $P$ in $\Env$. 
\end{propn} 

%ron5: added proof 
\begin{proof} 
For brevity, we write just $\I$ for $\I(E,\Strat^\unif(E))$. 
First, we claim that for a formula $\phi$ not containing common knowledge operators, 
we have  $\I,(r,m) \models \trlk{\phi}$ iff $\I_\alpha, (r,m) \models \phi$, where $r(m) = (s,\alpha)$. 
The proof is by induction on the construction of $\phi$. The base case of atomic propositions, and the cases
for boolean and linear temporal operators are straightforward. 

Consider the case  $\phi = A\psi$, where we have  $\trlk{(A\psi)} = A(\trlk{\psi})$. 
Observe that if $r$ and $r'$ are runs of $\I$, with $r[0\ldots m] = r'[0\ldots m]$, then 
$r$ and $r'$ encode the same strategy $\alpha = r_{\strat(\Ags)}(0)$. 
Now $\I,(r,m) \models A(\trlk{\psi})$ iff $\I,(r',m)\models \trlk{\psi}$ for all runs $r'$ of $\I$ with 
 $r[0\ldots m] = r'[0\ldots m]$. By the observation, this is 
 equivalent to 
$\I,(r',m)\models \trlk{\psi}$ for all runs $r'$ of $\I_\alpha$ with 
 $r[0\ldots m] = r'[0\ldots m]$.
 By induction, the latter is equivalent to 
 $\I_\alpha,(r',m)\models \psi$ for all runs $r'$ of $\I_\alpha$ with 
 $r[0\ldots m] = r'[0\ldots m]$, i.e., to $\I_\alpha,(r,m) \models A\psi$.  
 Hence $\I,(r,m) \models \trlk{(A\psi)}$ iff $\I_\alpha,(r,m) \models A\psi$.

Finally, consider the case  $\phi = D_G\psi$, where we have  $\trlk{(D_G\psi)} = D_{G\cup \strat(\Ags)} (\trlk{\psi})$. 
Observe that if $(r,m)$ and $(r',m')$ are points of $\I$ with $(r,m) \sim_{G\cup\strat(\Ags)}  (r',m')$, then 
$r$ and $r'$ encode the same strategy $\alpha = r_{\strat(\Ags)}(0)$ and $(r,m) \sim_{G}  (r',m')$. 
Conversely, if $(r,m)$ and $(r',m')$ are points of $\I_\alpha$, i.e., both encode joint strategy $\alpha$, 
then  $(r,m) \sim_{G}  (r',m')$ implies $(r,m) \sim_{G\cup\strat(\Ags)}  (r',m')$. 
Now $\I,(r,m) \models D_{G\cup \strat(\Ags)} (\trlk{\psi})$ iff $\I,(r',m')\models \trlk{\psi}$ for all points $(r',m')$ of $\I$ with 
 $(r,m) \sim_{G\cup\strat(\Ags)}  (r',m')$. By the observation, this is 
 equivalent to 
$\I,(r',m')\models \trlk{\psi}$ for all points $(r',m')$ of $\I_\alpha$ with 
 $(r,m) \sim_{G}  (r',m')$.
 By induction, this is equivalent to 
 $\I_\alpha,(r',m')\models \psi$ for all points $(r',m')$ of $\I_\alpha$ with 
 $(r,m) \sim_{G}  (r',m')$, i.e., to $\I_\alpha,(r,m) \models D_G\psi$.  
 
This completes the proof of the claim.  Next, note that, for a point $(r,m)$ with $r(m) = (s,\alpha)$, 
for action $\alpha^i_j$ of agent $i$, we have $\I,(r,m) \models E\nxt \did_i(a^i_j)$ iff $a^i_j \in \alpha_i(s)$. 

Suppose that $\alpha$ is an implementation of $P$ in $E$, and let $(r,m)$ be a point of $\I$ with $r(m) = (s,\alpha)$, 
We show that $\I,(r,m) \models \mathbf{imp}(P)$. For this, we let $(r',m')$ be a 
point with $(r',m') \sim_{\strat(\Ags)} (r,m)$, and show that 
$\I,(r',m') \models  \bigwedge_{i \in \Ags, j =1 \ldots n_i} ((\phi^i_j)^\$ \dimp  E\nxt \did_i(a^i_j))$.
From $(r',m') \sim_{\strat(\Ags)} (r,m)$ it follows that $r'(m') = (t,\alpha)$ for some state $t$ of $E$. 
Thus, from what was noted above, $\I,(r',m') \models E\nxt \did_i(a^i_j)$ iff $a^i_j \in \alpha_i(t)$. 
Since $\alpha$ is an implementation of $P$ in $E$, this holds iff 
$\I_\alpha,(r',m') \models \phi^i_j$. By the claim proved above, $\I_\alpha,(r',m') \models \phi^i_j$ is 
equivalent to $\I,(r',m') \models \trlk{(\phi^i_j)}$.
Thus, we have that  $\I,(r',m') \models \trlk{(\phi^i_j)} \dimp E\nxt \did_i(a^i_j)$. 
It follows that $\I,(r,m) \models \mathbf{imp}(P)$. 

Conversely, suppose that $\I,(r,m) \models \mathbf{imp}(P)$, and let $r(m) = (s,\alpha)$. 
We show that $\alpha$ is an implementation of $P$ in $E$. Let
$t$ be any $\alpha$-reachable state, with, in particular, $(r',m')$ a point of $\I_\alpha$ 
with $r'(m') = (t,\alpha)$. 
We need to show that for all agents $i$, we have 
$$ \sgy_i(t) = \{ a^i_j ~|~  1\leq j\leq n_i,~~\I_\sgy,t \models \phi^i_j\}$$
i.e., that for all $i,j$ we have $a^i_j \in \alpha_i(t)$ iff $\I_\sgy,t \models \phi^i_j$.
Note that $(r,m) \sim_{\strat(\Ags)} (r',m')$, so we have that 
$$\I,(r',m') \models  \bigwedge_{i \in \Ags, j =1 \ldots n_i} (\trlk{(\phi^i_j)} \dimp E\nxt \did_i(a^i_j)).$$
As in the previous paragraph, 
$a^i_j \in \alpha_i(t)$ iff $\I,(r',m') \models E\nxt \did_i(a^i_j)$, 
which is equivalent to  $\I,(r',m') \models \trlk{(\phi^i_j)}$, 
and by the claim proved above, equivalent to 
 $\I_\alpha,(r',m') \models \phi^i_j$, 
 i.e.,  $\I_\alpha,t \models \phi^i_j$. 
 Thus $a^i_j \in \alpha_i(t)$ iff $\I_\alpha,t \models \phi^i_j$, for all $i,j$, which is what we needed
 to prove.  
\end{proof} 

In particular, as  a consequence of this result, it follows that several properties 
of knowledge-based programs  (that do not make use of common knowledge operators)
can be expressed in the system 
%xiaowei: $\Iunif(\Env)$: 
$\I(E,\Strat^\unif(E))$: 
\be 
\item The statement that there exists an implementation of $P$ in $\Env$ can be expressed by 
$$ 
%xiaowei: \Iunif(\Env) 
\I(E,\Strat^\unif(E))
\models \neg D_{\emptyset}\neg  {\bf imp}(P) $$ 
 \item The statement that  all implementations of $P$ in $\Env$ 
 guarantee that formula $\phi$ of $\CTLsK(\Ags, \Prop)$ (which may contain knowledge operators) holds at all times can be expressed 
 by $$  
 %xiaowei: \Iunif(\Env) 
 \I(E,\Strat^\unif(E))
 \models D_{\emptyset}(  {\bf imp}(P) \rimp 
 %ron: this operator not needed: D_{\sigma(\Ags)}  
 \phi^\$)$$ 
 \ee

We remark that as a consequence of these encodings and 
%ron5: the fact that the model checking problem described 
Theorem~\ref{mcCTLsK} (in section~\ref{sec:mc} below) that 
%xiaowei9: $\CTLsK(\Ags^+ \cup \strat(\Ags), \Prop)$ 
$\CTLsK(\Ags \cup \strat(\Ags), \Prop)$ 
model checking 
%xiaowei: below 
%ron5: above 
in strategy space 
is in PSPACE,  
we obtain the following result: 

%ron5: made this into a corollary for emphasis, reference 
\begin{cor} \label{cor:kbpcomplex}
The following are in PSPACE: 
\begin{enumerate} 
\item Given a finite environment $\Env$ and a knowledge based program $P$, determine if $P$ has an implementation in $\Env$. 
\item Given a finite environment $\Env$ and a knowledge based program $P$ and a  $\CTLsK(\Ags, \Prop)$  formula $\phi$, 
determine if $\I_\alpha \models \phi$ for all implementations $\alpha$ of $P$ in $\Env$. 
\end{enumerate} 
\end{cor} 

%that testing for the existence of an implementation of a
%knowledge-based program is in PSPACE, as is the problem of 
%determining whether all implementations satisfy some formula. 

For testing existence 
%ron5:  +
(part 1 of Corollary~\ref{cor:kbpcomplex}), 
this result was known \cite{FHMV1997}, 
but the result on verification 
(part 2 of Corollary~\ref{cor:kbpcomplex}), 
has not previously been noted
(though it  could also have been shown using 
the techniques in~\cite{FHMV1997}.)  

%ron5: + 
One might expect that Proposition~\ref{prop:kbp} can be extended to 
knowledge based programs in which formulas may contain common knowledge 
operators, simply by adding the condition  
$$ \trlk{(C_G \phi)} = C_{G\cup \strat(\Ags)}\, \trlk{\phi}$$ 
to the transformation of formulas. However, this does not
work, because the interpretation of $C_G\phi$ in a subsystem $\I_\alpha$ 
is based on  chains of points $(r_0,m_0) \sim_{i_1}   (r_1,m_1) \sim_{i_2}  \ldots \sim_{i_k}  (r_k,m_k)$, 
such that $r_j$ is a run of joint strategy $\alpha$ for all $j=1\ldots k$. 
By contrast, the semantics  of $C_{G\cup \strat(\Ags)} \trlk{\phi}$ in $\I$ 
involves chains of points which are not required to preserve the joint strategy: 
rather each step preserves the local state of one of the agents in $G$ \emph{or} the strategy of one of the agents.  
Neither does it work to use the translation 
$$ \trlk{(C_G\phi)} = \exists x( \lid{\strat(\Ags)}{x} \land C_G(\lid{\strat(\Ags)}{x} \rimp \trlk{\phi}))$$
since the operator $C_G$ similarly does not preserve the joint strategy, and it is not 
enough to test only at the end of the chain that the joint strategy has been preserved. 

It is not clear that the translation we require for common knowledge 
is expressible in $\ESL$. What would work is to generalize the common knowledge 
operator to the form $C_X \phi$, where $X$ is a set of sets of agents (instead of a set of agents), 
and to define the semantics of this more general form as the greatest fixpoint of equation 
  $$ C_X \phi =  \bigwedge_{G\in X} D_G (\phi \land C_X\phi)~. $$
We could then use the translation 
$$ \trlk{(C_G \phi)} = C_{\{ \{i\} \cup \strat(\Ags) ~|~ i \in G\}}\, \trlk{\phi}~.$$ 
Here the semantics involves chains of points in which we preserve
the joint strategy and one of the agents in $G$. 
While this is an interesting extension, that we consider worthy of study, we do not pursue 
this as an ad hoc extension here, leaving it for future consideration in a broader context, such as a logic that
extends $\ESL$ by mu-calculus operators. 
   
\section{Model Checking}\label{sec:mc}

\newcommand{\stratdesc}{sd}
\newcommand{\stratconf}{\gamma}

Model checking is the problem of computing  whether a formula of a logic holds in a given model. 
We now consider the problem of model checking ESL and various of its fragments.

The model checking problem is to  determine whether $\Cont, \Env,\Strat\models \phi$ for a finite state environment $\Env$, a set $\Strats$ of strategies 
and a context $\Cont$, where  $\phi$ is an $\ESL$ formula. 
%xiaowei: since you have defined the sentence, why not define model checking problem on the sentences, so that we can remove the context from the definition.  
%ron2: Could do, but may as well keep it as is since it is more general and I am not sure anymore of the general form is used anywhere. 

%ron6: moved this para here from section 2.1 
For purposes of results concerning the complexity of model checking, we need a measure
of the size of a finite environment. Conventionally, the size of a model is taken to be the 
length of a string that lists its components, and typically, this is polynomial in the 
number of states of the model. 
%ron8: However, w
We note that in the 
case of environments, the set of labels $\Acts$ of the transition relation 
is an $n$-fold cartesian product, where $n = \Ags$, so 
%ron8: this may be of exponential size in the size of the other components. 
(if the number of agents is a variable in the class of environments we consider) the size of the transition relation 
may be exponential in the number of 
%ron9: no, actually it is quadratic in states! 
%states. 
agents.\footnote{
%ron9: demote this discussion ot a footnote, since it is not used in what follows
%ron8:  To address this, we allow
For certain classes of environments, we could address this 
by allowing  that the transition relation $\trans$ is presented in some notation with the property
that (1) given states $s,t$ and a joint action $a$, 
%ron9: 
the representation of $\ptrans{}$ has size polynomial in the size of $|S|$ and $|\Acts|$, 
and (2) 
determining whether $s\ptrans{a} t$ is 
in PTIME
%ron9: 
given $s,a,t$ and the representation of $\ptrans{}$. 
One example of a presentation format with this property is the 
class of {\em turn-based environments}, where at each state $s$, there exists an 
agent $i$ such that if $s\ptrans{a} t$ for a joint action $a$, then for all joint actions $b$
with $a_i = b_i$ we have $s\ptrans{b} t$. That is,  the set of states reachable
in a single transition from $s$ depends only on the action performed by agent $i$. 
In this case, the transition relation can be presented more succinctly 
as a subset of $S\times (\cup_{i\in \Ags} A_i)\times S$. 
While it would be interesting to consider the effect of such optimized representations on 
our complexity results, we do not pursue this here.}

%ron9: 
However there is a more severe issue with respect to the parameter $\Strat$ 
of the model checking problem. A strategy for a single agent is a mapping from 
states to sets of actions of the agent. Hence the number of strategies we may need to 
list to describe $\Strat$  explicitly could be exponential in the 
number of \emph{states} of the environment, even in the case of a single agent. 
%ron9: For generality, we abstract
To address this issue, we abstract 
%ron8: +
the strategy set 
 $\Strats$ to a parameterized 
class such that for each environment $\Env$, the set $\Strats(\Env)$ is a set of strategies for $\Env$. 
When $\mathcal{ C}$ is a complexity class, 
we say that the parameterized class $\Strats$  {\em can be presented in $\mathcal{ C}$}, 
 if the problem of determining,  
given an 
%xiaowei: an 
environment  $\Env$ and a joint strategy $\alpha$ for  $\Env$, whether  $\alpha \in \Strats(\Env)$,  is in complexity class $\mathcal{ C}$. 
For example, the class 
%ron3: $\Strats(E)$ of all strategies for  $\Env$ can be PTIME-presented, as can $\Strats^\unif(\Env)$,  $\Strats^{\detstrat}(\Env)$ and  $\Strats^{\unif,\detstrat}(\Env)$. 
$\Strats$ of all strategies for  $\Env$ can be PTIME-presented, as can $\Strats^\unif$,  $\Strats^{\detstrat}$ and  $\Strats^{\unif,\detstrat}$. 

We first consider the complexity of model checking the full language $\ESL$.
The following result gives an upper bound of EXPSPACE for this problem.

\begin{theorem}\label{mcESLupper}
Let $\Strats$  be a parameterized   class of strategies 
%ron3: for environments $\Env$
that can be presented in  EXPSPACE.
%ron3:  (or any lower class). 
The complexity of deciding, given an environment $\Env$, an \ESL\ formula $\phi$ 
and a context $\Cont$ for 
%ron3:  the free variables in an $\ESL$  formula  $\phi$ relative to $\Env$ and $\Strats(\Env)$,  
$\I(E, \Strats(E))$, defined on the free variables of $\phi$,  
whether $\Cont, \Env,\Strat(\Env)\models \phi$, is in EXPSPACE. 
\end{theorem}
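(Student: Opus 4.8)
The plan is to reduce model checking $\ESL$ in the infinite system $\I(\Env,\Strats(\Env))$ to model checking a combined temporal--epistemic logic with state quantifiers over a single \emph{finite} structure $M^*$, to observe that $|M^*|$ is only singly exponential in $|\Env|$, and then to argue that a procedure running in space polynomial in $|M^*|$ suffices --- which is EXPSPACE in $|\Env|$, with room to spare.

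First I would make the state space explicit. A global state of $\I(\Env,\Strats)$ has the form $g=(s,\langle O_i(s)\rangle_{i\in\Ags},\langle\alpha_i\rangle_{i\in\Ags})$ with $s\in S$ and $\langle\alpha_i\rangle\in\Strats(\Env)$, the $L_i$-components being determined by $s$. Since each strategy $\alpha_i:S\to{\cal P}(\Acts_i)\setminus\{\emptyset\}$ is describable by $|S|\cdot|\Acts_i|$ bits, each $g$ is a string of length polynomial in $|\Env|$, and there are at most $|S|\cdot\prod_{i\in\Ags}2^{|S|\cdot|\Acts_i|}$, i.e.\ at most singly exponentially many, of them. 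I would build $M^*$ explicitly: enumerate all candidate joint strategies $\alpha$ (polynomial-size objects, exponentially many in number), use the assumed EXPSPACE presentation of $\Strats$ to discard those with $\alpha\notin\Strats(\Env)$, and for each surviving $\alpha$ compute, by reachability in $\trans$ restricted to the $\alpha$-enabled joint actions, the set of environment states reachable from $I$; the states of $M^*$ are the associated global states. Seriality of $\trans$ together with $\alpha_i(s)\neq\emptyset$ guarantees every such state lies on an infinite $\alpha$-respecting path, hence is realised by a run of $\I(\Env,\Strats)$, and conversely every run projects to such a path, so $M^*$ carries exactly the global states occurring in the system. On $M^*$ I put a temporal successor relation ($g\to g'$ iff the strategy components agree and $g_e\ptrans{a}g_e'$ for some $a$ that is $\alpha$-enabled at $g_e$), the epistemic relations $\sim_i$ (agreement on the $i$-th component), the derived relations $\sim^D_G$ and $\sim^C_G$ (the latter by transitive closure), and the labelling $g\mapsto\pi(g_e)$. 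Writing down $M^*$ together with all of these relations takes space $2^{\mathrm{poly}(|\Env|)}$, hence is within EXPSPACE.

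Next I would observe that, under this translation, $\ESL$ over $\I(\Env,\Strats)$ coincides with $\CTLsK$ augmented by the two constructs $\exists x$ and $\lid{i}{x}$, interpreted over $M^*$: two runs sharing a prefix up to time $m$ determine the same $M^*$-state and hence the same set of continuations, so $A$, $\nxt$ and $\until$ become the ordinary $\CTLs$ operators over the transition structure of $M^*$; $D_G$ and $C_G$ are the box modalities of $\sim^D_G$ and $\sim^C_G$; a context is an assignment of $M^*$-states to the at most $|\phi|$ free variables of $\phi$, of total size polynomial in $|\Env|$; $\lid{i}{x}$ compares the $i$-th components of the current state and of $\Cont(x)$; and $\exists x\,\psi$ ranges over the state set of $M^*$. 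Model checking this logic over an explicit finite structure $N$ can be done in space polynomial in $|N|+|\phi|$ by a recursion on the structure of the formula that carries along a context: boolean connectives and $\lid{i}{x}$ are immediate; $D_G\psi$ and $C_G\psi$ are decided by recursing at every state of the relevant $\sim^D_G$- resp.\ $\sim^C_G$-class; $\exists x\,\psi$ loops over the at most $|N|$ candidate values for $x$ and recurses on $\psi$ with the updated context; and an existentially path-quantified formula $\neg A\neg\psi$ is treated in the usual $\CTLs$ fashion, by recursively evaluating (in the current context, at every state of $N$) the maximal state subformulas of the path formula $\psi$, relabelling $N$ accordingly, and running the standard automata-theoretic emptiness test for the resulting $\mathrm{LTL}$ formula, which uses space polynomial in $|N|$ and $|\psi|$. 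Since the recursion has depth at most $|\phi|$, stores at each level only a polynomial-size context, and reuses space across its loops and the automaton construction, the total space is polynomial in $|N|+|\phi|$. Taking $N=M^*$, of size $2^{\mathrm{poly}(|\Env|)}$, the whole computation runs in space $2^{\mathrm{poly}(|\Env|)}$; one also checks at the outset that the supplied context assigns states of $M^*$, and evaluates $\phi$ at every initial state of $M^*$ (those with $g_e\in I$), which gives the EXPSPACE bound.

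I expect the last step to be the main obstacle: confirming that the combined model-checking algorithm stays in \emph{polynomial} space in $|M^*|$ and $|\phi|$, rather than exponential space in $|M^*|$. The $\CTLs$ part already forces polynomial space in the formula even on a two-state model; $C_G$ adds a transitive-closure ingredient; and the quantifier $\exists x$ ranges over the entire, exponential state space and must interleave correctly with the temporal and epistemic recursion, so one must verify that the natural context-carrying recursion reuses space across all of its loops and automaton constructions. A secondary point needing explicit care is the quantitative claim that $|M^*|$ is at most \emph{singly} exponential in $|\Env|$ --- precisely where the polynomial-size encoding of strategies and the EXPSPACE-presentability of $\Strats$ are used --- since everything else then fits comfortably inside EXPSPACE.
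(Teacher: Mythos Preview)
Your overall architecture is correct and close to the paper's: both arguments construct the same singly-exponential state space $S\times\Strats(\Env)$ and then argue that model checking over this explicit structure stays within space polynomial in its size, giving EXPSPACE. The difference is in how the constructs $\exists x$ and $\lid{i}{x}$ are dealt with. The paper \emph{compiles them away syntactically}: it introduces fresh propositions $p_{(s,\alpha)}$ (one per state of the product structure), translates $\lid{i}{x}$ into the disjunction $\bigvee_{g_i=\Cont(x)_i}p_g$, and translates $\exists x(\psi)$ into the disjunction $\bigvee_{g} \psi^{\Cont[g/x]}$. This yields a pure $\CTLsK$ formula of size $2^{\mathrm{poly}(|\Env|,|\phi|)}$, and the paper then invokes the known PSPACE-completeness of $\CTLsK$ model checking as a black box. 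You instead keep $\exists x$ and $\lid{i}{x}$ in the formula and argue for a direct recursive algorithm over $M^*$ that runs in space polynomial in $|M^*|+|\phi|$.

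There is, however, a genuine gap in your description, precisely at the point you flag as the main obstacle. Your recursion says ``$\exists x\,\psi$ loops over candidate values for $x$ and recurses on $\psi$ with the updated context'', and your $A\psi$ case says ``recursively evaluate, in the current context, the maximal state subformulas of the path formula $\psi$''. These two clauses do not compose when $\exists x.\chi$ occurs \emph{inside} a path formula with $\chi$ itself a path formula, e.g.\ $A(\exists x.(\lid{i}{x}\,\until\,\lid{j}{x}))$. Here $\exists x.\chi$ is not a state subformula (its truth depends on the run, not just the current global state), so the $A$-case does not treat it as a labelled atom; but the only state subformulas further inside are $\lid{i}{x}$ and $\lid{j}{x}$, which use a variable bound \emph{within} the path formula and hence cannot be evaluated ``in the current context'' before the LTL automaton is built. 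Your ``loop over $x$ and recurse'' rule applies at a state, not along a path, so it does not cover this case either. The paper's syntactic unfolding sidesteps this entirely: after translation the formula is pure $\CTLsK$, so the state/path discipline of $\CTLs$ applies without modification, at the price of an exponential formula blowup that is harmless inside EXPSPACE. If you want to keep your direct approach, you must either (i) perform the same unfolding locally whenever $\exists x$ sits inside a path formula, or (ii) argue that the LTL automaton can carry the quantified-variable bindings in its state (roughly $|M^*|^{|\phi|}\cdot 2^{|\phi|}$ states, handled on the fly in space $O(|\phi|\log|M^*|)$). Either fix preserves the EXPSPACE bound, but neither is the ``standard automata-theoretic emptiness test'' you invoke.
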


\begin{proof} 
The problem can be reduced to that of model checking the temporal epistemic logic $\CTLsK$ 
obtained by omitting the constructs $\exists$ and $\lid{i}{x}$ from the language $\ESL$. 
This is known to be PSPACE-complete.%
\footnote{The result is stated explicitly in \cite{EGM2007:LFCS}, but techniques sufficient for a 
proof (involving guessing a labelling of states by knowledge  subformulas in order
to reduce the problem to LTL model checking and also verifying the guess
by LTL model checking) were already present in \cite{Vardi96tark-kbp}.
The branching operator $A$ can be treated as a knowledge operator for purposes of the proof.} 
The reduction 
%ron8: involved involves 
involves 
an exponential blowup of size of both the formula and the environment, 
so we obtain an EXPSPACE upper bound. 

Model checking for temporal epistemic logic takes as input a formula and a
structure that is like an environment, except that its transitions are not based on 
a set of actions for the agents. More precisely, 
an \emph{epistemic transition system} for a set of agents $\Ags$ 
is a tuple $\ets = \langle S,I, \trans, \{O_i\}_{i \in \Ags}, \pi\rangle$, 
where $S$ is a set of states, $I\subseteq S$ is the set of initial states, 
$\trans \subseteq S\times S$ is a state transition relation, for 
each $i \in \Ags$, component $O_i:S\rightarrow L_i$ is a 
function giving an observation in some set $L_i$ for the agent $i$ at each state, 
and $\pi: S\rightarrow \powerset{\Prop}$ is a propositional assignment. 
A {\em run} of $\ets$ is a sequence $r: \nat \rightarrow S$
such that $r(0) \in I$ and $r(k) \trans r(k+1)$ for all $k \in \nat$. 
To ensure that every partial run can be completed to a run, we assume that 
the transition relation is {\em serial}, i.e., that for all states $s$ there exists a 
state $t$ such that $s\trans t$.

Given an epistemic transition system $\ets$, we define an interpreted system $\I(\ets) = (\R, \pi')$  
as follows. For a run $r:\nat \rightarrow S$ of $\ets$, define the lifted run $\hat{r}: \nat \rightarrow S \times \Pi_{i\in \Ags} L_i$ (here $L_e = S$), by 
$\hat{r}_e(m) = r(m)$ and  $\hat{r}_i(m) = O_i(r(m))$ for $i\in \Ags$. Then we take $\R$ to be the set of 
lifted runs $\hat{r}$ with $r$ a run of $\ets$. The assignment $\pi'$ is given by $\pi'(r,m) = \pi(r(m))$. 
The model checking problem for temporal epistemic logic 
%ron3: 
$\CTLsK$
is to decide, 
given an epistemic transition system $\ets$ and a formula 
%ron3: $\phi$, 
$\phi\in \CTLsK$, 
whether 
 $\I(\ets),(r,0)\models \phi$ for all  runs $r$ of $\I(\ets)$.

%ron8: + 
We now show how to reduce $\ESL$ model checking to $\CTLsK$ model checking. 
Given an environment $\Env =  \langle S, I, \Acts, \trans, \{O_i\}_{i\in \Ags}, \pi\rangle$
for  
%ron3: remove for alignment 
%language 
$\ESL(\Ags, \Prop, \SVar)$, we first introduce a set of new propositions
$\Prop^*= \{p_{(s,\alpha)} ~|~ s\in S, ~\alpha\in \Sigma(E)\}$
%ron3: 
which will be interpreted at global states of the generated interpreted system. 
Each proposition $p_{(s,\alpha)}$ will be true only at the global state $(s,\alpha)$.
These propositions will help to eliminate the constructs $\lid{i}{x}$ and $\exists x$. 
We then define the epistemic transition system 
$\ets =  \langle S^*, I^*, \trans^*, \{O^*_i\}_{i\in \Ags}, \pi^*\rangle$
for the language 
%ron8: $\ETL(\Ags\cup \strat(\Ags), \Prop\cup \Prop^*, \SVar)$ extended by the propositions 
%xiaowei9: $\CTLsK(\Agse\cup \strat(\Ags), \Prop\cup \Prop^*, \SVar)$, 
$\CTLsK(\Ags\cup \strat(\Ags), \Prop\cup \Prop^*, \SVar)$, 
in which the propositions have been extended by the set 
%xiaowei: $\Prop$, 
$\Prop^*$, as follows: 
\be
\item 
%ron8: the later translation of \exists requires reachable global states, fix this here 
%$S^* = S\times \Sigma(\Env)$, i.e., a state of $\ets$ consists of a state of $\Env$ together with a joint strategy, 
$S^* = \{ (s,\alpha) \in S\times \Sigma(\Env)~|~ \text{$s$ is reachable in $E$ using $\alpha$}\}$, 

\item $I^* = I\times \Sigma(\Env)$, 
\item $(s, \alpha) \trans^* (t, \beta)$ iff $s\ptrans{a}t$ 
%ron3: + 
(in $\Env$)
for some joint action $a$ 
%ron3: \alpha is a tuple! 
%such that $a\in \alpha(s)$, 
and $\beta = \alpha$, 
\item $O_i^*(s,\alpha) = O_i(s)$ 
%ron3: + 
and $O_{\strat(i)}^*(s,\alpha) = \alpha_i$, 
for $i\in\Ags$,

\item $\pi^*(s,\alpha) = \pi(s) \cup \{p_{(s,\alpha)}\}$.  
\ee 
We can treat the states $(s,\alpha)\in S^*$ as tuples indexed by 
$\Ags\cup \strat(\Ags)\cup \{e\}$ by taking 
$(s,\alpha)_i = O_i(s)$ and  $(s,\alpha)_{\strat(i)} = \alpha_i$ for $i\in \Ags$, and $(s,\alpha)_e = s$. 

Note that  a joint strategy for an environment $\Env$ can be represented in space 
%ron3: $|S|\times |\Acts|$.
%ron8: $|\Ags|\times |S|\times |\Acts|$, 
$\Sigma_{i\in \Ags}  |S|\times |\Acts_i|$, 
and the number of strategies is exponential in the space requirement. 
Thus, the size of $\ets$ is $O(2^{poly(|\Env|)})$. 
Note also that the construction of $\ets$ can be done in EXPSPACE
so long as verifying whether an individual strategy $\alpha$ is in $\Strat(\Env)$ 
can be done in EXPSPACE.

We also need a transformation of the formula. 
Given a formula $\phi$ of $\ESL$ and a context $\Cont$ for $\Env$, 
we define a formula $\phi^\Cont$, inductively, by 
\be
\item $p^\Cont = p$, for $p\in \Prop$, 
\item $\lid{i}{x}^\Cont = 
%ron3: \bigvee_{g\in S^*, g_i =\Cont(x)_i}~ p_g $  
\bigvee \{ p_g ~|~ {g\in S^*, g_i =\Cont(x)_i} \}$  
\item $(\neg \phi)^\Cont = \neg \phi^\Cont$, ~ $(\phi_1\land \phi_2)^\Cont = \phi_1^\Cont \land \phi_2^\Cont$, 
\item $(\nxt \,\phi)^\Cont = \nxt \,(\phi^\Cont)$, ~ $(\phi_1\until \phi_2)^\Cont = (\phi_1^\Cont )\until (\phi_2^\Cont), 
%ron8: + 
~~(A\phi)^\Cont = A(\phi^\Cont)$ 
%xiaowei: add
\item $(D_G\phi)^\Cont = D_G \phi^\Cont$, $(C_G\phi)^\Cont = C_G \phi^\Cont$,
\item $\exists x(\phi)^\Cont = 
%ron3: \bigvee_{g\in S^*} \phi^{\Cont[g/x]}$. 
\bigvee \{ \phi^{\Cont[g/x]} ~|~ {g\in S^*}\} $. 
\ee 
Plainly the size of $\phi^\Cont$ is $O(2^{poly(|E|,|\phi|)})$, and this
formula is in $\CTLsK(
%ron8: \Ags
%xiaowei9: \Agse
\Ags
\cup \strat(\Ags), \Prop\cup \Prop^*)$. 
A straightforward inductive argument based on the semantics shows that \\
$\Cont, \Env, \Strat(\Env) \models \phi$ iff $\I(\ets) \models \phi^\Cont$. 
It therefore follows from the fact that  model checking $\CTLsK$ 
with respect to the observational semantics for knowledge is in 
PSPACE that $\ESL$ model checking is in EXPSPACE. 
\end{proof}

The following result shows that a restricted version of the model checking problem, 
where we consider systems with just one agent and uniform deterministic strategies is already EXPSPACE hard. 

\newcommand{\blnk}{\bot}  

\begin{theorem}\label{mcESLlower}
The problem  of deciding,  given an environment $\Env$ for a single agent, and an \ESL\ sentence $\phi$, 
whether $\Env,\Strat^{\unif,\det}(\Env)\models \phi$, is EXPSPACE-hard. 
\end{theorem}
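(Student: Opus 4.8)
The plan is to reduce from a standard EXPSPACE-complete problem: whether a given \emph{deterministic} Turing machine $M$ accepts a given input $w$, $|w|=n$, using at most $2^{p(n)}$ tape cells, for a fixed polynomial $p$ (reducing from exponential corridor tiling would serve equally well). From such an instance we build, in polynomial time, a single-agent environment $\Env$ together with an $\ESL$ sentence $\phi$, both of size polynomial in $n+|M|$, such that $M$ accepts $w$ iff $\Env,\Strat^{\unif}(\Env)\models\phi$. It does not matter whether we phrase the reduction so that acceptance or its negation corresponds to $\Env,\Strat^{\unif}(\Env)\models\phi$, since EXPSPACE is closed under complement; in fact $\models$ quantifies universally over runs while the construct $\exists x$ inside $\phi$ supplies existential access to all global states of $\I(\Env,\Strat^{\unif}(\Env))$, so either polarity is available.

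The crucial observation — which is exactly the lower-bound counterpart of the exponential blow-up used in the proof of Theorem~\ref{mcESLupper} — is that polynomially many freeze variables jointly encode an exponential-magnitude value. A global state is fixed by $\exists x$, and the only thing a later point of a run can learn about it is, via $\lid{i}{x}$, whether its own agent-$i$ local state coincides with the one recorded in $x$; hence a block $\exists x_0,\dots,\exists x_{p(n)-1}$ of such variables, each pinning down one of the (polynomially many) local-state values of the single agent, represents a $p(n)$-bit quantity. Consequently a polynomial-size $\ESL$ formula — using $\always$, $\nxt$ and $\until$ to navigate a run and the freeze variables to mark and revisit points — can implement a binary counter ranging over $[0,2^{p(n)})$: increment, predecessor, bit tests and equality. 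This is precisely the arithmetic needed to address the $2^{p(n)}$ tape cells of a configuration of $M$.

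Armed with this, we let $\Env$ produce runs that sweep through the computation tableau of $M$ on $w$ cell by cell and row by row, the environment state and the agent's observation carrying at each step the contents of the current cell (tape symbol, a head-present flag, the control state when the head is there) together with the bits of the cell's index and a few bookkeeping markers (row parity, counter carries, end-of-row/end-of-tableau). The agent's uniform strategy, with the nondeterminism of $\trans$, serves only to commit to this tableau and to resolve choices, so it need not itself carry anything of exponential size; and because the theorem speaks of $\Strat^{\unif}(\Env)$ rather than the deterministic uniform strategies mentioned informally above, we simply arrange that at the relevant states at most one action is available, which makes the distinction immaterial. The sentence $\phi$ then asserts of such a run that the first row spells out the initial configuration $q_0\,w\,\square\cdots\square$, that some row is accepting, and — \emph{the heart of the matter} — that every row is a legal successor of the previous one: for every cell index $i$ and every pair of consecutive rows $t,t+1$, the symbol at $(i,t+1)$ is the one prescribed by $M$'s transition function from the symbols at $(i-1,t),(i,t),(i+1,t)$. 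The first two are routine safety/reachability statements; for the third, the universal quantifications over $t$ and over $i$ come from $\always$ over the run together with the freeze mechanism, used to line up cell $i$ of row $t+1$ with cells $i-1,i,i+1$ of row $t$ (about one row, i.e.\ roughly $2^{p(n)}$ steps, earlier in the sweep) by comparing the counters held in the frozen variables. Where cleaner, the $D_G$ and $C_G$ operators can be used instead to enumerate the subclaims of the consistency check and verify each of them locally.

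The main obstacle is exactly this third condition: forcing a polynomial-size formula to enforce the local-consistency relation between configurations that are themselves of exponential length. This is where the counter simulation must be carried out in detail — implementing increment and comparison of an exponential-range counter with a fixed number of nested $\exists x$/$\lid{i}{x}$ layers, and making the inter-row ``line-up'' exact enough that no run can certify an illegal step while every genuine accepting computation of $M$ does give rise to a uniform strategy and a run satisfying $\phi$. Verifying this correspondence in both directions, and checking that the constructed $\Env$ (whose counter dynamics may be presented succinctly, e.g.\ by a circuit) still meets the earlier requirement that membership in $\trans$ be PTIME-decidable, then completes the argument; these last points are routine once the encoding is fixed.
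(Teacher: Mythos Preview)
Your sketch is essentially correct and would yield the lower bound, but it takes a route that is genuinely different from the paper's, and the difference is worth spelling out.

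You use a \emph{block} of $p(n)$ freeze variables $x_0,\dots,x_{p(n)-1}$, binding each $x_j$ (via $\lid{1}{x_j}$ at the $j$-th step of a counter block) to the agent's observation, which you arrange to contain one counter bit; the counter comparison is then a conjunction of $\lid{1}{x_j}$ tests. This is in effect a hybrid-LTL reduction: the strategy component plays no role (you even force a unique action), and the construction would go through in any logic with linear-time operators plus an observation-freeze mechanism of polynomial arity.

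The paper's proof exploits the strategic semantics directly and uses a \emph{single} quantifier $\exists x$. A global state of $\I(\Env,\Strat^{\unif})$ contains the agent's uniform strategy; if the agent has $N=p(n)$ distinct observations and two actions $\{a_0,a_1\}$, a deterministic uniform strategy is precisely an $N$-bit string. The environment is augmented with auxiliary isolated states $(t_i,j)$ (one per bit position, initial, looping, with observation $i$) so that at $(t_i,j)$ the truth of $\nxt c$ reveals the action chosen at observation $i$. The formula $\phi_{num}(x)=\alpha_{T,Q}\wedge\bigwedge_{i<N}\big((\nxt^{i+1}c)\leftrightarrow \neg D_\emptyset\neg(\lid{\sigma(1)}{x}\wedge t_i\wedge\nxt c)\big)$ then says ``the counter spelled out here equals the $N$-bit number encoded by the strategy in $x$'', and the transition check becomes $\exists x\,[\phi_{num}(x)\wedge\nxt((\neg\phi_{num}(x))\,\until\,(\phi_{num}(x)\wedge d))]$.

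What each buys: your approach is conceptually simpler and makes clear that the hardness already lives in the $\exists x/\lid{1}{x}$ fragment with a trivial strategy space (so the same argument would prove hardness for the underlying hybrid temporal-epistemic logic $\ETL$ over ordinary epistemic transition systems). The paper's approach, by contrast, shows that a \emph{single} global-state quantifier suffices once the strategic component is exploited via $\lid{\sigma(1)}{x}$, which is a sharper statement about where the complexity comes from in the specifically strategic setting. Either completes the theorem; they just locate the source of the blow-up differently.

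One small point to tighten in your write-up: when you say the state carries ``the bits of the cell's index'', make explicit that these bits are spread over $N$ consecutive steps of the run (as in the paper), not stored simultaneously in the state; otherwise a reader may worry that your state space is exponential. With that clarified, the succinct-presentation/circuit remark at the end is unnecessary in the single-agent case, since the transition relation can simply be listed.
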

 
\begin{proof} 
We show how polynomial size inputs to the problem can simulate exponential space deterministic Turing machine computations. 
Let $T= \langle Q, q_0, q_f, 
%ron8: make T terminating  
q_r,  
 A_I,A_T, \delta\rangle$ be a one-tape Turing machine solving an EXPSPACE-complete problem, 
with states $Q$, initial state $q_0$, final (accepting) state $q_f$, 
%ron8: 
final (rejecting) state $q_r$, 
input alphabet $A_I$, tape alphabet $A_T\supseteq A_I$, 
and transition function $\delta: Q\times A_T\rightarrow Q\times A_T\times \{L,R\}$.   We assume that $T$ runs in space 
 $2^{p(n)}-2$ for a polynomial $p(n)$, and that the transition relation is defined so that the machine idles in 
 its final state 
 %ron8: + 
 $q_f$
 on accepting, and idles in 
 %ron8: some other state  
 state $q_r$
 on rejecting. The tape alphabet $A_T$ is assumed to 
 contain the blank symbol $\blnk$. 
 
Define
 $\C_{T,Q} = A_T \cup (A_T\times Q)$ to be the set of ``cell-symbols'' of $T$. 
We may represent a configuration of $T$ as a finite sequence 
over the set $C_{T,Q}$, containing exactly one element $(x,q)$ of $A_T\times Q$, 
representing a cell containing symbol $x$ where the machine's head 
is positioned, with the machine in state $q$.  For technical reasons, 
we pad configurations with a blank symbol to the left and right (so configurations 
take space $2^{p(n)}$), so that the initial configuration has the head at the second tape cell 
and,  without loss of generality, assume that the machine is designed so that it never moves the head to the initial 
%ron8: 
or final 
padding blank. 
This means that the transition function $\delta$ can also be represented as a set of 
tuples 
%xiaowei: $\delta^* \subseteq C_T^4$
%ron8: renaming \delta^* to \Delta throughout 
%$\delta^* \subseteq C_{T,Q}^4$, 
$\Delta \subseteq C_{T,Q}^4$, 
such that $(a,b,c,d)\in \Delta$ iff, whenever the machine is in a configuration 
with $a,b,c$ at cells at positions $k-1,k,k+1$, respectively, the next configuration has $d$ at the cell at position $k$. 

Given the TM $T$  and a number $N = p(n)$ 
%ron8: 
(for some polynomial $p$) 
we construct an environment $\Env_{T,N}$ such that
for every input word $w$, with $|w| = n$,  
there exists a sentence $\phi_w$  of size polynomial in $n$ 
such that $\Env_{T,N},\Strat^\unif(\Env_{T,N})\models \phi_w$ iff $T$ accepts $w$. 
The idea of the simulation,  
%ron8: 
depicted in Figure~\ref{fig:EXPSPACE}, 
is to represent a 
%ron3: run 
computation 
of the Turing machine, using space $2^{N}$, 
by representing the  sequence of configurations of $T$ for the computation consecutively along a run $r$
of the environment $\Env_{T,N}$. 
(These runs travese a set of states we call $s/c$-states.) 
Each cell of a configuration will be encoded as a  block of 
$N+1$ consecutive moments of time in $r$.
%ron8: 
In a block, the first of these moments represents the cell-symbol of the cell, and the 
remaining $N$ moments represent the position of the cell in the configuration, in binary. 
Not all runs of $\Env_{T,N}$ will correctly
encode a computation of the machine, so we use the formula to 
check whether a 
%ron3: run 
computation 
of $T$ has been correctly encoded in a given
run of $\Env_T$. 
%ron8: replaced 
%%xiaowei2: add
%In Figure~\ref{fig:EXPSPACE}, we give an illustrative diagram of our construction. It represents an automata transition $(a,b,c,d)\in\Delta$. The transition writes the symbol $d$ on the cell where $b$ used to be.  Each cell, such as the $b$ and $d$ in the diagram, is simulated in the run with $N$ states such that the first state, i.e., $s_b$ and $s_d$ in the diagram, represents automata state and the following $N$ consecutive states represent the counter bits, such that the bit value is $0$ if the state is $c_0$ and is $1$ if the bit state is $c_1$. 
In order to do so, the main difficulty is to check that corresponding cells of successive configurations represented along a run are updated correctly according to the 
yields relation of the Turing machine. For this, we need to be able to identify these corresponding cells, i.e. the cells with the same position number in the binary representation. 
For this, we use the behaviour of a strategy on an additional set of states ($t$-states) to give an alternate representation of a binary number, 
one that may be accessed in a formula by means of existential quantification. The formula then compares the representations of the binary number at two locations in the 
the $s/c$-run with the representation of the binary number in the strategy, in order to check that the numbers represented at the two locations in the $s/c$-run are the same. 
Details are given below.

%xiaowei2: add a diagram
\begin{figure}[h]  
\centerline{\includegraphics[height=12cm]{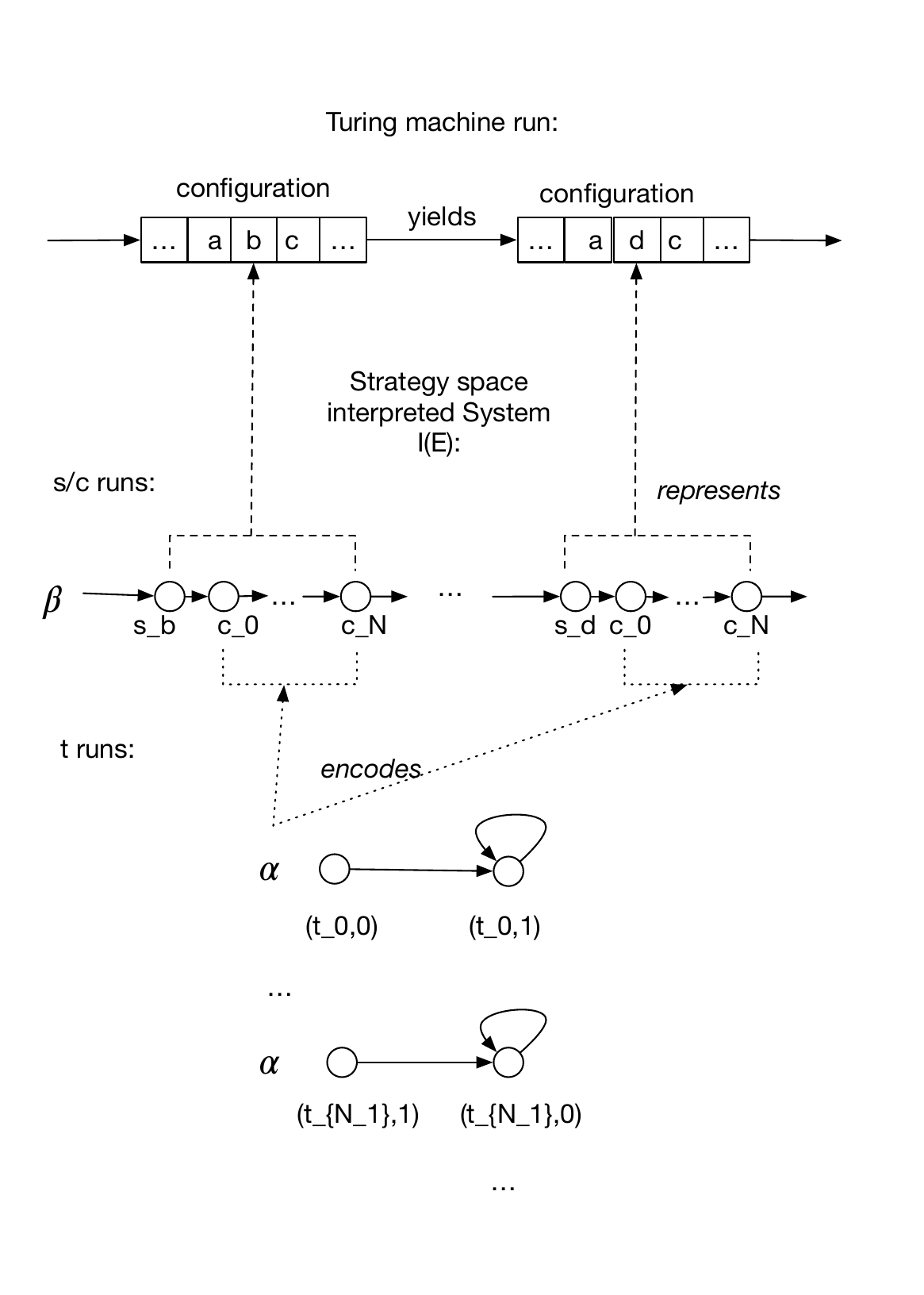} } 
%ron8: it needs a caption 
\caption{Structure of the encoding
\label{fig:EXPSPACE}}
\end{figure} 

\newcommand{\cbit}{\mathtt{c}} 

The environment $\Env$ has propositions 
%xiaowei: change from A_{T,Q} to C_{T,Q}, consistent with the definition 
$C_{T,Q}\cup \{\cbit\}\cup \{t_0, \ldots t_{N-1}\}$.  
Propositions from 
%xiaowei: change from A_{T,Q} to C_{T,Q} 
$C_{T,Q}$ are used to represent cell  elements, 
and $\cbit$ is used to represents the bits of a counter that indicates
the position of the cell being represented. In particular, 
 a cell in a configuration,  at position 
$b_{N-1}\ldots b_{0}$, in binary, and containing symbol 
%xiaowei: change from A_{T,Q} to C_{T,Q} 
$a\in C_{T,Q}$, 
will be represented by a sequence of $N+1$ states,  
the first of which satisfies proposition $a$, such that for $i = 0\ldots N-1$, 
element $i+2$ of the sequence satisfies $\cbit$ iff $b_i =1$. 
(Thus, low order bits are represented to the left in the run.) 

We take the set of states of the environment to be 
$$S= \{ s_x~|x \in C_{T,Q} \}\cup \{c_0,c_1\}\cup \{(t_i,j)~|~ i = 0  \ldots N-1, ~j\in \{0,1\}\}~. $$ 
The set of initial states of the environment is defined to be $I = \{ s_\blnk \} \cup 
%ron8: I think we need the following, else the behavior of a strategy with \alpha(t_0,0) = a_0 on states (t_0,1) will not be represented at any point.  
% \{(t_0,0), \ldots (t_{N-1},0)\}$. 
\{(t_i,j)~|~ i = 0  \ldots N-1, ~j\in \{0,1\}\}$. 
We define the assignment $\pi$ so that 
$\pi(s_a) = \{a\}$ for 
%xiaowei: $a \in A_{T,Q}$, 
$a \in C_{T,Q}$, 
$\pi(c_0) = \emptyset$, $\pi(c_1) = \{\cbit\}$
and $\pi((t_i,0)) = \{t_i\}$ and $\pi((t_i,1)) = \{t_i\}  \cup \{\cbit\}$.

We take the set of actions of the single agent to be the set $\{\mathtt{a}_0,\mathtt{a}_1\}$.
The transition relation $\trans$ is defined so that for 
%ron3: putting all these at the end
%$k\in\{0,1\}$ 
the only transitions are 
$$
\begin{array}{ll} 
 s_x \ptrans{a_k} c_i  & \\ %  \text{for $x\in C_{T,Q}$  and $i\in \{0,1\}$,} \\
 c_i \ptrans{a_k} c_j &  \\ % \text{for  $i,j\in \{0,1\}$,} \\
 c_i \ptrans{a_k} s_x & \\ % \text{for $x\in C_{T,Q}$  and $i\in \{0,1\}$,} \\  
 (t_m,j) \ptrans{a_k} (t_m,k) \\ % & \text{ for $m= 0\ldots N-1$ . }
 \end{array}$$
%ron3: + 
 for  $x\in C_{T,Q}$ and  
 %ron8: $k\in\{0,1\}$ and $i,j\in \{0,1\}$ and  $m= 0\ldots N-1$. 
$i,j,k\in \{0,1\}$ and   $m\in \{ 0\ldots N-1\}$. 
Intuitively, this forces the runs starting at state $s_\blnk$ to alternate between selecting 
a symbol from $C_{T,Q}$ and a sequence of bits $\{0,1\}$ 
for the counter. 
%ron8: 
Note that for every sequence $\rho$ in $\bot \cdot \{c_0,c_1\}^+ \cdot  (C_{T,Q} \cdot \{c_0,c_1\}^+)^\omega$, 
and for every strategy $\alpha$ for the single agent, 
there exists a run $r$ with $r_{\strat(1)} = \alpha$ and $r_e[0 \ldots] = \rho$. 
%xiaowei: The
For each 
%ron3: $i$, 
$i = 0,\ldots N-1$, 
the 
states of the form $(t_i,j)$ for $j\in \{0,1\}$ 
%xiaowei each 
form an isolated component in the 
transition relation, and are used to ensure that there is a sufficiently rich 
set of strategy choices for strategies to encode counter values.

The length of the counter sequence segments of a run generated by this transition
system can vary within the run, but we can use a formula of length 
%ron8: $O(N^2)$ 
$O(N)$
to state that these segments always 
have length $N$ wherever they appear in the run; let $\phi^N_{clock}$ be the formula  
$$ \Box (\alpha_{T,Q}  \rimp (\nxt^{N+1} \alpha_{T,Q} \land \bigwedge_{i=1\ldots N} \nxt^i \neg \alpha_{T,Q} ))$$
where we write $\alpha_{T,Q}$ for $\bigvee_{x \in C_{T,Q}} x$. 
By definition of the transition relation, this formula holds on  a run starting in state $s_\bot$ just when 
it consists of states of the form $s_x$ alternating with sequences of states of the form $c_i$
of length exactly $N$. 

The transition system generates arbitrary such sequences of states $c_i$ of length $N$, intuitively constituting
a guess for the correct counter value. 
Note that a temporal formula of length $O(N^2)$ can say that these guesses for the
counter values are correct, in that the counter values encoded along the run are $0, 1, 2,\ldots  2^N-1, 0, 1, 2,\ldots 2^N-1$ (etc). 
Specifically, this is achieved by the following formula $\phi^N_{count}$: 
$$ 
\phi_{zero}~~\land 
\Box \left ( 
\begin{array}{l} 
\alpha_{T,Q} \rimp 
 
\left(
\begin{array}{l} 
( \phi_{max}  \rimp \nxt^{N+1}(\phi_{zero})) \land \\[5pt] 
\bigwedge_{i=1\ldots N} ((\nxt c \land \ldots \nxt^{i-1} c \land \nxt^i \neg c) \rimp \\ 
~~~~~~~~~~~~~~~~~
\nxt^{N+1} ( \nxt \neg c \land \ldots  \land \nxt^{i-1} \neg c \land  \nxt^i c)  \\
%xiaowei: make it inline
~~~~~~~~~~~~~~~~~~~~~~~~~~~~~\land\bigwedge_{j=i+1\ldots N} ( (\nxt^j c) \dimp (\nxt^{j+N+1} c)))\\ 

\end{array}\right )  
\end{array}\right ) 
$$ 
where $\phi_{zero} = \bigwedge_{i=1\ldots N} \nxt^i \neg c$ and $\phi_{max} = \bigwedge_{i=1\ldots N} \nxt^i  c$. 
%ron9: 
Intuitively, the first line of the inner formula handles the steps from $2^N-1$ to $0$, and the remainder of the inner formula 
uses the fact that, in binary, $x01^i + 1 = x10^i$.
%ron8: 
(Recall that in the run, low order bits are represented to the left.) 

The following formula $\phi^w_{init}$ then says that the run is initialized with word $w=a_1\dots a_n$ 
$$ \blnk \land \nxt^{N+1} ((q_0,a_1) \land \nxt^{N+1}( a_2 \land  \nxt^{N+1} (\ldots  \nxt^{N+1} (a_n \land \nxt \, 
%ron8: ((\alpha_{T,Q} \rimp \blnk) 
((\alpha_{T,Q} \rimp (\blnk\land \neg \phi_{zero})) 
\until (\alpha_{T,Q}\land \phi_{zero})))\ldots ))$$ 
where $\blnk$ is the blank symbol.   This formula has size $O(N\cdot |w|) = O(p(n)\cdot n)$. 
%xiaowei: (
Intuitively, the formula says that the sequence of symbols $w$ is followed by a sequence of $\bot$ symbols 
%ron3: up to 
until 
%ron8: simplify by above change to formula 
%some time that the counter has value $0$: this could be an occurrence of counter $0$ that lies further than the start of the second configuration. 
%However, adding the next formula we discuss will prevent incorrect blanks in the second configuration and 
%force the until condition to be satisfied at the first position of the second configuration. 
the first time that the counter has value zero (this corresponds to the start of the second configuration). 

We now need a formula that expresses that whenever we consider two consecutive configurations $C, C'$
encoded in a run, $C'$ is derived from $C$ by a single step of the TM $T$. The padding 
%xiaowei2: \phi_pad is not used later? 
blanks are easily handled by the following formula $\phi_{pad}$: 
$$ 
\Box ((\alpha_{T,Q} \land ( \phi_{zero} \lor \phi_{max})) \rimp \bot )
$$  
For the remaining cell positions, we need to express that for each cell position $k=1\ldots 2^N-2$, the cell value at 
position $k$ in $C'$ is determined from the cell value at positions $k-1, k,k+1$ in $C$ according
to the transition relation encoding $\Delta$. This means that we need to be able to identify the 
corresponding positions $k$ in $C$ and $C'$. 
To capture the counter value at a given position in the run, 
we represent counter values using a strategy for the single agent, 
as follows. 

We define the observation function $O_1$ for the single agent in 
$\Env_{T,N}$, so that 
%ron3: for alignment 
observation 
%xiaowei: $O_1(t_i) = i$  
$O_1((t_i,j)) = i$ 
for 
%xiaowei: $i=0\ldots N-$. 
$i=0\ldots N-1$. 
(The values of the 
observation function on other states are not used in the encoding, and can be defined arbitrarily.)  
The number with binary representation $B= b_{N-1} \ldots b_0$ can then be represented 
by the  strategy $\alpha_B$ such that 
%ron8: $\alpha_B(i) = a_{b_i}$,
$\alpha_B(t_i,j) = a_{b_i}$,
 for $i= 0\ldots N-1$
 %ron8: +
 and  $j\in \{0,1\}$, and $\alpha_B(s) = a_0$ for all other states $s$. (Note that this strategy is uniform, and conversely, for any 
 uniform strategy $\alpha$ there exists a unique binary number $b_{N-1} \ldots b_0$ such that  
$\alpha_B(t_i,j) = a_{b_i}$,  for $i= 0\ldots N-1$ and  $j\in \{0,1\}$.) 
Comparing this representation with the encoding of numbers along runs, 
the following formula $\phi_{num}(x)$ expresses 
that the number  encoded at the present position in the run is the 
same as the number encoded in the strategy of agent $1$ in the global state denoted by
variable $x$: 
$$
\alpha_{T,Q} \land \bigwedge_{i=0\ldots N-1} (\nxt^{i+1} c) \dimp \neg D_\emptyset \neg ( \lid{\sigma(1)}{x} \land t_i \land \nxt c) 
$$ 
Note that, by the definition of the transition system, 
%ron8: not needed
%and the observation function,  
the value of $\nxt c$ at a state where $t_i$ holds encodes
whether the strategy selects $a_0$ or $a_1$ on observation $i=0\ldots N-1$. 
%ron8: + 
Note also that since all states of the form $(t_i,j)$ are initial, for every strategy $\alpha$,  the value of $\alpha(t_i,j)$
is represented in this way at some point of some run. 
We may now check that the transitions of the TM are correctly computed along the 
run by means of the following formula $\phi_{trans}$: 
$$
  \Box \left(
  \begin{array}{l} 
  \bigwedge_{(a,b,c,d) \in \Delta} 
  (a \land \neg \phi_{max} \land \nxt^{N+1} (b \land \neg \phi_{max} \land 
  %xiaowei: \nxt^{N+1} (c)))
  \nxt^{N+1} c))
  \rimp \\
  ~~~~~~~~~~~~~~~~~~~~~~~~~
   \nxt^{N+1} \exists x \, [\phi_{num}(x) \land  \nxt \, ( (\neg \phi_{num}(x)) \until (\phi_{num}(x) \land d )]
   \end{array}
   \right)
$$
Intuitively, here $x$ captures the number encoded at the cell containing the symbol $b$, 
and the $\until$ operator is used to find the next occurrence in the run of this number. 
The occurrences of $\phi_{max}$ ensure that the three positions considered in the formula
do not span across a boundary between two configurations. 
%xiaowei2: add
In Figure~\ref{fig:EXPSPACE}, the bottom part represents a strategy 
%ron8: 
$\alpha$
of agent 1 encoded in 
some global state $x$. The behaviour of this strategy at the $t$-state runs represents a number, using 
the statement  $\lid{\sigma(1)}{x}$ in the formula $\phi_{num}$. 
%By the relations $\lid{\sigma(1)}{x}$, which are denoted with dashed lines in the diagram, we connect, and establish an equivalent relation between, the counter values in two blocks of $N+1$ states. 
%Intuitively, it means that the cells $b$ and $d$ are in the same position of two consecutive configurations of $TM$. 
The formula $\phi_{num}$ is used to assert that this representation of a binary number in $\alpha$ encodes 
the counter values at a position in a run. Asserting that two positions have the same counter number by this device
allows us to check the yields relation at corresponding positions in the run representation of a computation of the Turing machine.

To express that the machine accepts we just need to assert that the accepting state is reached; this 
is done by the formula 
%ron3:  $\phi_{accept} =  \Diamond q_f$ such that $q_f\equiv \bigvee_{a\in A_T}(a,q_f)$. 
$\phi_{accept} =  \Diamond  \bigvee_{a\in A_T}(a,q_f)$. 

Combining these pieces, we get that the TM accepts input $w$ 
%xiaowei: just when 
if and only if 
$$\Env_{T,N} \models (\phi^N_{clock} \land \phi^N_{count} \land  \phi^w_{init} \land  \phi_{trans}) \rimp  \phi_{accept} $$
holds, i.e., when every run that correctly encodes a computation of the machine is accepting. 

\end{proof}

Combining Theorem~\ref{mcESLupper} and Theorem~\ref{mcESLlower} we
obtain the following characterization of the complexity of ESL model checking.

\begin{cor} 
Let 
%ron3: $\Strats(\Env)$  be a PTIME presented class of strategies for environments $\Env$. 
$\Strats$  be an EXPSPACE presented class of strategies for environments, containing $\Strats^{\unif,\det}$. 
The complexity of deciding, given an environment $\Env$, an \ESL\ formula $\phi$ 
and a context $\Cont$ for the free variables in an $\ESL$  formula $\phi$ relative
to $\Env$ and $\Strats(\Env)$,  whether $\Cont, \Env,\Strat(\Env)\models \phi$, is EXPSPACE-complete. 
\end{cor}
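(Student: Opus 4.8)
The plan is to derive the corollary directly by combining the two preceding theorems, observing that they together pin down the complexity of the model checking problem from above and below.

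First I would invoke Theorem~\ref{mcESLupper}: since any PTIME presented class of strategies is, in particular, presentable in EXPSPACE, that theorem immediately gives the EXPSPACE upper bound for deciding $\Cont, \Env, \Strat(\Env)\models \phi$ for arbitrary $\ESL$ formulas $\phi$, environments $\Env$, and contexts $\Cont$. No further work is needed here beyond noting the inclusion of complexity classes $\mathrm{PTIME}\subseteq\mathrm{EXPSPACE}$.

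Next I would invoke Theorem~\ref{mcESLlower} for the matching lower bound. The key point is that the hardness instance constructed there is already of the special form covered by the corollary: it uses the class $\Strat^\unif$ of locally uniform strategies, which is PTIME presented (as remarked in the paragraph introducing parameterized classes), it uses a single agent, and it uses $\ESL$ sentences $\phi_w$, so the context $\Cont$ is irrelevant. Hence the general model checking problem in the statement of the corollary specializes to an EXPSPACE-hard problem, so it is itself EXPSPACE-hard. The two bounds together yield EXPSPACE-completeness.

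There is no real obstacle here — the corollary is a routine packaging of the two theorems. The only point requiring a moment's care is the compatibility of the parameterized-class hypothesis: Theorem~\ref{mcESLupper} is stated for classes presentable in EXPSPACE while Theorem~\ref{mcESLlower} fixes the specific class $\Strat^\unif$; I would simply remark that $\Strat^\unif$ is PTIME presented (hence a fortiori EXPSPACE presented), so both theorems apply to it, and the PTIME-presented hypothesis of the corollary is exactly what makes the upper bound applicable while still being general enough to cover the hard instance.
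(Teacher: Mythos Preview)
Your proposal is correct and matches the paper's approach exactly: the paper simply states that the corollary follows by ``combining Theorem~\ref{mcESLupper} and Theorem~\ref{mcESLlower}'', and your write-up spells out precisely that combination, including the observation that PTIME presentability implies EXPSPACE presentability and that $\Strat^\unif$ is PTIME presented.
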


The high  complexity for \ESL\  model checking motivates the consideration of 
fragments that have lower model checking complexity. We demonstrate
two orthogonal fragments for which the complexity of model checking is in a lower complexity class.  
One is the fragment $\ESL^-$, where we allow the operators $\existsg{x} \phi$ and $\lid{i}{x}$, 
but restrict the use of the temporal operators to be those of the branching-time temporal logic CTL. 
In this case, we have the following result: 

\begin{theorem} \label{mcESLminus} 
Let $\Strat$ be a PSPACE-presented class of strategies. 
The problem of deciding, given an environment $\Env$, a formula $\phi$ of $\ESL^-$, and a context $\Cont$ for the free variables of $\phi$ relative to $\Env$ and $\Strat(\Env)$, whether
$\Cont, \Env, \Strat(\Env) \models \phi$, is  in  PSPACE. 
\end{theorem}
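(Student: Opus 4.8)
The plan is to reduce $\ESL^-$ model checking to model checking of $\CTLsK$ over a polynomial-size epistemic transition system, reusing the machinery of the proof of Theorem~\ref{mcESLupper}, but avoiding the exponential blowup there by observing that in the $\ESL^-$ fragment we can interleave the elimination of quantifiers with the recursive model-checking procedure, working ``on the fly'' rather than first expanding the formula. Concretely, I would work directly with the strategy-space system and label states of a suitable transition system by the truth values of subformulas, processing the formula bottom-up. The key structural fact is that each proposition $\lid{i}{x}$, under a context $\Cont$, is a property of the current state alone (it holds iff the $i$-component of the current global state equals $\Cont(x)_i$), and similarly the ``strategic'' agents $\strat(i)$ contribute to the current global state only a strategy, which is a polynomial-size object in the size of $\Env$ when presented as a table $S \to \powerset{\Acts_i}\setminus\{\emptyset\}$. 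Thus a global state of $\I(\Env,\Strats)$ is of size polynomial in $|\Env|$, even though there are exponentially many of them; this is exactly the regime in which PSPACE algorithms can enumerate states one at a time.

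The main steps, in order, are as follows. First, fix the environment $\Env$, the class $\Strats$, the $\ESL^-$ formula $\phi$, and the context $\Cont$. I would give a recursive procedure $\mathrm{Check}(\psi, g, \Cont')$ that decides whether $\Cont', \I(\Env,\Strats), (r,0) \models \psi$ for some/all runs $r$ with $r(0) = g$, defined by structural recursion on $\psi$: for atomic propositions and $\lid{i}{x}$, read off the answer from $g$ and $\Cont'$; for boolean connectives, recurse; for $\existsg{x}\psi'$, loop over all candidate global states $g'$ of $\I(\Env,\Strats)$ — that is, over all states $s\in S$, all observation tuples consistent with $s$, and all joint strategies $\alpha\in\Strats(\Env)$ (using the $\PSPACE$-presentation of $\Strats$ to filter) — and for each call $\mathrm{Check}(\psi', g, \Cont'[g'/x])$; for the CTL-style temporal/epistemic operators $A\nxt$, $\neg A\neg \nxt$, $A(\psi_1\until\psi_2)$, $\neg A\neg(\psi_1\until\psi_2)$, $D_G$, $C_G$, reduce to a reachability or fixpoint computation over the (implicitly represented) transition graph of $\I(\Env,\Strats)$, where the needed subformula labels are themselves recomputed on demand by recursive calls to $\mathrm{Check}$. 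Second, I would argue the space bound: the recursion depth is bounded by $|\phi|$, each stack frame stores a current global state, a context (a bounded number of global states, one per free variable of $\phi$), plus $\mathrm{polylog}$ bookkeeping for the enumeration counters and the reachability/fixpoint search; since global states have polynomial size and contexts hold only $|\SVar_\phi|$ of them, each frame is polynomial, and $|\phi|$ frames give a polynomial total — so the whole procedure runs in $\PSPACE$. (The reachability and CTL-fixpoint subroutines over an exponential-size but polynomially-representable graph are the standard $\PSPACE$ — indeed $\NL$-relative-to-the-labelling — graph searches, reusing the fact noted after Theorem~\ref{mcESLupper} that $\CTLsK$ model checking over the observational semantics is in $\PSPACE$, here specialized to the $\CTL$ temporal patterns.)

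The step I expect to be the main obstacle is the treatment of the common-knowledge operator $C_G$ and, more subtly, the interaction of $\existsg{x}$ quantification with the on-the-fly recomputation of subformula labels. For $C_G$ one must search the transitive closure of $\bigcup_{i\in G}\sim_i$ over the exponentially large point set; this is fine in $\PSPACE$ as a reachability problem provided that testing the edge relation — i.e. testing $g \sim_i g'$ and that $g,g'$ both lie on runs of $\I(\Env,\Strats)$ and satisfy the relevant subformula — can be done within the budget, which it can, since $\sim_i$ is just equality of the $i$-component and run-membership of a global state is checkable in $\PSPACE$ (check $r_e \in$ reachable-from-$I$ under some $\Strats$-strategy, local states consistent with observations, strategy in $\Strats(\Env)$). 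The genuinely delicate accounting is that each $\existsg{x}$ may be nested inside temporal/epistemic operators and vice versa, so the recursive calls are not a simple bottom-up labelling pass but a tree whose leaves are re-evaluated many times; I would handle this by being careful that we never materialize the set of satisfying states, only ever testing one state at a time and recomputing, so that time may blow up exponentially but space stays polynomial. Finally I would remark that the restriction to $\CTL$-style temporal operators is exactly what makes this work: full $\CTLs$-style path formulas would require an automaton-theoretic treatment whose product with the exponential state space is not obviously $\PSPACE$, which is why $\ESL^-$, not $\ESL$, is the fragment for which this theorem holds.
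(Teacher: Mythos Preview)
Your proposal is correct and follows essentially the same approach as the paper: both exploit that satisfaction of an $\ESL^-$ formula at a point depends only on the global state $(s,\alpha)$, then give a recursive procedure on the structure of the formula that handles $\existsg{x}$ by enumerating global states, $D_G$ and $C_G$ by searching the (implicitly represented, exponential-size) indistinguishability graph with a reachability/membership check for $\Strats(\Env)$, and the CTL-style temporal combinations by reachability in the $\alpha$-restricted transition graph. The only cosmetic difference is that the paper frames the algorithm as alternating PTIME (invoking $\mathrm{APTIME}=\mathrm{PSPACE}$) with universal/existential branching, whereas you describe the equivalent deterministic enumeration with polynomial-depth recursion; the paper also makes the binary-search argument for the $C_G$ path explicit, which is the standard APTIME implementation of the reachability you invoke.
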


\begin{proof} 
We observe that the following fact follows straightforwardly from the semantics for formulas $\phi$ of $\ESL^-$: 
for a context $\Cont$ for the free variables of $\phi$ relative to $\Env$ and $\Strat(\Env)$, 
and for two points $(r,n)$ and $(r',n')$ of $\I(\Env, \Strat(\Env))$
with $r(n)= r'(n')$, we have that 
$\Cont, \I(\Env, \Strat(\Env)), (r,n) \models \phi$ iff 
$\Cont, \I(\Env, \Strat(\Env)), (r',n') \models \phi$. 
That is, satisfaction of a formula relative to a context at a point depends only on the 
global state at the point, and not on other details of the run containing the point. 
For a global  state $(s,\alpha)$ of $ \I(\Env, \Strat(\Env))$, 
define the boolean $SAT(\Cont, \Env, \Strat, (s,\alpha), \phi)$ 
to be TRUE just when $\Cont, \I(\Env, \Strat(\Env)), (r,n) \models \phi$ holds for 
some point $(r,n)$ of $\I(\Env, \Strat(\Env))$ with $r(n) = (s,\alpha)$. 
By the above observation, we have that 
%xiaowei: $\Cont, \I(\Env, \Strat(\Env)), (r,n) \models \phi$ iff 
$\Cont, \Env, \Strat(\Env) \models \phi$ iff
 $SAT(\Cont, \Env, \Strat, (s,\alpha), \phi)$ holds for all initial states $s$ of $\Env$ 
 and all strategies $\alpha\in \Strat(\Env)$. 
 Since we may check these conditions one at a time, strategies $\alpha$ can be represented in space 
 linear in 
 %xiaowei: $|Env|$
 $|\Env|$,  and deciding $\alpha\in \Strat(\Env)$ is in PSPACE, 
 it suffices to show that $SAT(\Cont, \Env, \Strat, (s,\alpha), \phi)$ is 
 decidable in PSPACE. 
 
 We proceed by describing an APTIME algorithm for $SAT(\Cont, \Env, \Strat, (s,\alpha), \phi)$, 
 and using the fact that APTIME = PSPACE \cite{alternation}. The algorithm operates
 recursively, with the following cases: 
 \be 
 \item If $\phi = p$, for $p \in \Prop$, then return TRUE if $p \in \pi(s)$, else return FALSE. 
  \item If $\phi =\lid{i}{x}$,  then return TRUE if $(s,\alpha)_i = \Cont(x)_i$, else return FALSE. 
  \item If $\phi =\phi_1 \land \phi_2$,  then  universally call $SAT(\Cont, \Env, \Strat, (s,\alpha), \phi_1)$
  and \\
  %ron3:\\ for overflow  
  $SAT(\Cont, \Env, \Strat, (s,\alpha), \phi_2)$.
  \item If $\phi =\neg \phi_1$,  then  return the complement of $SAT(\Cont, \Env, \Strat, (s,\alpha), \phi_1)$. 
  \item If $\phi = A\,\nxt \phi_1$ then universally choose a state $t$ such that $s \ptrans{a}t$ for some %xiaowei: $a\in  \alpha(t)$, 
  %ron3: this is already in the def of \ptrans 
  %$a\in  \alpha(s)$,
  for some joint action $a$, 
  and call $SAT(\Cont, \Env, \Strat, (t,\alpha), \phi_1)$. 
  The other temporal operators from CTL are handled similarly. (In the case of operators using $\until$, we need
  to run a search for a path through the set of states of $\Env$ generated by the strategy $\alpha$, but this is easily handled in APTIME.) 
  \item If $\phi =D_G \phi_1$,  then  
  universally choose a global state 
  $(t,\beta)$ such that $(s,\alpha) \sim^D_G (t,\beta)$ 
  and universally 
  \be 
  \item decide 
  %xiaowei: if 
  whether $\beta \in \Strat(\Env)$, and 
   \item call $\mathit{REACH}(t,\beta)$, and   
  \item  call  $SAT(\Cont, \Env, \Strat, (t,\beta), \phi_1)$. 
  \ee 
(Here $\mathit{REACH}(t,\beta)$ decides whether 
state $t$ is reachable in $\Env$ from some initial state when the agents
run the joint strategy $\beta$;  this is trivially in PSPACE.
Deciding $\beta \in \Strat(\Env)$ is in PSPACE by the assumption that $\Strat$ is PSPACE-presented.)  
\item 
 If $\phi =C_G \phi_1$,  then    universally guess a global state 
  $(t,\beta)$ and universally do the following: 
  \be 
  \item Decide 
  %xiaowei: add 
   whether 
   $(s,\alpha) \sim^C_G (t,\beta)$  using an existentially branching binary search for a  path of length at most 
  $|S|\times |\Strat(\Env)|$.  
  For all states $(u,\gamma)$ on this path it should 
  %xiaowei: add 
  be 
  verified that 
  $\mathit{REACH}(u, \gamma)$  and that $\gamma \in \Strat(\Env)$. 
  The maximal length of the path is in the worst case exponential in $|\Env|$, but the binary search 
  can handle this in APTIME. 
     \item  call  $SAT(\Cont, \Env, \Strat, (t,\beta), \phi_1)$. 
  \ee 
\item If $\phi = \exists x(\phi_1)$, then 
existentially guess a global state $(t,\beta)$, and universally
 \be 
  \item decide if $\beta \in \Strat(\Env)$, and 
   \item call $\mathit{REACH}(t,\beta)$, and   
  \item  call  $SAT(\Cont[(t,\beta)/x], \Env, \Strat, (s,\alpha), \phi_1)$. 
  \ee 
 \ee 
A straightforward argument based on the semantics of the logic shows that the above correctly 
computes SAT. 

We remark that a more efficient procedure  for checking that $(s,\alpha) \sim^C_G (t,\beta)$  is possible in the 
typical case where 
%ron3: $\Strats$ 
$\Strats(E)$ 
is a cartesian product of sets of strategies for each of the agents. 
In this case, if there exists  a witness chain then there is one of length at most $|S|$. 
Let $G=G_1\cup \strat(G_2)$ such that $G_1,G_2\subseteq \Ags$. The number of steps 
through the relation $\cup_{i\in G} \sim_i$ 
required to witness $(s,\alpha) \sim^C_G (t,\beta)$
depends on the sets $G_1,G_2$ as follows: 
\be
\item If $G_1= G_2 = \emptyset$ then we must have $(s,\alpha)=(t,\beta)$ and a chain of length 0 suffices. 
\item If $G_1$ is nonempty and $G_2 = \emptyset$ then we must have $s ~(\cup_{i\in G_1} \sim_i)^* ~t$, 
but $\beta$ can be arbitrary, and this component can be changed in any step. A path of length 
$|S|$ suffices in this case. 
\item If $G_1 = \emptyset$ and $G_2 = \{i\}$ is a singleton, then 
we must have $\alpha_i = \beta_i$, but $s$ and $t$ can be arbitrary. 
A path of length one suffices in this case. 

\item If $|G_1|\geq 1$, say $i \in G_1$, and $G_2 = \{j\}$ is a singleton, then 
$(\cup_{i\in G} \sim_i)^*$ is the universal relation and a path of length 2 suffices. 
In particular, for any $(s,\alpha),(t,\beta)$ we have 
$(s, \alpha) \sim_i (s,\beta) \sim_{\strat(j)} (t, \beta)$. 

\item If $|G_2 |\geq 2$ then $(\cup_{i\in G} \sim_i)^*$ is the universal relation and a path of length 2 suffices. 
In particular, for any $(s,\alpha),(t,\beta)$ and 
%ron3: 
distinct 
$i,j\in G_2$, 
%xiaowei: we have 
%xiaowei:  $(s, \alpha) \sim_{\strat(i)} (s,\alpha') \sim_{\strat(j)} (t, \beta)$, 
%xiaowei: where $\alpha'_i = \alpha_i$ and $\alpha'_k = \beta_k$ for all $k\neq i$. 
 there exists  $\alpha'$ such that $\alpha'_i = \alpha_i$
 %ron3: 
 and
 $\alpha'_k = \beta_k$ for all 
 %ron3: 
 $k \in \Ags$ with 
 $k\neq i$, and $(s, \alpha) \sim_{\strat(i)} (s,\alpha') \sim_{\strat(j)} (t, \beta)$. 

\ee
\end{proof} 

The following result shows that the PSPACE upper bound from this result is tight, already
for formulas that use strategy 
%ron9: agents 
indices
in the CTLK operators, but make no direct uses of 
the constructs $\exists x$ and $\lid{i}{x}$. 

\begin{theorem} \label{mcCTLKlower}
The problem of deciding, given an environment $\Env$ for two agents and a
formula $\phi$ of $\CTLK(
%ron8: \Ags
%xiaowei9: \Agse 
\Ags
\cup \strat(\Ags), \Prop)$, 
whether
$\Env, \Strat^{\unif, \detstrat}(\Env)\models \phi$ is PSPACE hard.   
\end{theorem}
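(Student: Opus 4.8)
The plan is to prove PSPACE-hardness by a reduction from the truth problem for quantified Boolean formulas. Given $\Phi = Q_1 x_1\, Q_2 x_2 \cdots Q_n x_n.\, \psi$ with each $Q_k\in\{\exists,\forall\}$ and $\psi$ quantifier-free, I would construct, in polynomial time, an environment $\Env_\Phi$ for two agents $1,2$ and a sentence $\phi_\Phi$ of $\CTLK(\Ags\cup\strat(\Ags),\Prop)$ such that $\Env_\Phi,\Strat^{\unif,\detstrat}(\Env_\Phi)\models\phi_\Phi$ iff $\Phi$ is true. The environment has a ``variable thread'' of states $w_1,\dots,w_n$ with $O_1(w_j)=O_2(w_j)=j$, so that a uniform deterministic strategy for agent $i$ is exactly a truth assignment to $x_1,\dots,x_n$; out of $w_j$ a joint action $(a,b)$ leads to a state that records, via propositions, that agent $1$ played $a$ and agent $2$ played $b$, and then moves on to $w_{j+1}$. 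A distinguished home state $h$ is made an initial state from which every joint action leads to $w_1$, so that $(h,(\alpha_1,\alpha_2))$ is a point of $\I(\Env_\Phi,\Strat^{\unif,\detstrat})$ for every pair of uniform deterministic strategies $\alpha_1,\alpha_2$. Since the thread is deterministic, $A\nxt$ and $\neg A\neg\nxt$ coincide along it, and short ($O(n^2)$-size) CTL gadgets walking the thread from $h$ express both ``the strategies of agents $1$ and $2$ agree on $x_1,\dots,x_{k-1}$'' and ``$\psi$ holds of the assignment currently encoded by agent $1$''; everything else about $\Env_\Phi$ and about evaluating $\psi$ is routine CTL.

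The sentence $\phi_\Phi$ simulates the evaluation game for $\Phi$, fixing the assignment one bit per ``stage'', and the crux is to get an unbounded number of quantifier alternations from only the two nameable strategies $\strat(1),\strat(2)$. The device is a ping-pong: entering stage $k$ the run is at $h$, agent $1$'s strategy is the assignment built so far (correct on $x_1,\dots,x_{k-1}$), and agent $2$'s strategy is the previous such assignment. Stage $k$ first performs a copy step, $\neg D_{\strat(1)}\neg(\,\mathit{atH}\wedge\mathit{eq}\wedge(\cdots)\,)$, which keeps agent $1$'s strategy, re-selects agent $2$'s and forces it equal to agent $1$'s (so the committed prefix is now held by agent $2$) and returns to $h$; it then performs a choice step re-selecting agent $1$'s strategy while keeping agent $2$'s fixed and requiring agreement with it on the first $k-1$ positions — namely $\neg D_{\strat(2)}\neg(\,\mathit{atH}\wedge\mathit{agree}_{k-1}\wedge(\cdots)\,)$ if $Q_k=\exists$, and $D_{\strat(2)}(\,(\mathit{atH}\wedge\mathit{agree}_{k-1})\rimp(\cdots)\,)$ if $Q_k=\forall$ — so that the newly chosen bit plays the role of $x_k$ and the garbage on positions $>k$ is harmless because stage $k+1$ overwrites it. The outermost quantifier $Q_1 x_1$ is realised directly by $\neg D_\emptyset\neg(\mathit{atH}\wedge(\cdots))$ or $D_\emptyset(\mathit{atH}\rimp(\cdots))$ (an unconstrained existential/universal over the global state). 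After stage $n$, agent $1$'s strategy is exactly the assignment produced by the game, and $\phi_\Phi$ ends by asserting the CTL gadget for $\psi$ of that assignment.

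Correctness is proved by induction on the stage. Every $\CTLK(\Ags\cup\strat(\Ags),\Prop)$ formula lies in $\ESL^-$, so its truth at a point depends only on the global state there — this is the observation made in the proof of Theorem~\ref{mcESLminus} — whence $D_{\strat(2)}$ and $\neg D_{\strat(2)}\neg$ really do quantify, universally and existentially, over agent $1$'s strategy and over the environment state (which the $\mathit{atH}$ conjuncts pin back to $h$ after each epistemic jump), and likewise for the copy step and for $D_\emptyset$. Combined with the fact that $\Env_\Phi$'s move-recording successors let CTL read a strategy off as an assignment and check prefix agreement, one obtains that $\phi_\Phi$ holds at all initial global states iff $\Phi$ is true, with $|\Env_\Phi|$ and $|\phi_\Phi|$ polynomial in $|\Phi|$. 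The main obstacle is precisely the point just finessed: with only two strategies available to name, an unbounded number of alternations is possible only if the two agents take turns holding the growing committed prefix and redundantly re-encode it, the home state $h$ serving to re-synchronise the run after every $D$-jump; a lesser technical nuisance is arranging that ``a strategy is an assignment'', ``agreement on a prefix'', and ``$\psi$ of an assignment'' are all expressible in plain CTL over the recorded moves. Finally the matching upper bound is immediate, since $\CTLK(\Ags\cup\strat(\Ags),\Prop)$ is a fragment of $\ESL^-$ and $\Strat^{\unif,\detstrat}$ is PTIME-presented, so Theorem~\ref{mcESLminus} already puts the problem in PSPACE; the bound obtained here is therefore tight.
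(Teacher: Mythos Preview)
Your proposal is correct and follows essentially the same approach as the paper: a reduction from QBF in which a uniform deterministic strategy encodes a truth assignment, and the two strategic names $\strat(1),\strat(2)$ are used in a ping-pong with prefix-agreement checks to simulate the quantifier alternation. The paper's version is a little leaner --- it assumes without loss of generality a strictly alternating $(\exists\forall)^{n/2}$ prefix, which lets each $D_{\{\strat(i)\}}$/$\neg D_{\{\strat(i)\}}\neg$ step simultaneously copy the committed prefix and make the next choice (so your separate copy step is absorbed), and its value and agreement gadgets use $K_{\{\strat(i)\}}$ and $D_{\{\strat(1),\strat(2)\}}$ directly rather than pinning to a home state and walking the thread with CTL --- but the core idea and the construction are the same.
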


\begin{proof} 
%xiaowei: By reduction
We proceed by a reduction 
from the satisfiability of Quantified Boolean Formulae (QBF). 
An instance of QBF is a formula  $\phi$  of form 
$$ Q_1x_1...Q_nx_n(\gamma)$$
where $Q_1,...,Q_n\in \{\exists,\forall\}$ and $\gamma$ is a formula of propositional logic over 
propositions $x_1,...,x_n$. 
The QBF problem is to decide, given a QBF instance  $\phi$,  whether it is true. 
We construct an environment $\Env_\phi$ and a formula $\phi^*$ of 
$\CTLK$ using strategic 
%ron9: agents 
indices
$\strat(i)$ 
such that the QBF formula $\phi$ is true iff we have $\Env_\phi, \Strat^{\unif, \detstrat}(\Env_\phi) \models \phi^*$. 

Given the QBF  formula $\phi$, we construct  the environment $\Env_\phi =  \langle S, I, \{\Acts_i\}_{i\in \Ags}, \trans, \{O_i\}_{i\in \Ags}, \pi\rangle$
 for $2$ agents $\Ags = \{1, 2\}$ and propositions $\Prop = \{ p_0,\ldots, p_n, q_1,
 %xiaowei: a_2
 q_2\}$  as follows.
\be 
\item The set of states $S = \{ s_0 \} \cup \{ s_{t,j,k} ~|~ t \in \{1\ldots n\}, j,k\in \{0,1\}\}$. 
\item The set of initial states is $I = \{ s_0\}$. 
\item The actions of agent $i$ are $A_i = \{0,1\}$, for each $i \in \Ags$. 
\item The transition relation is defined
%xiaowei: , 
to consist of the following transitions, where $j,j',k,k' \in \{0,1\}$
$$
\begin{array}{cl} 
s_0 \ptrans{(j',k')} s_{1,j',k'} \\ 
s_{t,j,k}  \ptrans{(j',k')} s_{t+1,j',k'} &  \text{~~~for $t= 1 \ldots n-1$} \\
s_{n,j,k}  \ptrans{(j',k')} s_{n,j,k}~.
\end{array} 
$$ 
\item Observations are defined so that  $O_i(s_0) =0$ and  $O_i(s_{t,j,k}) = t$.  
\item 
The assignment $\pi$ is defined by $\pi(s_0) = \{p_0\}$, 
%xiaowei: $\pi(s{t,j,k}) 
%ron9: 
and 
$$\pi(s_{t,j,k})= \{p_t \}\cup \{q_1 ~|~j=1\}\cup  \{q_2~|~k=1\}$$
%ron9: 
for $t = 1\ldots n$.

\ee 
Intuitively, the model sets up $n+1$ moments of time 
%xiaowei: $t= 0\ldots, n+1$. 
$t= 0,\ldots, n$, 
%ron8: 
with $s_0$ the only possible state at time $0$ and $s_{t,j,k}$ for $j,k\in \{0,1\}$ the possible
states at times  $t= 1,\ldots, n$. 
Both agents observe only the value of the moment of time, so that for each agent, a 
strategy selects an action $0$ or $1$ at each moment of time. 
We may therefore encode an assignment to the proposition variables $x_1 \ldots x_n$ 
by the actions chosen by an 
%ron8: agent 
agent's strategy 
at times $0, \ldots n-1$. The action chosen by each agent at  time $t\in \{0\ldots n-1\}$ is 
recorded in the indices of the state at time $t+1$, i.e. if the state at time $t+1$ 
is $s_{t+1,j,k}$ then agent 1 chose action $j$ at time $t$, 
and agent 2 chose action $k$. 

We work with two agents, each of whose strategies is able to 
encode an assignment, in order to alternate between
the two encodings. At each step, one of the strategies is 
assumed to encode an assignment to the variables $x_1, \ldots x_{m}$. 
This strategy is fixed, and we universally or existentially guess the 
other strategy in order to obtain a new value for the variable $x_{m+1}$. 
We then check that the guess has maintained the values of the existing
assignment to $x_1, \ldots x_{m}$ by comparing the two strategies. 

More precisely, let $val_i(x_j)$ be the formula 
%ron3: multiple substitutions EX -> E\nxt from here,  
$K_{\{\strat(i)\}}(p_{j-1} \rimp 
%ron3: EX 
E\nxt
(q_i))$ for $i=1,2$ and $j = 1\ldots n$. 
This states that at the current state, the strategy of agent
$i$ selects action $1$ at time $j-1$, so it encodes an assignment making $x_j$ true. 
For $m = 1\ldots n$, let $agree(m)$ be the formula 
$$ \bigwedge_{j=1\ldots m} D_{\{\strat(1), \strat(2)\}}(p_{j-1} \rimp (
%ron3: EX 
E\nxt
(q_1) \dimp 
%EX 
E\nxt 
(q_2))) $$
This says that the assignments encoded by the strategies of the two agents agree on the values
of the variables $x_1 \ldots, x_m$.  
Assuming, without loss of generality, that $n$ is even, 
and that the quantifier sequence in $\phi$ is $(\exists\, \forall)^{n/2}$, 
given the QBF formula $\phi$, 
define the formula $\phi^*$ to be 
%$$ 
%\begin{array}{llllll}
\begin{tabbing} 
$\neg D_\emptyset\neg ( D_{\{\strat(1)\}}($\=$agree(1) \rimp$ \\ 
\> $\neg D_{\{\strat(2)\}}\neg ($\=$agree(2) \land$  \\ 
\>\>$D_{\{\strat(1)\}}($\=$agree(3) \rimp$ \\ 
%ron3: for alignment 
%\>\> \> $\neg D_{\{\strat(2)\}}\neg ( agree(4)\land  \ldots  $\=\\ 
\>\> \> $\neg D_{\{\strat(2)\}}\neg ( agree(4)\land $\= $ \ldots  $ \\ 
\>\>\> \> $\vdots$ \\
\>\>\>\> $D_{\{\strat(1)\}}( agree(m-1) \rimp \gamma^{+})\ldots )$
%\end{array}$$ 
\end{tabbing}
where $\gamma^+$ is the formula obtained by replacing each 
occurrence of a variable $x_j$ in $\gamma$ by the formula $val_2(x_j)$. 
Intuitively, the first operator $\neg D_\emptyset\neg  $
existentially chooses a value for variable $x_1$, encoded in $\strat(1)$, the
next operator 
%ron3: $D_{\strat(1)}$ 
$D_{\{\strat(1)\}}$ 
remembers this strategy while
encoding a universal choice of value for variable $x_2$ in $\strat(2)$, and
the formula $agree(1)$ checks that the existing choice for $x_1$ 
is preserved in $\strat(2)$. Continued alternation between the
two strategies adds universal or existential choices 
for variable values while preserving previous choices. 
It can then be shown that the QBF formula $\phi$ is true 
iff $\Env_\phi, \Strat^{\unif, \detstrat}\models \phi^*$. 
\end{proof} 

%ron2: fixed 2 references below 
Combining Theorem~\ref{mcESLminus} and Theorem~\ref{mcCTLKlower}, we obtain the following: 

\begin{cor} 
Let $\Strat$ be a PSPACE-presented class of strategies. 
The problem of deciding if $\Cont, \Env, \Strat(\Env) \models \phi$,  
given an environment $\Env$, a formula $\phi$ of $\ESL^-$
and a context $\Cont$ for the free variables of $\phi$ relative to $\Env$ and $\Strat(\Env)$, 
is PSPACE complete. 
\end{cor}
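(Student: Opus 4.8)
The plan is to read the result off directly from the two bounds already established, placing $\ESL^-$ model checking between them. The upper bound requires nothing new: the hypothesis here --- that $\Strat$ is a PSPACE-presented class --- is precisely that of Theorem~\ref{mcESLminus}, which already yields membership of the problem in PSPACE.

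For the matching lower bound I would appeal to Theorem~\ref{mcCTLKlower}, checking only that its hard instances form a special case of the problem in the present statement. First, $\CTLK(\Ags\cup\strat(\Ags),\Prop)$ is syntactically a fragment of $\ESL^-(\Ags,\Prop,\SVar)$: by definition $\ESL^-$ is $\ECTL(\Ags\cup\strat(\Ags)\cup\{e\},\Prop,\SVar)$ interpreted over $\I(\Env,\Strat)$, and $\CTLK$ with strategic agents is exactly what remains after deleting the quantifier $\existsg{x}$ and the construct $\lid{i}{x}$ while keeping the CTL-style temporal patterns $A\nxt\phi$, $A\phi_1\until\phi_2$ and their negations, together with the epistemic operators $K_i$ and $D_G$; since $\Ags\cup\strat(\Ags)\subseteq\Ags\cup\strat(\Ags)\cup\{e\}$, the agent sets also match up. In particular the formula $\phi^*$ built in the proof of Theorem~\ref{mcCTLKlower} is already an $\ESL^-$ sentence, so it has no free variables and the context $\Cont$ is irrelevant, matching the ``$\Env,\Strat\models$'' form used there. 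Second, the strategy class $\Strat^{\unif,\detstrat}$ of that reduction is PTIME-presented, hence a fortiori PSPACE-presented, so it meets the hypothesis of the present statement. Consequently the QBF reduction of Theorem~\ref{mcCTLKlower} is also a valid reduction to $\ESL^-$ model checking, and the problem is PSPACE-hard --- already for two agents, the deterministic locally-uniform strategy class, and formulas containing no quantifiers.

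Combining the two halves gives PSPACE-completeness, which is the claim. I do not expect any genuine obstacle: all the mathematical content lies in Theorems~\ref{mcESLminus} and~\ref{mcCTLKlower}, and the only step needing more than a phrase is the inclusion $\CTLK(\Ags\cup\strat(\Ags),\Prop)\subseteq\ESL^-$ together with the observation that the hardness instances satisfy the corollary's hypotheses (sentences, not open formulas, and a PTIME- and hence PSPACE-presented strategy class). If one wanted to be fully careful one could also remark that $\ESL^-$ and $\CTLK$ are interpreted over the same strategy space $\I(\Env,\Strat)$ in the same way, so their semantics agree on the common sublanguage and the truth value of ``$\Env_\phi,\Strat^{\unif,\detstrat}(\Env_\phi)\models\phi^*$'' is unchanged when $\phi^*$ is reread as an $\ESL^-$ formula.
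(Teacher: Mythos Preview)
Your proposal is correct and follows exactly the approach the paper takes: the corollary is stated as an immediate combination of Theorem~\ref{mcESLminus} (upper bound) and Theorem~\ref{mcCTLKlower} (lower bound), and you have simply spelled out the routine checks---that $\CTLK(\Ags\cup\strat(\Ags),\Prop)$ sits inside $\ESL^-$, that the hardness instances are sentences, and that $\Strat^{\unif,\detstrat}$ is PSPACE-presented---that the paper leaves implicit.
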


Since PSPACE is strictly contained in EXPSPACE, this result shows a strict improvement in 
complexity as  a result of the restriction to the CTL-based fragment. 
We remark that, by a trivial generalization of the standard state labelling algorithm for model 
checking \CTL\ to handle the knowledge operators, 
the problem of model checking the logic $\CTLK(\Ags, \Prop)$ 
%xiaowei: in the systems 
in the system 
$\I(\ets)$ 
generated by an epistemic 
%xiaowei: transitions system 
transition system 
$\ets$ is in PTIME. Thus, there is a jump in complexity from \CTLK\  
as a result of the move to the strategic setting, even without the addition of the operators $\existsg{x} \phi$ and $\lid{i}{x}$.  
However, this jump is not so large as the jump to the  the full logic $\ESL$. 

An orthogonal restriction of $\ESL$ is to retain the $\CTLs$ temporal basis, i.e., to allow full use of 
LTL operators, but to 
%ron8: add only 
allow
epistemic operators and 
strategy 
%ron9: agents
indices, 
but omit use of the operators $\existsg{x} \phi$ and $\lid{i}{x}$. This gives the logic 
$\CTLsK(
%ron8: \Ags
%xiaowei9: \Agse
\Ags 
\cup \strat(\Ags), \Prop)$. For this logic we also see an improvement in the complexity of model 
checking compared to full $\ESL$, as is shown in the following result. 

\begin{theorem} \label{mcCTLsK}
Let $\Strats(E)$ be a PSPACE presented class of strategies for environments $E$.  
The complexity of deciding, given an environment $E$
%xiaowei: ,
 and a \CTLsK\ formula $\phi$ for agents 
 %ron5: $\{e\} \cup \Ags(E) \cup \strat(\Ags(E))$, 
 $\Ags(E)^+ \cup \strat(\Ags(E))$, 
whether $\Env,\Strats(\Env) \models\phi$, is PSPACE-complete. 
\end{theorem}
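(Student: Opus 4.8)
The plan is to prove the two bounds separately, obtaining the lower bound essentially for free and concentrating effort on the PSPACE upper bound.

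\medskip

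\noindent\emph{Lower bound.} Every formula of $\CTLK(\Ags\cup\strat(\Ags),\Prop)$ is also a formula of $\CTLsK(\Ags\cup\strat(\Ags)\cup\{e\},\Prop)$, and $\Strat^{\unif,\detstrat}$ is a PSPACE-presented (indeed PTIME-presented) class of strategies. Hence PSPACE-hardness is inherited directly from Theorem~\ref{mcCTLKlower}, and nothing further is needed here.

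\medskip

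\noindent\emph{Upper bound.} I would first record the analogue of the observation used in the proof of Theorem~\ref{mcESLminus}: by a routine structural induction on $\CTLsK$ formulas, satisfaction of a $\CTLsK$ state formula at a point $(r,m)$ of $\I(\Env,\Strat(\Env))$ depends only on the global state $r(m)=(s,\alpha)$. The only point requiring attention is the path quantifier $A$ together with the LTL operators $\nxt,\until$, but these look only at the future of a run, and the set of futures from a point is determined by its global state since each agent's strategy is frozen along the run. So I can define a predicate $SAT(\Env,\Strat,(s,\alpha),\phi)$ and reduce the model checking problem to testing $SAT(\Env,\Strat,(s,\alpha),\phi)$ for every initial global state $(s,\alpha)$ with $s\in I$ and $\alpha\in\Strat(\Env)$; these can be enumerated one at a time reusing space, a joint strategy is representable in polynomial space, and $\alpha\in\Strat(\Env)$ is decidable in PSPACE by hypothesis. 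It therefore suffices to decide $SAT$ in PSPACE.

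I would decide $SAT$ by structural induction on $\phi$, exhibiting for each case a PSPACE procedure that may make PSPACE subroutine calls; this composes to PSPACE since the recursion depth is bounded by $|\phi|$ and each stack frame uses polynomial space. The Boolean and atomic cases are trivial. The epistemic operators $D_G\phi_1$ and $C_G\phi_1$ are handled exactly as in the proof of Theorem~\ref{mcESLminus}, using APTIME $=$ PSPACE \cite{alternation}: universally (for $C_G$, after an existential guess) guess a target global state $(t,\beta)$ related to $(s,\alpha)$ by $\sim^D_G$ (respectively $\sim^C_G$, witnessed by a chain located with a doubling/binary search so as to cope with the exponential bound on its length), verify that $(t,\beta)$ actually occurs in $\I(\Env,\Strat(\Env))$ — i.e. $t$ is reachable from an initial state under $\beta$, and $\beta\in\Strat(\Env)$, both in PSPACE — and recurse on $\phi_1$ at $(t,\beta)$. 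The genuinely new case is a path formula $A\psi$, where $\psi$ is an LTL formula over maximal $\CTLsK$ state subformulas $\theta_1,\dots,\theta_k$: one must decide whether every run from $(s,\alpha)$ satisfies $\psi$, equivalently whether there is no run from $(s,\alpha)$ satisfying $\neg\psi$. Although $\I(\Env,\Strat(\Env))$ is of exponential size, once $\alpha$ is fixed the set of global states reachable from $(s,\alpha)$ is just $\{(t,\alpha): t\text{ reachable from }s\text{ under }\alpha\}$, so this reduces to ordinary LTL model checking over the polynomial-size Kripke structure obtained from $\Env$ by retaining only the transitions $s'\ptrans{a}t'$ with $a_j\in\alpha_j(s')$ for all $j$; this is in PSPACE by the standard automata-theoretic algorithm (as already used in the proof of Theorem~\ref{mcESLupper}), using recursive $SAT$ calls — in PSPACE by the induction hypothesis — to evaluate the atoms $\theta_i$ on demand. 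This completes the induction, so $SAT$, and hence the model checking problem, is in PSPACE.

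\medskip

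I expect the main obstacle to be precisely the path-formula case: one must resist reducing to $\CTLsK$ model checking over the exponential-size epistemic transition system $\ets$ of the proof of Theorem~\ref{mcESLupper} (which would only yield EXPSPACE), and instead exploit the fact that each fixed strategy induces only a polynomial reachable state space, so that the exponential blowup is confined to the epistemic edges, which are already being traversed by nondeterministic guessing rather than enumerated. Secondary points needing care are checking throughout that every guessed global state genuinely occurs in the interpreted system, and verifying that the composition of the several PSPACE subroutines (LTL model checking with PSPACE-computable atoms, reachability, and the $\Strat$-membership oracle) remains within PSPACE.
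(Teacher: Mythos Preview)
Your proposal is correct and the overall architecture matches the paper's: reduce to a per-global-state predicate using the ``satisfaction depends only on the global state'' observation, handle epistemic operators by alternating guesses of related reachable global states (with the binary-search trick for $C_G$), and handle path quantifiers by LTL model checking over the polynomial-size transition system obtained by fixing the current joint strategy~$\alpha$.

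There are two minor differences worth noting. For the lower bound, the paper takes the cheaper route of observing that LTL is already a sublanguage of $\CTLsK$ and that LTL model checking is PSPACE-hard; your appeal to Theorem~\ref{mcCTLKlower} also works but is unnecessary. For the upper bound, where you evaluate the maximal state-subformula atoms $\theta_i$ \emph{on demand} by recursive PSPACE calls inside the LTL procedure (relying on PSPACE being closed under PSPACE oracles and polynomial recursion depth), the paper instead uses the alternation-theoretic ``guess and verify'' device: after moving to a related global state $(t,\beta)$, it \emph{existentially} guesses a complete labelling $L:S\times\ksubf(\phi)\to\{0,1\}$, then \emph{universally} both runs LTL model checking against $L$ and verifies each entry of $L$ by a recursive call. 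Your compositional presentation is arguably cleaner; the paper's explicit labelling makes the APTIME structure of the algorithm more visible. Either way the complexity analysis is the same.
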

\begin{proof} The lower bound is straightforward from the fact that linear time temporal logic LTL is a 
sublanguage of \CTLsK, and model checking LTL is already PSPACE-hard \cite{SistlaC85}. 
For the upper bound, we describe an alternating PTIME algorithm, and invoke the fact that APTIME = PSPACE \cite{alternation}. 
We abbreviate $\I(E, \Strats(E))$ to $\I$. 

For a formula $\phi$, write $\ksubf(\phi)$ for the maximal epistemic subformulas of $\phi$, defined to be the 
set of subformulas of the form $A\psi$ or 
%xiaowei: $C_G\phi$ 
$C_G\psi$ or $D_G\psi$ for some set $G$ of basic and strategic 
%ron9: agents, 
indices, 
which are themselves not a subformula of a larger subformula of $\phi$ of one of these forms. 
Note that $A\psi$ can be taken to be epistemic because it is equivalent to $D_{\{e\}\cup \strat(\Ags)}\psi$; 
in the following we assume that $A\psi$ is written in this form. 
Also note that for epistemic formulas $\psi$, satisfaction at a point depends 
only on the global state, i.e.,  for all  points $(r,m)$ and $(r',m')$ of $\I$, we have
that if $r(m) = r'(m')$ then $\I, (r, m) \models  \psi$ iff $\I, (r', m') \models  \psi$. 
Thus, for global states $(s,\alpha)$  of 
%ron2: $\I(E, \Sigma)$, 
$\I$,
we may write $\I, (s,\alpha) \models \psi$
to mean that  $\I, (r, m) \models  \psi$ for some point $(r,m)$ with $r(m) = (s, \alpha)$. 

Define a  {\em $\phi$-labelling} of $E$ to be a mapping $L: S\times \ksubf(\phi) \rightarrow \{0,1\}$, 
giving a truth value for each maximal epistemic subformula of $\phi$. 
A $\phi$-labelling can be represented in space $|S|\times |\phi|$. 
Note that if we treat the maximal epistemic subformulas of $\phi$ as if they were atomic 
propositions, evaluated at the states of $E$ using the $\phi$-labelling $L$, 
then $\phi$ becomes an LTL formula, evaluable on any path in $E$ 
with respect to the labelling $L$. 
Verifying that all $\alpha$-paths from a state $s$ satisfy $\phi$ with respect to $L$ 
is then exactly the problem of LTL model checking, for which there exists an 
APTIME procedure $\ASAT(E,(s, \alpha), L, \phi)$
%ron3: + 
since model checking LTL is in PSPACE \cite{SistlaC85} and APTIME = PSPACE \cite{alternation}. 
For this to correspond to model checking in $\I$, 
we require that 
%ron8
the $\phi$-labelling 
$L$ gives the correct answers for the truth value of the formula
at each state $(s,\alpha)$, i.e., 
that $L(\psi) = 1$ iff $\I, (s,\alpha) \models \psi$. 
 We handle this by means of a guess and verify technique. 

%ron3: To handle the verification, we define an alternating PTIME algorithm  \\ $\KSAT(E,\Strats, (s,\alpha), \phi)$ for 
To handle the verification, an alternating PTIME algorithm  $\KSAT(E,\Strats, (s,\alpha), \phi)$  is defined, for 
$\phi$ an epistemic formula, such that $\KSAT(E,\Strats, (s,\alpha), \phi)$ returns TRUE 
iff  $\I, (s,\alpha) \models  \phi$.  The definition is recursive and uses a call to the  procedure ASAT. 
Specifically, 
$\KSAT(E,\Strats, (s,\alpha), D_G\phi)$ operates as follows: 
\be 
\item universally guess a state $t$ of $E$ and a joint strategy $\beta$ in $E$, then

\item  verify that $t$ is reachable in $E$ using joint strategy $\beta$, 
that $(s,\alpha) \sim_G (t,\beta)$, and that $\beta$ is in $\Strats(E)$, then 

\item existentially guess a $\phi$-labelling $L$ of $E$,  then 

\item universally, 
\be \item 
call $\ASAT(E,(t,\beta),L,\phi)$, and 
\item for each state $w$ and formula $\psi \in \ksubf(\phi)$, 
call $\KSAT(E,\Strats, (w,\beta), \psi)$.  
\ee
\ee
Note that  step 4(b) verifies that the 
$\phi$-labelling $L$ is correct. 

For $\KSAT(E,\Strats, (s,\alpha),  C_G\phi)$, 
the procedure is similar, except that instead of verifying that $(s,\alpha) \sim_G (t,\beta)$ in the 
second step, we need to verify that $(s,\alpha) ~(\cup_{i\in G} \sim_i)^* ~(t,\beta)$. 
This is  easily handled in APTIME by a standard recursive procedure that guesses a midpoint of the 
path 
%xiaowei: and branching 
and branches 
universally to verify the existence of the left and right halves of the chain. 
(See the proof of Theorem~\ref{mcESLminus} for some further discussion on this point.) 

To solve the model checking problem in $\I$, 
we can now apply the following alternating procedure: 
\be 
\item universally guess a global state $(s,\alpha)$ of $\I$, then 
branch existentially to the following cases: 
\be 
\item if $s$ is an initial state of $\Env$ return FALSE, else return TRUE, 
\item if $\alpha\in \Strats(\Env)$, return FALSE, else return TRUE, 
\item call $\KSAT(E, \Strat, (s,\alpha), A\phi)$. 
\ee 
\ee
Evidently, each of the alternating procedures runs in polynomial time internally, 
and the number of recursive calls is $O(|\phi|)$. It follows that the entire
computation is in APTIME = PSPACE. 
\end{proof}

It is interesting to note that, although $\CTLsK(
%ron8: \Ags
%xiaowei9: \Agse
\Ags 
\cup \strat(\Ags))$ is  significantly richer than the temporal logic LTL, 
the added expressiveness comes without an increase in complexity: model checking LTL is already PSPACE-complete \cite{SistlaC85}.

\section{Conclusion}\label{sec:concl}

We now discuss some related work and remark upon some questions for future research. 
The sections above have already made some references and 
comparisons to related work on each of the  topics that we cover. 
Beside these references, the following are also worth mentioning.

%ron6: moved from intro ...
Semantics that explicitly encode strategies in runs  have been used 
previously in the literature 
%ron4: + 
on knowledge in information flow security
\cite{HalpernOneill}; what is novel in our approach is
to develop a logic that enables explicit reference  to these strategies.

A variant of propositional dynamic logic (PDL) for describing strategy profiles 
in normal form games subject to preference relations is introduced in 
\cite{vanEijck13}. This work does not cover temporal aspects as we have done 
in this paper. Another approach based on PDL is given in 
\cite{RamSimon08}, which describes strategies by means of formulas. 

A very rich generalization of ATEL for probabilistic environments is described in \cite{Schnoor10}. 
This proposal includes variables that refer to {\em strategy choices}, and  strategic operators that may 
refer to these variables, so that statements of the form ``when coalition A runs the strategy represented
by variable S1, and coalition B runs the strategy represented by variable S2, 
and the remaining agents behave arbitrarily, then the probability that $\phi$ holds
is at least $\delta$'' can be expressed. Here a strategy choice maps each 
state, coalition and formula to a uniform imperfect recall strategy for the coalition. 
There are a number of syntactic restrictions compared to our logic. 
The epistemic operators in this approach  apply only to 
state formulas rather than path formulas (in the sense of this distinction from $\CTL^*$.) 
Moreover, the strategic variables may be quantified, but only in the prefix of the formula. 
These constraints imply  that notions such as ``agent $i$ knows that there exists a strategy by which it can achieve
$\phi$'' and  ``agent $i$ knows that it has a winning response to every strategy chosen by agent $j$'' 
cannot be naturally expressed. 

The extended temporal epistemic logic $\ETL$ we have introduced, of which our epistemic 
strategy logic $\ESL$ is an instantiation with respect to a particular semantics, uses constructs that 
resemble constructs  from {\em hybrid logic} \cite{BS98}.
Hybrid logic is an approach to the extension of modal logics that 
uses ``nominals'', i.e., propositions $p$ that hold at a 
single world. These can be  used in combination with operators
such as  $\exists p$, which marks an arbitrary world
as the unique world at which nominal $p$ holds. Our construct $\exists x$ is 
closely related to the hybrid construct  $\exists p$, but we work in a setting
that is richer in both syntax and semantics than previous works. 
There have been a few works using hybrid logic ideas in the context of epistemic logic \cite{Hansen11,Roy09a} 
but none are  concerned with temporal logic. Hybrid temporal logic has seen a larger
amount of study \cite{BozzelliL10,FranceschetRS03,FranceschetR06,SchwentickW07}, 
with variances in the semantics used for the model checking problem. 

We note that if we were to view the variable $x$ in our logic as a propositional constant, 
it would be true at a set of points in the system $\I(\Env, \Strats)$, hence not a nominal in that system.
Results in \cite{BozzelliL10}, where a hybrid linear time temporal logic
formula is checked in all paths in a given model, suggest that 
a variant of $\ESL$ in which $x$ is treated  as a nominal in $\I(\Env, \Strats)$
would have a complexity of model checking at least non-elementary, compared to 
our EXPSPACE and PSPACE complexity results. 

Our PSPACE model checking result for $\CTLK(
%ron8: \Ags
%xiaowei9: \Agse
\Ags
\cup \strat(\Ags))$ seems to be more closely related to the 
%xiaowei: a 
result in \cite{FranceschetR06} that 
model checking a logic HL$(\exists, @, F,A)$ is PSPACE-complete. 
Here $F$ is essentially a branching  time future operator and $A$ is a universal operator (similar to 
our $D_{\emptyset}$), the construct $@_p\phi$ says that $\phi$ holds at the world marked by the 
nominal $p$, and $\exists p(\phi)$ says that $\phi$ holds after marking some world by $p$. 
The semantics in this case does not unfold the model 
into either a tree or a set of linear structures before checking the formula, 
so the semantics of the hybrid existential $\exists$ is close to our idea
of quantifying over global states. Our language, however, has a richer set of 
operators, even in the temporal dimension, and introduces the strategic dimension in the 
semantics. It would be an interesting question for future work to 
consider fragments of our language to obtain 
%xiaowei: more a 
a more precise statement of the
relationship with hybrid temporal logics. 

Strategy Logic \cite{CHP10} is a (non-epistemic) generalization of  ATL for perfect information 
strategies in which strategies may be explicitly named and quantified. 
Strategy logic  has a non-elementary model checking problem. Work on identification of more efficient variants 
of quantified strategy logic includes \cite{MogaveroMV10}, who formulate a variant 
with a 2-EXPTIME-complete model checking problem. In both cases, strategies are perfect recall strategies, rather than the imperfect recall 
strategies that form the basis for our PSPACE-completeness result for model checking. 
%ron2: say this later 
%The exploration of our logic over such a richer space of strategies is an interesting
%topic for future research. 

%ron2: added this para 
Most closely related to this paper are a number of independently developed 
works that consider epistemic extensions of variants 
of strategy logic.  Belardinelli \cite{Belardinelli14} develops a logic, 
based on linear time temporal logic with epistemic operators, that adds an operator 
$\exists x_i$, the semantics of which existentially modifies the 
strategy associated to agent $i$ in the current strategy profile. 
It omits the binding operator from \cite{MogaveroMV10}, so 
provides no other way to refer to the variable $x$. 
The logic is shown to have nonelementary model checking 
complexity. This complexity is higher than the results we have presented
because the semantics for strategies allows agents to have perfect 
information and perfect recall (though the semantics for the 
knowledge operators is based on imperfect information 
and no recall), whereas we have assumed imperfect information 
and no recall for strategies.  

Another extension of strategy logic with epistemic operators has been 
independently developed by \v{C}erm\'{a}k et al \cite{CLMM2014,Cermak}. 
Their syntax and semantics differs from ours in a number of respects. 
Although the syntax appears superficially in the
form of an extension of LTL, it is more like CTL in some regards. 
The transition relation is deterministic in the 
sense that for each joint action, each state has a unique successor
when that action is performed. 
Strategies are also assumed to be deterministic (whereas we allow nondeterministic strategies.)  
This means that, like CTL, the semantics of a formula
depends only on the current global state and the current strategy profile, 
whereas for LTL it is generally the case that the future structure of the run from a given global state can vary, and 
the truth value of the formula depends on how it does so. 
Although it seems that non-determinism could be modelled, as is commonly done, through the choice of actions 
of the environment, treated as an agent, the fact that strategies are deterministic, uniform and 
memoryless means that the environment must choose the 
same alternative each time a global state occurs in a run. This means
that this standard approach to modelling of non-determinism does not work for this 
logic. The syntax of the logic moreover prevents epistemic operators
from being applied to formulas with free strategy variables, whereas we
allow fully recursive mixing of the constructs of our logic. 
Consequently, epistemic notions from our logic like $D_{\{i, \strat(i)\}}$,  expressing an agent's knowledge 
about the effects of its own strategy, which are used in 
several of our applications, do not appear to be expressible in this logic.  
Finally, the notion of ``interpreted system'' in this work, which corresponds
most closely to our notion of ``environment'', also seems less general than our notion of 
environment because it  defines the accessibility relations for the knowledge operators in a way that makes the 
environment state known to all agents. 

In another paper ~\cite{AAAI2014}, we have implemented a symbolic algorithm that handles model checking 
for the fragment $\CTLK(
%ron8: \Ags
%xiaowei9: \Agse
\Ags 
\cup \strat(\Ags))$, which, as shown above, encompasses the expressiveness of ATEL. 
Existing algorithms described in the literature for ATEL model checking~\cite{MCMASATL2006,CSS2010,BPQR2013} 
are based either on  explicit-state model checking or are only partially symbolic in that they
iterate over all strategies, explicitly represented.  Our 
experimental results in~\cite{AAAI2014} show that by comparison with the partially-symbolic 
approach, a fully-symbolic algorithm can greatly improve the performance and therefore scalability of model checking. 
The approach to model checking epistemic strategy logic implemented in \cite{CLMM2014,Cermak} 
is fully symbolic, but as already mentioned, this logic has a more limited expressive
power than ours and its semantics does not permit representation of a nondeterministic environment. 
(It does not seem that the semantics could be extended to allow nondeterminism while retaining correctness
of their algorithm.) 

Our focus on this paper has been on an observational, or imperfect recall, 
semantics for knowledge. Other semantics for knowledge are also 
%xiaowei: worth consideration, 
worth considering, 
but are left for future work. 
We note one issue in relation to the connection to ATEL that we
have established, should we consider a perfect recall version of our logic. 
ATEL operators effectively allow reference to situations in which agents switch their strategy
after some actions have already been taken, whereas in our model an agent's strategy is fixed for the 
entire run. When switching to a new strategy, there is the possibility that the given state is not 
reachable under this new strategy. 
We have handled this issue in our translation by assuming that all states are initial, so that 
the run can be reinitialized if necessary to make the desired state reachable. 
This is consistent with an imperfect recall interpretation of ATEL, but it is not
clear that this approach is available on a perfect recall interpretation. 
We leave a resolution of this issue to future work.  

\bibliographystyle{plain}
\bibliography{atl}

\end{document}